\pgfplotsset{compat=1.13}
\theoremstyle{break}
\newtheorem{theorem}{Theorem}[section]
\newtheorem{proposition}[theorem]{Proposition}
\newtheorem{corollary}[theorem]{Corollary}
\newtheorem{lemma}[theorem]{Lemma}
\newtheorem{problem}[theorem]{Problem}
\newtheorem{definition}[theorem]{Definition}
\theoremstyle{nonumberbreak}
\newtheorem{proof}{Proof}
\newcommand{\proofpart}[2]{%
  \par
  \addvspace{\medskipamount}%
  \noindent\emph{Part #1: #2}\par\nobreak
  \addvspace{\smallskipamount}%
  \@afterheading
}
\numberwithin{equation}{section}
\DeclareMathOperator{\ev}{ev}
\DeclareMathOperator{\coev}{coev}
\DeclareMathOperator{\id}{id}
\DeclareMathOperator{\op}{op}
\DeclareMathOperator{\rev}{rev}
\DeclareMathOperator{\bal}{bal}
\DeclareMathOperator*{\bigbox}{\raisebox{-0.55ex}{\ensuremath{\mathlarger{\mathlarger{\mathlarger{\boxtimes}}}}}}
\newcommand{\Hom}{\mathsf{Hom}}
\newcommand{\End}{\mathsf{End}}
\newcommand{\Fun}{\mathsf{Fun}}
\newcommand{\Vectk}{\mathsf{Vect}_{\mathds{k}}}
\newcommand{\Bimod}{\mathsf{Bimod}}
\mathchardef\mh="2D 
\newcommand{\Mod}[2][0]{%
    \IfStrEqCase{#1}{%
        {r}{\mathsf{Mod}\mh #2}%
        {l}{#2 \mh\mathsf{Mod}}%
        {0}{\mathsf{Mod}}
    }[%
    ]%
}
\newcommand{\AlgBiMod}[2]{#1 \mh\mathsf{Bimod}\mh #2}
\def\dunderline#1{\underline{\underline{#1}}}
\newcommand*{\defeq}{\mathrel{\vcenter{\baselineskip0.5ex \lineskiplimit0pt
                     \hbox{\scriptsize.}\hbox{\scriptsize.}}}%
                     =}
\newcommand*{\defeqinv}{=\mathrel{\vcenter{\baselineskip0.5ex \lineskiplimit0pt
                     \hbox{\scriptsize.}\hbox{\scriptsize.}}}%
                      }
\newcommand{\equaltext}[1]{\ensuremath{\stackrel{#1}{=}}}
\newcommand{\equaltextab}[2]{\ensuremath{\stackrel[#2]{#1}{=}}}
\newcommand*{\Bigggg}[1]{{\hbox{$\left#1\vbox to 25.5\p@{}\right.\n@space$}}}
\title{{\LARGE Permutation actions on modular tensor categories of topological multilayer phases}}
\author{{\normalsize Albert Georg Passegger\thanks{E-mail: albert.georg.passegger@univie.ac.at}}\\
{\normalsize Faculty of Mathematics, University of Vienna, Austria}}
\date{\vspace{-5ex}}
\begin{document}

\tikzset{
        every edge/.style={draw,postaction={decorate,decoration={markings,mark=at position 0.5 with {\arrow{stealth}}}}}
        ,
        ->-/.style={decoration={
          markings,
          mark=at position #1 with {\arrow{stealth}}},postaction={decorate}},
        -<-/.style={decoration={
                    markings,
                    mark=at position #1 with {\arrowreversed{stealth}}},postaction={decorate}},
        circle node/.style={circle,draw,inner sep=0pt,minimum size=0.4cm},
        }

\maketitle

\vspace{0.5cm}

\begin{abstract}
We find a non-trivial representation of the symmetric group $S_n$ on the $n$-fold Deligne product $\mathcal{C}^{\boxtimes n}$ of a modular tensor category $\mathcal{C}$ for any $n \geq 2$. This is accomplished by checking that a particular family of $\mathcal{C}^{\boxtimes n}$-bimodule categories representing adjacent transpositions satisfies the symmetric group relations with respect to the relative Deligne product. The bimodule categories are based on a permutation action of $S_2$ on $\mathcal{C}\boxtimes\mathcal{C}$ discussed by Fuchs and Schweigert in \cite{fuchs-schweigert}, for which we show that it is, in a certain sense, unique. In the context of condensed matter physics, the $S_n$-representation corresponds to the specification of permutation twist surface defects in a (2+1)-dimensional topological multilayer phase, which are relevant to topological quantum computation and could promote the explicit construction of the data of an $S_n$-gauged phase.\bigskip

\noindent\textit{Mathematics Subject Classification (MSC2010):} 16D90, 18D10, 18D35, 81T45 \medskip

\noindent\textit{Keywords:} bimodule category, modular tensor category, permutation action, topological defect, topological multilayer phase
\end{abstract}

\vspace{0.5cm}

\tableofcontents


\newpage

\section{Introduction}

Bimodule categories over braided monoidal categories are the proper categorified analogs of bimodules over commutative rings. In a braided monoidal category objects can be multiplied using a ``tensor product'', and two objects in a tensor product can be exchanged using a natural isomorphism called ``braiding'', which makes the tensor product commutative up to isomorphism. For many applications, especially in mathematical physics, one considers modular tensor categories \cite{egno,kassel,mueger-mtc,turaev-quantum-invariants}, which are braided monoidal categories with additional structure, where the braiding satisfies a non-degeneracy condition. Bimodule categories over modular tensor categories encode the necessary data to describe boundary conditions and domain walls in topological and rational conformal field theory \cite{fuchs-runkel-schweigert,fs-symmetries-defects,kitaev-kong}. An interesting class of bimodule categories in this context can be obtained from group actions, which are interpreted as symmetries in the field theory setting \cite{BBCW,ffrs-defect-lines}.\medskip

If one considers the Deligne product \cite{deligne} of multiple copies of a modular tensor category, there is a special type of group action given by permutation. For any modular tensor category $\mathcal{C}$ and $n\in\mathds{N}$, $n \geq 2$, there is an action of the symmetric group $S_n$ on the $n$-fold Deligne product $\mathcal{C}^{\boxtimes n}$ (with the same modular tensor category structure on each copy of $\mathcal{C}$) obtained by permuting the factors. This defines a representation of $S_n$ as braided automorphisms of $\mathcal{C}^{\boxtimes n}$. In general, braided automorphisms are in bijection with invertible $\mathcal{C}^{\boxtimes n}$-bimodule categories \cite[Theorem 5.2]{eno09}. It remains to describe them explicitly.\medskip

The case $n=2$ was elaborated by Fuchs and Schweigert \cite{fuchs-schweigert}. For each element of the symmetric group $S_2$ there is a corresponding $\mathcal{C}\boxtimes\mathcal{C}$-bimodule category: $\mathcal{C}\boxtimes\mathcal{C}$ as trivial $\mathcal{C}\boxtimes\mathcal{C}$-bimodule category, and a $\mathcal{C}\boxtimes\mathcal{C}$-bimodule category $\mathcal{P}$, corresponding to the single non-trivial cycle in $S_2$, with underlying category $\mathcal{C}$.\medskip

Our main theorem (Theorem \ref{thm:rep-Sn}) extends this result. For every $n\in\mathds{N}$, $n \geq 2$, we construct a non-trivial representation of $S_n$ on $\mathcal{C}^{\boxtimes n}$ in the sense of Definition \ref{def:representation-on-cat} below. We find an invertible $\mathcal{C}^{\boxtimes n}$-bimodule category with underlying category $\mathcal{C}^{\boxtimes (n-1)}$ for every adjacent transposition of the symmetric group $S_n$. Since $S_n$ is generated by the set of adjacent transpositions satisfying certain relations (the Coxeter presentation of the symmetric group), the specification of bimodule categories representing the adjacent transpositions is sufficient for the definition of an $S_n$-representation. Starting from the two-dimensional case of \cite{fuchs-schweigert} there is an evident choice for this family of bimodule categories, for which we check that the symmetric group relations with respect to the relative Deligne product over $\mathcal{C}^{\boxtimes n}$ (cf. Subsection \ref{sec:appendix-relative-deligne}) are satisfied. For $n=2$ we recover the representation of \cite{fuchs-schweigert}. \medskip

The mathematical analysis of such permutation actions is motivated by the theory of topological phases of matter \cite{levin-wen,anyon-tqc,categorical-anyons}. A system of many particles confined to a two-dimensional plane can have local point-like excitations of its ground state, called ``quasiparticles''. If there is an energy gap between the ground state and excited states (so long-range interactions between the particles are prevented), then these quasiparticles can have ``anyonic'' statistics \cite{leinaas-myrheim,wilczek}: their state acquires an arbitrary phase factor or even a non-trivial unitary transformation under the interchange of two quasiparticles. Particles showing this symmetry behavior are called abelian or non-abelian ``anyons'', respectively. They have the property that the transformation of the state only depends on the topology of the path along which anyons are interchanged, i.e. paths which can be deformed continuously into each other lead to the same transformation. Consequently, the low-energy long-distance effective field theory of the system can be described by a topological quantum field theory. Interacting many-body systems with such characteristics are called ``topological phases.'' 

In a topological phase which supports non-abelian anyons, there is a finite number of lowest energy states. The degeneracy between the different ground states is topologically protected; only adiabatic motions of the quasiparticles (which amounts to the braiding of worldlines if the exchange process is pictured in (2+1)-dimensional spacetime with time direction drawn vertically) lead to non-trivial unitary operations on the degenerate Hilbert space of states (which amounts to acting with operators from a representation of the braid group \cite{artin-zopf}). 

The statistical properties of topological phases with anyons can be extended by adding ``twist defects'' \cite{BaJQ1,teo}. These are extrinsic point-like objects, which may be created and controlled by external fields and have long-ranged confining interactions. The state of an anyon can be transformed when the anyon encircles such a twist defect. A more general class of defects is given by ``surface defects'', which are domain walls between different topological phases \cite{BaJQ3}.\medskip

Tensor category theory is the natural language to describe such systems mathematically \cite{levin-wen,categorical-anyons}. The kinematic aspects of a (2+1)-dimensional topological phase of anyons can be expressed in terms of algebraic rules describing trajectories, worldline braiding and fusion rules of anyons, which are encoded in the mathematical data of a modular tensor category. The three-dimensional topological quantum field theory describing the topological phase is then obtained from the modular tensor category by the Reshetikhin-Turaev construction \cite{bak-kir,rt-original,turaev-quantum-invariants}. Furthermore, twist defects are represented by bimodule categories. Let $\mathcal{C}$ be a modular tensor category.

\begin{itemize}
\item A $\mathcal{C}$-$\mathcal{C}$-bimodule category (called a $\mathcal{C}$-bimodule category for short) represents a surface defect in the topological phase $\mathcal{C}$, i.e. a domain wall separating the phase $\mathcal{C}$ from itself \cite{fsv1,kitaev-kong}.
\item The ``fusion product'' of surface defects is given by the relative Deligne product $\boxtimes_\mathcal{C}$. If $\mathcal{M}_1$ and $\mathcal{M}_2$ are $\mathcal{C}$-bimodule categories, then also $\mathcal{M}_1 \boxtimes_\mathcal{C} \mathcal{M}_2$ is a $\mathcal{C}$-bimodule category. This is interpreted as the fusion of the two surface defects into a new surface defect labeled by $\mathcal{M}_1 \boxtimes_\mathcal{C} \mathcal{M}_2$.
\item As a $\mathcal{C}$-bimodule category, the category $\Fun_\mathcal{C} (\mathcal{M}_1,\mathcal{M}_2)$ of $\mathcal{C}$-bimodule functors from $\mathcal{M}_1$ to $\mathcal{M}_2$ is interpreted as the category of line defects in the plane (``surface Wilson lines'') separating a surface defect with label $\mathcal{M}_1$ from a surface defect with label $\mathcal{M}_2$ \cite{fuchs-schweigert,fsv1}.
\item $\mathcal{C}^{\boxtimes n}$ corresponds to an $n$-layer system, i.e. $n$ copies of the topological phase $\mathcal{C}$ stacked on top of each other \cite{BaJQ1} (see also \cite{BaQ,BaWen}).
\item The topological phase $\mathcal{C}^{\boxtimes n}$ has an internal symmetry by permuting the $n$ layers. ``Permutation twist surface defects'' are $\mathcal{C}^{\boxtimes n}$-bimodule categories representing such permutations. In particular, there is the transparent surface defect $\mathcal{C}^{\boxtimes n}$, as trivial $\mathcal{C}^{\boxtimes n}$-bimodule category, representing the identity permutation.
\end{itemize}

Therefore, finding a representation of $S_n$ on $\mathcal{C}^{\boxtimes n}$ in the sense of Definition \ref{def:representation-on-cat} is related to identifying surface defects which describe permutations of layers in the topological $n$-layer phase $\mathcal{C}^{\boxtimes n}$.\medskip

The study of permutation twist defects is important for topological quantum computation, an approach to quantum computation which is based on anyonic statistics \cite{kitaev,anyon-tqc,pachos,rowell16}. The properties of a system of anyons are robust under local perturbations like, for example, interactions with the surrounding, and thus the system is protected from decoherence and error. This topological fault-tolerance is a key property of topological quantum computation, where quantum information is stored in the anyon state spaces employing the ground state degeneracy, and adiabatic motions of anyons play the role of logical gates. Adding twist defects allows to enrich the set of logical gates. Similar to anyons, twist defects can also be exchanged, and they can be thought to be labeled by a symmetry, which acts on the state of an anyon when the anyon is braided around the defect \cite{BaJQ1}. For a practical quantum computer one requires to have a ``universal'' set of gates to which any unitary transformation on the state space (which are the operations on the quantum computer) can be reduced. Remarkably, introducing twist defects to a topological phase can enable universal topological quantum computation, even when the non-abelian state of the topological phase itself is non-universal, as is shown to be the case for a bilayer Ising system in \cite[Section V.]{BaJQ1}. The results of this paper might allow a better mathematical understanding of topological multilayer phases with permutation twist surface defects as systems for quantum computers which are potentially universal. \medskip

We mention that the physical interpretation suggests to view all bimodule categories as 1-morphisms in the tricategory $\Bimod$, which has fusion categories as objects and bimodule categories, bimodule functors and bimodule natural transformations as 1-, 2- and 3-morphisms, respectively \cite{schaumann} (see Theorem \ref{thm:tricat-bimod} for a summary). In particular, this viewpoint is related to the problem of ``gauging'' the $S_n$-symmetry of the topological phase $\mathcal{C}^{\boxtimes n}$. We will comment on this issue in the final Section \ref{sec:outlook}.\medskip

The outline of the paper is as follows. In Section \ref{sec:prelim} some important notions, conventions and results are collected, and the formalism of graphical calculus is briefly introduced. We then verify in Section \ref{sec:modules-deligne-two} that the $\mathcal{C}\boxtimes\mathcal{C}$-bimodule category $\mathcal{P}$ considered in \cite{bfrs,fuchs-schweigert} is, up to equivalence, the unique one among all $\mathcal{C}\boxtimes\mathcal{C}$-bimodule structures on $\mathcal{C}$ whose actions just arrange three objects under the tensor product of $\mathcal{C}$ and whose module associativity isomorphisms contain a minimal number of braidings. Then the problem of constructing a representation of $S_n$ on $\mathcal{C}^{\boxtimes n}$ is formulated in Section \ref{sec:construction}, and we state the main theorem in Theorem \ref{thm:rep-Sn}. The proof is elaborated in detail, with some background on dual bases of Hom spaces presented in Appendix \ref{appendix:algebras-graphical-notation}. A few fundamental definitions and results from the theory of module categories are summarized in Appendix \ref{appendix:module-categories}.

\section{Preliminaries}
\label{sec:prelim}

To fix conventions we recall some notions from tensor category theory. Basic knowledge of ordinary category theory \cite{maclane} is assumed.

Let $\mathds{k}$ be an algebraically closed field of zero characteristic. A category is called \textit{$\mathds{k}$-linear} if all morphism spaces are $\mathds{k}$-vector spaces and the composition is $\mathds{k}$-linear. A $\mathds{k}$-linear category is called \textit{finitely semisimple} if there are finitely many isomorphism classes of simple objects, i.e. objects $X$ such that $\End(X) \cong \mathds{k}$, and every object can be written as a direct sum of finitely many simple objects. 

For two (locally finite and abelian, cf. \cite[Chapter 1]{egno}) $\mathds{k}$-linear categories $\mathcal{C}$ and $\mathcal{D}$, the \textit{Deligne product} $\mathcal{C} \boxtimes \mathcal{D}$ is the $\mathds{k}$-linear category defined by a universal property for bifunctors from $\mathcal{C} \times \mathcal{D}$, that are $\mathds{k}$-linear and right exact in both variables (see \cite{deligne} and \cite[Section 1.11]{egno} for details). One writes $X \boxtimes Y$ for the objects of $\mathcal{C} \boxtimes \mathcal{D}$. If $\mathcal{C}$ and $\mathcal{D}$ are finitely semisimple, then also $\mathcal{C}\boxtimes\mathcal{D}$ is finitely semisimple, and if $\{X_i\}_i$ and $\{Y_j\}_j$ are sets of representatives of the isomorphism classes of simple objects in $\mathcal{C}$ and $\mathcal{D}$, respectively, then $\{X_i \boxtimes Y_j\}_{i,j}$ is a set of representatives of the isomorphism classes of simple objects in $\mathcal{C}\boxtimes\mathcal{D}$ \cite[Proposition 1.11.2]{egno}.

A \textit{monoidal category} is a tuple consisting of a category $\mathcal{C}$, a bifunctor $\otimes\colon \mathcal{C}\times\mathcal{C}\to\mathcal{C}$, called the tensor product, a unit object $\mathds{1}_{\mathcal{C}} \in\mathcal{C}$, and natural isomorphisms, which describe the weak associativity and weak unitality of the tensor product and satisfy certain coherence axioms \cite[Definition 2.2.8]{egno}. By Mac Lane's strictness theorem \cite{maclane}, every monoidal category is monoidally equivalent to a so-called ``strict'' monoidal category, for which the tensor product is strictly associative and strictly unital with respect to $\mathds{1}_{\mathcal{C}}$. As long as there is no danger of confusion we will write $\mathds{1}$ instead of $\mathds{1}_{\mathcal{C}}$ from now on.

A \textit{fusion category} \cite[Chapter 4]{egno} is a $\mathds{k}$-linear, finitely semisimple and rigid monoidal category with simple monoidal unit. ``Rigid'' means that right and left duals exist: the right dual $X^\ast$ of an object $X$ is an object together with evaluation and coevaluation morphisms $\ev_X \colon X \otimes X^\ast \to \mathds{1}$ and $\coev_X \colon \mathds{1}\to X^\ast \otimes X$ satisfying the ``snake identites''(a term which becomes clear when they translated into string diagrams as introduced below):
\begin{gather}
(\ev_X \otimes \id_{X}) \circ (\id_{X} \otimes \coev_X) = \id_{X} \label{eq:snake-identities-right} \\
(\id_{X^\ast} \otimes \ev_X) \circ (\coev_X \otimes \id_{X^\ast}) = \id_{X^\ast} \nonumber
\end{gather}
Similarly, the left dual of $X$ is an object $ \tensor[^{\ast}]{X}{}$ together with morphisms $\widetilde{\ev}_X \colon \tensor[^{\ast}]{X}{} \otimes X \to \mathds{1}$ and $\widetilde{\coev}_X \colon \mathds{1}\to X \otimes \tensor[^{\ast}]{X}{}$, such that the corresponding snake identites hold:
\begin{gather}
(\id_{X} \otimes \widetilde{\ev}_X) \circ (\widetilde{\coev}_X \otimes \id_{X}) = \id_{X} \label{eq:snake-identities-left} \\
(\widetilde{\ev}_X \otimes \id_{\tensor[^{\ast}]{X}{}}) \circ (\id_{\tensor[^{\ast}]{X}{}} \otimes \widetilde{\coev}_X) = \id_{\tensor[^{\ast}]{X}{}} \nonumber
\end{gather}
To a morphism $f \in \Hom(X,Y)$ one defines the right dual morphism $f^\ast \in \Hom(Y^\ast, X^\ast)$ by
\begin{gather}
f^\ast = (\id_{X^\ast} \otimes \ev_Y) \circ (\id_{X^\ast} \otimes f \otimes \id_{Y^\ast}) \circ (\coev_X \otimes \id_{Y^\ast}) \, \, ,
\label{eq:cat-dual-morphism-right}
\end{gather}
and analogously the left dual $\tensor[^{\ast}]{f}{} \in \Hom(\tensor[^{\ast}]{Y}{},\tensor[^{\ast}]{X}{})$ by
\begin{gather}
\tensor[^{\ast}]{f}{} = (\widetilde{\ev}_Y \otimes \id_{\tensor[^{\ast}]{X}{}}) \circ (\id_{\tensor[^{\ast}]{Y}{}} \otimes f \otimes \id_{\tensor[^{\ast}]{X}{}}) \circ (\id_{\tensor[^{\ast}]{Y}{}} \otimes \widetilde{\coev}_X) \, \, .
\label{eq:cat-dual-morphism-left}
\end{gather}
For better distinction from other duality concepts considered in this paper, dual morphisms will be called \textit{``categorical duals''}.

A \textit{braiding} on a monoidal category $\mathcal{C}$ is a natural isomorphism with components $c_{X,Y} : X \otimes Y \to Y \otimes X$ for objects $X,Y$ such that a pair of hexagon diagrams, which express compatibility with the tensor product, commutes \cite[XIII.1.1]{kassel}. A braided monoidal category is a monoidal category together with a braiding. A \textit{twist} on a braided monoidal category is a natural isomorphism with components $\theta_X \colon X \to X$ such that $\theta_{X\otimes Y} = c_{Y,X} \circ c_{X,Y} \circ (\theta_X \otimes \theta_Y)$.

A \textit{ribbon category} is a braided rigid monoidal category with selfdual twists, i.e. $(\theta_X)^\ast = \theta_{X^\ast}$ for all objects $X$. It suffices to assume e.g. right duality, because the left duality can be constructed from right duality, braiding and twist, in a way that left and right duals are equal for objects and morphisms \cite[Proposition XIV.3.5]{kassel}. This implies that every ribbon category is \textit{pivotal}: there is a monoidal natural isomorphism (the ``pivotal structure'') with components $X\to(X^\ast)^\ast$ \cite[Definition 4.7.7]{egno}. As left and right duals are identified, we choose to denote them as right duals.

Finally, a \textit{modular tensor category} is a ribbon fusion category with invertible ``S-matrix'' \cite[Chapter 3]{bak-kir} (see also \cite{mueger-mtc} for a general overview). The latter is a non-degeneracy condition for the braiding, which can equivalently be described by the property that every simple object $U$, for which $c_{X,U} \circ c_{U,X} = \id_{U\otimes X}$ holds for all objects $X$, is isomorphic to the monoidal unit \cite{bruguieres}. The Deligne product of modular tensor categories is again a modular tensor category (in particular a fusion category, by \cite[Corollary 4.6.2]{egno}). \bigskip

The goal is to construct, for any integer $n\geq 2$, a non-trivial representation of the symmetric group $S_n$ on the $n$-fold Deligne product $\mathcal{C}^{\boxtimes n}$ of a modular tensor category $\mathcal{C}$, according to the following definition.

\begin{definition}[Representation, action, global symmetry, Picard group]
\label{def:representation-on-cat}
A \textit{representation} of a group $G$ on a braided fusion category $\mathcal{B}$ is a group homomorphism
\begin{gather*}
[\varrho]\colon G \to \mathsf{Pic}(\mathcal{B}) \, \, ,
\end{gather*}
where the \textit{Picard group} $\mathsf{Pic}(\mathcal{B})$ of $\mathcal{B}$ \cite{eno09} is the group of equivalence classes of invertible $\mathcal{B}$-bimodule categories (Definition \ref{def:bimodule-cat}) for which left and right action are identical and left and right module structure are related via the braiding (Proposition \ref{prop:from-module-to-bimodule-category}), with group multiplication induced by the relative Deligne product $\boxtimes_{\mathcal{B}}$ (Definition \ref{def:rel-Deligne-product}). This is also called an \textit{action} of $G$ on $\mathcal{B}$, and the pair $(G,[\varrho])$ is called a \textit{global symmetry} of $\mathcal{B}$ \cite{BBCW}.
\end{definition}

A $\mathcal{B}$-bimodule category $\mathcal{M}$ is \textit{invertible} if there exists a $\mathcal{B}$-bimodule category $\overline{\mathcal{M}}$ with a bimodule equivalence $\mathcal{M} \boxtimes_{\mathcal{B}} \overline{\mathcal{M}} \simeq \mathcal{B}$ \cite[Definition 4.1]{eno09}. If $\mathcal{B}$ is a modular tensor category representing a topological phase, $\overline{\mathcal{M}}$ has the physical interpretation of an ``anti'' surface defect of $\mathcal{M}$, which annihilates when fused with $\mathcal{M}$. We refer to Appendix \ref{appendix:module-categories} for an overview over the most important notions from the theory of module categories. \medskip

More conceptually, the Picard group is obtained from a sub-tricategory \cite[Definition 4.1]{gurski} of the tricategory $\Bimod$ of fusion categories, bimodule categories, bimodule functors and bimodule natural transformations \cite[Theorem 3.6.1]{schaumann}:

\begin{definition}
\label{def:picard}
The sub-tricategory $\dunderline{\mathsf{Pic}} \subset \Bimod$, called the \textit{Picard 3-groupoid}, has
\begin{itemize}
\item objects: braided fusion categories,
\item 1-morphisms: invertible bimodule categories for which left and right module structure are related via the braiding (Proposition \ref{prop:from-module-to-bimodule-category}),
\item 2-morphisms: equivalences of bimodule categories,
\item 3-morphisms: isomorphisms of bimodule equivalences.
\end{itemize}
The \textit{Picard 2-groupoid} $\underline{\mathsf{Pic}}$ has
\begin{itemize}
\item objects: braided fusion categories,
\item 1-morphisms: invertible bimodule categories for which left and right module structure are related via the braiding (Proposition \ref{prop:from-module-to-bimodule-category}),
\item 2-morphisms: isomorphism classes of bimodule equivalences.
\end{itemize}
The \textit{Picard groupoid} $\mathsf{Pic}$ has
\begin{itemize}
\item objects: braided fusion categories,
\item morphisms: equivalence classes of invertible bimodule categories for which left and right module structure are related via the braiding (Proposition \ref{prop:from-module-to-bimodule-category})
\end{itemize}
(i.e. it is obtained from $\dunderline{\mathsf{Pic}}$ by forgetting 2- and 3-morphisms and identifying isomorphic 1-morphisms). 
\end{definition}
Note that for a braided fusion category $\mathcal{B}$, $\mathsf{Pic}(\mathcal{B}) \equiv \mathsf{Pic}(\mathcal{B},\mathcal{B})$ is the above defined Picard group of $\mathcal{B}$. It makes sense to call a group homomorphism $G \to \mathsf{Pic}(\mathcal{B})$ a representation of $G$ on a braided fusion category $\mathcal{B}$: there is a monoidal functor $\underline{\mathsf{Pic}}(\mathcal{B}) \to \underline{\mathsf{EqBr}}(\mathcal{B})$ obtained from the so-called ``alpha-induction functors'' \cite[Section 5.4]{eno09}. Here, $\underline{\mathsf{EqBr}}(\mathcal{B})$ is the groupoid of braided autoequivalences of $\mathcal{B}$ and their isomorphisms, which comes from a 2-groupoid $\underline{\mathsf{EqBr}}$ of braided fusion categories, braided equivalences and their isomorphisms, and is the proper categorical analog of the automorphism space of a commutative ring. If $\mathcal{B}$ is a modular tensor category, this monoidal functor can be promoted to an equivalence $\underline{\mathsf{Pic}}(\mathcal{B}) \simeq \underline{\mathsf{EqBr}}(\mathcal{B})$ of groupoids by \cite[Theorem 5.2]{eno09}, which implies the equivalence between invertible topological defects and symmetry of the topological phase described by $\mathcal{B}$.

In this paper we will specify representations on modular tensor categories in terms of left module categories, which are then interpreted as 1-morphisms in $\underline{\mathsf{Pic}}$ by Proposition \ref{prop:from-module-to-bimodule-category}. Equivalences of these bimodule categories (Lemma \ref{lem:bimodule-eq}) are fixed by stating them as equivalences of left module categories (Definition \ref{def:appendix-equivalence-modulecats}, Proposition \ref{prop:from-module-functor-to-bimodule-functor}). Representations in terms of braided autoequivalences can be obtained from our bimodule categories by employing the equivalence of \cite[Theorem 5.2]{eno09}.\medskip

Our starting point is a representation of $S_2$ on $\mathcal{C}\boxtimes\mathcal{C}$ by Fuchs and Schweigert:
\begin{theorem}[{\cite{fuchs-schweigert}}]
\label{thm:fs-thm}
Let $\mathcal{C}$ be a modular tensor category, which is assumed to be monoidally strict without loss of generality by Mac Lane's strictness theorem, and let $\mathcal{P}=(\mathcal{C},\triangleright,\psi)$ be the $\mathcal{C}\boxtimes\mathcal{C}$-bimodule category with left action functor
\begin{gather*}
\triangleright\colon (\mathcal{C}\boxtimes\mathcal{C}) \times \mathcal{C} \to \mathcal{C} \, \, ,\\
(X\boxtimes Y)\triangleright C \defeq  X \otimes Y \otimes C \, \, ,
\end{gather*}
left module associativity isomorphism
\begin{gather}
\psi_{D,D',C} \colon (D\otimes D') \triangleright C \to D \triangleright (D' \triangleright C) \, \, ,\nonumber\\
\psi_{D,D',C} = \id_X \otimes c_{X',Y} \otimes \id_{Y'\otimes C}
\label{eq:module-assoc-deligne2}
\end{gather}
for $D=X\boxtimes Y$ and $D'=X'\boxtimes Y'$ (note that $D \otimes D' = (X \otimes X')\boxtimes(Y \otimes Y')$), and identity module unit isomorphism
\begin{gather*}
l_C = \id_C \colon \mathds{1}_{\mathcal{C}\boxtimes\mathcal{C}} \triangleright C \equiv (\mathds{1}_{\mathcal{C}} \boxtimes \mathds{1}_{\mathcal{C}}) \triangleright C = \mathds{1}_{\mathcal{C}} \otimes \mathds{1}_{\mathcal{C}} \otimes C = C \to C
\end{gather*}
(by assumed strictness of the monoidal structure). This defines a bimodule category (an object in $\underline{\mathsf{Pic}}(\mathcal{C}\boxtimes\mathcal{C})$) by Proposition \ref{prop:from-module-to-bimodule-category}. Furthermore, let $\mathcal{T}\defeq \mathcal{C}\boxtimes\mathcal{C}$ be the trivial $\mathcal{C}\boxtimes\mathcal{C}$-bimodule category (with action given by the tensor product in $\mathcal{C}\boxtimes\mathcal{C}$). Then there are bimodule equivalences
\begin{gather}
\mathcal{P} \boxtimes_{\mathcal{C}\boxtimes\mathcal{C}} \mathcal{T} \simeq \mathcal{P} \simeq \mathcal{T} \boxtimes_{\mathcal{C}\boxtimes\mathcal{C}} \mathcal{P} \, \, , \label{eq:equivalences1} \\
\mathcal{P} \boxtimes_{\mathcal{C}\boxtimes\mathcal{C}} \mathcal{P} \simeq \mathcal{T} \, \, . \label{eq:equivalence2}
\end{gather}
Therefore one obtains a representation $[\varrho]\colon S_2 \to \mathsf{Pic}(\mathcal{\mathcal{C}\boxtimes\mathcal{C}})$ of the symmetric group $S_2 = \{ e,\pi \}$ on $\mathcal{C}\boxtimes\mathcal{C}$ by $[\varrho](e) \defeq [\mathcal{T}]$ and $[\varrho](\pi) \defeq [\mathcal{P}]$ (the bimodule equivalence classes of $\mathcal{T}$ and $\mathcal{P}$, respectively).
\end{theorem}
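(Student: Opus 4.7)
The plan is to verify in turn: (i) that $\mathcal{P}$ is a well-defined left $\mathcal{C}\boxtimes\mathcal{C}$-module category, so that Proposition \ref{prop:from-module-to-bimodule-category} promotes it to an object of $\underline{\mathsf{Pic}}(\mathcal{C}\boxtimes\mathcal{C})$; (ii) the three asserted bimodule equivalences; and (iii) that these suffice to produce the homomorphism $[\varrho]\colon S_2 \to \mathsf{Pic}(\mathcal{C}\boxtimes\mathcal{C})$.

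For (i), the unit axiom for $\mathcal{P}$ is trivial by the assumed strictness of $\mathcal{C}$. The substantive check is the pentagon axiom for $\psi$. Spelling out (\ref{eq:module-assoc-deligne2}) on a triple $D=X\boxtimes Y$, $D'=X'\boxtimes Y'$, $D''=X''\boxtimes Y''$ acting on $C\in\mathcal{C}$, the two composites that occur in the pentagon differ only in the order in which the elementary braidings $c_{X',Y}$, $c_{X'',Y}$ and $c_{X'',Y'}$ are performed; the identity of the two sides is then exactly the hexagon axiom of the braiding of $\mathcal{C}$. The bimodule structure follows immediately from Proposition \ref{prop:from-module-to-bimodule-category}.

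For the unit equivalences (\ref{eq:equivalences1}), the regular bimodule $\mathcal{T}$ is a two-sided unit for $\boxtimes_{\mathcal{C}\boxtimes\mathcal{C}}$ (cf. Subsection \ref{sec:appendix-relative-deligne}), and explicit equivalences are induced by the action functors $A\boxtimes_{\mathcal{C}\boxtimes\mathcal{C}}(X\boxtimes Y)\mapsto A\triangleleft(X\boxtimes Y)$ and $(X\boxtimes Y)\boxtimes_{\mathcal{C}\boxtimes\mathcal{C}}A\mapsto (X\boxtimes Y)\triangleright A$, both landing in the underlying category $\mathcal{C}=\mathcal{P}$. The real content is the squaring relation (\ref{eq:equivalence2}). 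My plan is to construct a $\mathcal{C}\boxtimes\mathcal{C}$-balanced bilinear functor $\widetilde F\colon\mathcal{P}\times\mathcal{P}\to\mathcal{T}$ sending $(A,B)\mapsto A\boxtimes B$, whose balancing isomorphism is assembled from suitable braidings of $\mathcal{C}$ that trade placement of an object $X\boxtimes Y$ of $\mathcal{C}\boxtimes\mathcal{C}$ between the first and second argument; coherence of this balancing against (\ref{eq:module-assoc-deligne2}) on both sides collapses once again to hexagon identities. By the universal property of the relative Deligne product (Definition \ref{def:rel-Deligne-product}), $\widetilde F$ descends to a bimodule functor $F\colon\mathcal{P}\boxtimes_{\mathcal{C}\boxtimes\mathcal{C}}\mathcal{P}\to\mathcal{T}$, and I would verify that it is an equivalence by exhibiting a quasi-inverse of the form $X\boxtimes Y\mapsto X\boxtimes_{\mathcal{C}\boxtimes\mathcal{C}} Y$ (with braiding-twisted unit and counit), or more cheaply by checking essential surjectivity and full faithfulness on simples, using the description of simples in a Deligne product and semisimplicity of $\mathcal{C}\boxtimes\mathcal{C}$.

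Once (i) and (ii) are in place, assembling the representation in (iii) is formal: $S_2$ has the presentation $\langle\pi\mid\pi^2=e\rangle$, so setting $[\varrho](e)\defeq[\mathcal{T}]$ and $[\varrho](\pi)\defeq[\mathcal{P}]$ defines a group homomorphism exactly because (\ref{eq:equivalences1})--(\ref{eq:equivalence2}) encode that $[\mathcal{T}]$ is the neutral element of $\mathsf{Pic}(\mathcal{C}\boxtimes\mathcal{C})$ and $[\mathcal{P}]^2=[\mathcal{T}]$. The main obstacle I expect is pinning down the balancing in $\widetilde F$ so that all braidings cancel in pairs and the induced $F$ respects both the left and right $\mathcal{C}\boxtimes\mathcal{C}$-structure simultaneously; that is the computational core of the whole Fuchs--Schweigert construction, and essentially all the nontriviality beyond the hexagon axiom of $\mathcal{C}$ lives there.
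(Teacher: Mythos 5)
Your outline of parts (i) and (iii) is fine: the pentagon for $\psi$ reduces to the hexagon axiom for the braiding of $\mathcal{C}$, and the unit equivalences~\eqref{eq:equivalences1} are instances of Proposition~\ref{prop:unitality}. The gap is in your argument for~\eqref{eq:equivalence2}. The proposed functor $\widetilde F\colon\mathcal P\times\mathcal P\to\mathcal T$, $(A,B)\mapsto A\boxtimes B$, cannot carry a $\mathcal C\boxtimes\mathcal C$-balancing. By Proposition~\ref{prop:from-module-to-bimodule-category} the right action on $\mathcal P$ is $A\triangleleft(X\boxtimes Y)=X\otimes Y\otimes A$, so a balancing would have to supply natural isomorphisms $(X\otimes Y\otimes A)\boxtimes B\cong A\boxtimes(X\otimes Y\otimes B)$ in $\mathcal C\boxtimes\mathcal C$; already for $Y=A=B=\mathds{1}$ and $X$ a nontrivial simple this would force $X\boxtimes\mathds{1}\cong\mathds{1}\boxtimes X$, which is false. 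No choice of braidings inside either $\mathcal C$-factor can repair this, because the external Deligne $\boxtimes$ keeps the two factors separate. A telling symptom is that your construction never invokes non-degeneracy of the $S$-matrix, whereas~\eqref{eq:equivalence2} genuinely depends on $\mathcal{C}$ being modular rather than merely braided with twist.

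The paper's proof (following \cite{fuchs-schweigert,bfrs}) proceeds differently: it realizes $\mathcal P\simeq\Mod[r]{A_{\mathcal P}}(\mathcal C\boxtimes\mathcal C)$ for the algebra $A_{\mathcal P}=\bigoplus_i X_i^\ast\boxtimes X_i$ via Theorem~\ref{thm:egno-mod-alg}, uses Proposition~\ref{appendix-prop-relative-deligne} to identify $\mathcal P\boxtimes_{\mathcal C\boxtimes\mathcal C}\mathcal P$ with the category of modules over $A_{\mathcal P}\otimes A_{\mathcal P}$, and then invokes the Fuchs--Schweigert result that this algebra is Morita equivalent to $\mathds{1}_{\mathcal C\boxtimes\mathcal C}$ --- that Morita equivalence is where modularity is used and where the actual content of~\eqref{eq:equivalence2} resides.
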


The equivalences in Equation \eqref{eq:equivalences1} are immediate from the weak unitality of the relative Deligne product (Proposition \ref{prop:unitality}), which is part of the structure of the tricategory $\Bimod$. The non-trivial equivalence in Equation \eqref{eq:equivalence2} is established by realizing the bimodule category $\mathcal{P}$ as the category of modules over an algebra object $A_{\mathcal{P}} \in \mathcal{C}\boxtimes\mathcal{C}$ (Definition \ref{def:appendix-algebra-object}, Theorem \ref{thm:egno-mod-alg}). Then one shows that the algebra object $A_{\mathcal{P}} \otimes A_{\mathcal{P}}$ is Morita equivalent to the monoidal unit $\mathds{1}_{\mathcal{C}\boxtimes\mathcal{C}}$ (viewed as trivial algebra object), i.e. the module categories over these algebras are equivalent. By Proposition \ref{appendix-prop-relative-deligne}, $A_{\mathcal{P}} \otimes A_{\mathcal{P}}$ is the algebra representing $\mathcal{P} \boxtimes_{\mathcal{C}\boxtimes\mathcal{C}} \mathcal{P}$ in the sense of Theorem \ref{thm:egno-mod-alg}. Therefore, the bimodule category $\mathcal{P}$ indeed represents the non-trivial permutation in $S_2$ and thus describes the permutation twist defect in the topological bilayer phase $\mathcal{C}\boxtimes\mathcal{C}$ \cite{BaJQ1}.\medskip

We want to find a representation of $S_n$ on $\mathcal{C}^{\boxtimes n}$ for any $n\geq 2$, which coincides with the representation of Fuchs and Schweigert for $n=2$. Before doing so in Section \ref{sec:construction}, we show that $\mathcal{P}$ carries, under certain restrictions and up to equivalence, the unique $\mathcal{C}\boxtimes\mathcal{C}$-bimodule category structure (with left and right module structure related via the braiding).\medskip

For these tasks and in most cases it is easier to manipulate expressions in monoidal categories when they are translated into the graphical calculus of planar string diagrams. We want to close this section by briefly introducing the notation for the diagrams in this paper. A detailed introduction to graphical calculus can be found in \cite[Chapter 2]{turaev-virelizier}, which uses the same diagrammatic conventions apart from an opposite orientation of the arrows.

Let $\mathcal{C}$ be a pivotal and braided fusion category.
\begin{itemize}
\item Diagrams are read from bottom to top.
\item An object $X$ (or the identity morphism on $X$) is depicted by an upwards oriented vertical line and its dual object by a downwards oriented line, both labeled $X$. If the rigid structure of the category is not relevant in a diagram (as e.g. in the braiding diagrams below), all lines are drawn unoriented for simplicity.
\item Morphisms are labeled nodes (or vertices) sitting on these lines.
\item Composition of morphisms is vertical concatenation of morphisms.
\item The tensor product is horizontal juxtaposition from left to right.
\item The monoidal unit object $\mathds{1}$ and the monoidal coherence isomorphisms are not drawn. This means that one treats $\mathcal{C}$ as strict monoidal, which is justified by Mac Lane's strictness theorem \cite{maclane}.
\item Coevaluation and evaluation are represented by cups and caps, i.e.
\begin{gather*}
\coev_X \, \, \equiv \; \,
\begin{tikzpicture}[baseline=(current bounding box.center)]
     \draw[looseness=2] (-0.5,0.5) edge[out=270,in=270] node[above right,yshift=0.5cm] {$\,\mathsmaller{X}$} (0.5,0.5);
\end{tikzpicture} \; \; \; \; , \; \; \; \; \ev_X \, \, \equiv \; \,
\begin{tikzpicture}[baseline=(current bounding box.center)]
     \draw[looseness=2] (-0.5,0.5) edge[out=90,in=90] node[right,xshift=0.2cm] {$\,\mathsmaller{X}$} (0.5,0.5);
\end{tikzpicture}
\end{gather*}
for $\coev_X \colon \mathds{1}\to X^\ast \otimes X$ and $\ev_X \colon X \otimes X^\ast \to \mathds{1}$ in the case of right duality, and similarly with reversed arrows for $\widetilde{\coev}_X \colon \mathds{1}\to X \otimes \tensor[^{\ast}]{X}{}$ and $\widetilde{\ev}_X \colon \tensor[^{\ast}]{X}{} \otimes X \to \mathds{1}$ in the case of left duality. The pivotal structure isomorphisms are not drawn, i.e. left duals $\tensor[^{\ast}]{X}{}$ are strictly identified with right duals $X^{\ast}$.
\item The braidings $c_{X,Y}\colon X\otimes Y \to Y \otimes X$ and $c_{Y,X}^{-1} \colon X\otimes Y \to Y \otimes X$ are represented by
\begin{gather*}
c_{X,Y} = 
\scalebox{0.7}[-0.7]{
\begin{tikzpicture}[baseline=(current bounding box.center)]
\braid[number of strands=2] (braid) a_1;
\node[yshift=0.3cm] at (braid-1-s) {\scalebox{1}[-1]{$X$}};
\node[yshift=-0.3cm] at (braid-1-e) {\scalebox{1}[-1]{$X$}};
\node[yshift=0.3cm] at (braid-2-s) {\scalebox{1}[-1]{$Y$}};
\node[yshift=-0.3cm] at (braid-2-e) {\scalebox{1}[-1]{$Y$}};
\end{tikzpicture}} \; \; \; \; , \; \; \; \; \; \; \;
c_{Y,X}^{-1} = 
\scalebox{0.7}[-0.7]{
\begin{tikzpicture}[baseline=(current bounding box.center)]
\braid[number of strands=2] (braid) a_1^{-1};
\node[yshift=0.3cm] at (braid-1-s) {\scalebox{1}[-1]{$X$}};
\node[yshift=-0.3cm] at (braid-1-e) {\scalebox{1}[-1]{$X$}};
\node[yshift=0.3cm] at (braid-2-s) {\scalebox{1}[-1]{$Y$}};
\node[yshift=-0.3cm] at (braid-2-e) {\scalebox{1}[-1]{$Y$}};
\end{tikzpicture}} \; \, \, .
\end{gather*}
\end{itemize}
Each diagram represents a morphism in the category. Accordingly, an equality sign between diagrams means equality of the associated morphisms. By \cite{joyal-street}, diagrams which can be isotopically deformed into each other represent the same morphism. In particular, the braiding $c$ satisfies the second and third Reidemeister moves \cite[X.3]{kassel}: the second Reidemeister move
\begin{gather}
\scalebox{0.7}[-0.7]{
\begin{tikzpicture}[baseline=(current bounding box.center)]
\braid[number of strands=2] (braid) a_1 a_1^{-1};
\node[yshift=0.3cm] at (braid-1-s) {\scalebox{1}[-1]{$X$}};
\node[yshift=-0.3cm] at (braid-1-e) {\scalebox{1}[-1]{$X$}};
\node[yshift=0.3cm] at (braid-2-s) {\scalebox{1}[-1]{$Y$}};
\node[yshift=-0.3cm] at (braid-2-e) {\scalebox{1}[-1]{$Y$}};
\end{tikzpicture}} \, = \!\!\!\!\!\! \!\!\!\!\!\! \!\!\!\!
\scalebox{0.7}[-0.7]{
\begin{tikzpicture}[baseline=(current bounding box.center)]
\braid[number of strands=2] (braid) a_{-1} a_{-1};
\node[yshift=0.3cm] at (braid-1-s) {\scalebox{1}[-1]{$X$}};
\node[yshift=-0.3cm] at (braid-1-e) {\scalebox{1}[-1]{$X$}};
\node[yshift=0.3cm] at (braid-2-s) {\scalebox{1}[-1]{$Y$}};
\node[yshift=-0.3cm] at (braid-2-e) {\scalebox{1}[-1]{$Y$}};
\end{tikzpicture}}
\label{eq:reidemeister2}
\end{gather}
corresponds to the definition of the inverse braiding $c^{-1}$, and the third Reidemeister move,
\begin{gather}
\scalebox{0.7}[-0.7]{
\begin{tikzpicture}[baseline=(current bounding box.center)]
\braid[number of strands=3] (braid) a_1 a_2 a_1;
\node[yshift=0.3cm] at (braid-1-s) {\scalebox{1}[-1]{$X$}};
\node[yshift=-0.3cm] at (braid-1-e) {\scalebox{1}[-1]{$X$}};
\node[yshift=0.3cm] at (braid-2-s) {\scalebox{1}[-1]{$Y$}};
\node[yshift=-0.3cm] at (braid-2-e) {\scalebox{1}[-1]{$Y$}};
\node[yshift=0.3cm] at (braid-3-s) {\scalebox{1}[-1]{$Z$}};
\node[yshift=-0.3cm] at (braid-3-e) {\scalebox{1}[-1]{$Z$}};
\end{tikzpicture}} = 
\scalebox{0.7}[-0.7]{
\begin{tikzpicture}[baseline=(current bounding box.center)]
\braid[number of strands=3] (braid) a_2 a_1 a_2;
\node[yshift=0.3cm] at (braid-1-s) {\scalebox{1}[-1]{$X$}};
\node[yshift=-0.3cm] at (braid-1-e) {\scalebox{1}[-1]{$X$}};
\node[yshift=0.3cm] at (braid-2-s) {\scalebox{1}[-1]{$Y$}};
\node[yshift=-0.3cm] at (braid-2-e) {\scalebox{1}[-1]{$Y$}};
\node[yshift=0.3cm] at (braid-3-s) {\scalebox{1}[-1]{$Z$}};
\node[yshift=-0.3cm] at (braid-3-e) {\scalebox{1}[-1]{$Z$}};
\end{tikzpicture}}
\label{eq:reidemeister3}
\end{gather}
for all $X,Y,Z\in\mathcal{C}$, follows algebraically from the hexagon axioms and is also called the Yang-Baxter equation \cite[XIII.1.2]{kassel}.

\section{Module structures over the two-fold Deligne product}
\label{sec:modules-deligne-two}

Let $\mathcal{C}$ be a braided monoidal category. Without loss of generality we assume that the monoidal structure of $\mathcal{C}$ is strict, and we tacitly assume all categories to be $\mathds{k}$-linear and finitely semisimple. As mentioned above we will only care about left module structures, however all module categories will eventually be considered as bimodule categories, which is straightforward from the standard construction of Proposition \ref{prop:from-module-to-bimodule-category}. As a consequence, all module equivalences can be extended to bimodule equivalences by Proposition \ref{prop:from-module-functor-to-bimodule-functor}. \medskip

We now want to study left module structures on $\mathcal{C}$ over the two-fold Deligne product $\mathcal{C}\boxtimes\mathcal{C}$. Since $\mathcal{C}$ is braided monoidal, the category $\mathcal{C}\boxtimes\mathcal{C}$ is also braided monoidal. Consider the left $\mathcal{C}\boxtimes\mathcal{C}$-module category $\mathcal{P} \equiv (\mathcal{C},\triangleright,\psi)$ defined in Theorem \ref{thm:fs-thm}. Its module structure is just one member of a whole family of $\mathcal{C}\boxtimes\mathcal{C}$-module structures on $\mathcal{C}$. As the module associativity isomorphisms can be viewed as elements of a braid group \cite{artin-zopf}, their classification is out of reach. In particular, one could consider associativity isomorphisms involving arbitrarily high powers of the braiding. However, for the module action $\triangleright$ from Theorem \ref{thm:fs-thm}, higher powers of the braiding in Equation \eqref{eq:module-assoc-deligne2} lead to equivalent module categories if there exists a twist isomorphism (which is the case if $\mathcal{C}$ is ribbon or even modular, as will be assumed later) \cite[Theorem 2.4]{bfrs}. This motivates to only consider module structures with associativity isomorphisms containing a minimal number of braidings from now on. Here we call the number of braidings in a braid minimal if the image of the braid under the surjection $B_n \to S_n$ \cite[Lemma X.6.10]{kassel}, where $B_n$ is the braid group and $S_n$ is the symmetric group, consists of a minimal number of transpositions. 

It only remains the freedom to define module associativity isomorphisms using the inverse braiding $c^{-1}$ instead of $c$. Concerning the action, three objects $X,Y$ and $C$ can be arranged in different ways under the tensor product, each giving a candidate for a module action $(\mathcal{C}\boxtimes\mathcal{C})\times\mathcal{C} \to \mathcal{C}$.

\begin{lemma}
\label{lem:all-module-cats}
Let $\mathcal{C}$ be a braided monoidal category. There is a family of $\mathcal{C}\boxtimes\mathcal{C}$-module categories $\mathcal{P}^{xy,\varepsilon}$ for $x,y\in\{1,2,3\}$, $x\neq y$, and $\varepsilon\in\{\pm\}$, defined by 
\begin{gather*}
\mathcal{P}^{xy,\varepsilon} \defeq (\mathcal{C},\triangleright^{xy},\psi^{xy,\varepsilon})
\end{gather*}
with the identity morphism as module unit isomorphism. The action functors $(\mathcal{C}\boxtimes\mathcal{C}) \times \mathcal{C} \to \mathcal{C}$ are uniquely defined on $\boxtimes$-factorizable objects $X \boxtimes Y \in \mathcal{C}\boxtimes\mathcal{C}$ by
\begin{gather*}
(X\boxtimes Y)\triangleright^{12} C \defeq  X \otimes Y \otimes C \, \, ,\\
(X\boxtimes Y)\triangleright^{21} C \defeq  Y \otimes X \otimes C \, \, ,\\
(X\boxtimes Y)\triangleright^{13} C \defeq  X \otimes C \otimes Y \, \, ,\\
(X\boxtimes Y)\triangleright^{31} C \defeq  Y \otimes C \otimes X \, \, ,\\
(X\boxtimes Y)\triangleright^{23} C \defeq  C \otimes X \otimes Y \, \, ,\\
(X\boxtimes Y)\triangleright^{32} C \defeq  C \otimes Y \otimes X \, \, ,
\end{gather*}
where a pair of numbers $xy$ as an index means that $X$ and $Y$ appear at the $x$-th and $y$-th position, respectively, in the triple $- \otimes - \otimes -$. For objects $D=X\boxtimes Y$ and $D'=X'\boxtimes Y'$ in $\mathcal{C}\boxtimes\mathcal{C}$ the module natural isomorphisms $\psi^{xy,\varepsilon}$ are given by
\begin{gather*}
\psi^{12,\pm}_{D,D',C} \colon (D\otimes D') \triangleright^{12} C \to D \triangleright^{12} (D' \triangleright^{12} C) \, \, ,\\
\psi^{12,\pm}_{D,D',C} \defeq \id_X \otimes c^{\pm}_{X',Y} \otimes \id_{Y'\otimes C} \, \, ,
\end{gather*}
\begin{gather*}
\psi^{21,\pm}_{D,D',C} \colon (D\otimes D') \triangleright^{21} C \to D \triangleright^{21} (D' \triangleright^{21} C) \, \, ,\\
\psi^{21,\pm}_{D,D',C} \defeq \id_Y \otimes c^{\pm}_{Y',X} \otimes \id_{X'\otimes C} \, \, ,
\end{gather*}
\begin{gather*}
\psi^{13,\pm}_{D,D',C} \colon (D\otimes D') \triangleright^{13} C \to D \triangleright^{13} (D' \triangleright^{13} C) \, \, ,\\
\psi^{13,\pm}_{D,D',C} \defeq \id_{X\otimes X'\otimes C} \otimes c^{\pm}_{Y,Y'} \, \, ,
\end{gather*}
\begin{gather*}
\psi^{31,\pm}_{D,D',C} \colon (D\otimes D') \triangleright^{31} C \to D \triangleright^{31} (D' \triangleright^{31} C) \, \, ,\\
\psi^{31,\pm}_{D,D',C} \defeq \id_{Y\otimes Y'\otimes C} \otimes c^{\pm}_{X,X'} \, \, ,
\end{gather*}
\begin{gather*}
\psi^{23,\pm}_{D,D',C} \colon (D\otimes D') \triangleright^{23} C \to D \triangleright^{23} (D' \triangleright^{23} C) \, \, ,\\
\psi^{23,\pm}_{D,D',C} \defeq \id_C \otimes ((\id_{X'} \otimes c^{\pm}_{X\otimes Y,Y'}) \circ (c^{\pm}_{X,X'} \otimes \id_{Y \otimes Y'})) \, \, ,
\end{gather*}
\begin{gather*}
\psi^{32,\pm}_{D,D',C} \colon (D\otimes D') \triangleright^{32} C \to D \triangleright^{32} (D' \triangleright^{32} C) \, \, ,\\
\psi^{32,\pm}_{D,D',C} \defeq \id_C \otimes ((\id_{Y'} \otimes c^{\pm}_{Y\otimes X,X'}) \circ (c^{\pm}_{Y,Y'} \otimes \id_{X \otimes X'})) \, \, ,
\end{gather*}
where $c^{+}_{X,Y} \defeq c_{X,Y}$ and $c^{-}_{X,Y} \defeq c_{Y,X}^{-1}$.
\end{lemma}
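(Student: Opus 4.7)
The plan is to verify that each tuple $(\mathcal{C},\triangleright^{xy},\psi^{xy,\varepsilon})$ is genuinely a left $\mathcal{C}\boxtimes\mathcal{C}$-module category. This requires three things: (i) the action functors $\triangleright^{xy}$ are well-defined on all of $\mathcal{C}\boxtimes\mathcal{C}$, not just on $\boxtimes$-factorizable objects; (ii) the pentagon axiom for module associativity holds; (iii) the triangle axiom for the unit constraint holds.

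For (i), I would invoke the universal property of the Deligne product: each prescription $\triangleright^{xy}$ specifies a bifunctor $\mathcal{C}\times\mathcal{C}\times\mathcal{C}\to\mathcal{C}$ that is $\mathds{k}$-linear and right exact in each of the first two variables (since $\otimes$ is), so it factors uniquely through a functor $(\mathcal{C}\boxtimes\mathcal{C})\times\mathcal{C}\to\mathcal{C}$. For (iii), in the strict setting the module unit isomorphism is the identity, and inserting $D = \mathds{1}\boxtimes\mathds{1}$ or $D'=\mathds{1}\boxtimes\mathds{1}$ into each $\psi^{xy,\varepsilon}$ collapses the formula to an identity by naturality of the braiding at the unit ($c_{X,\mathds{1}} = \id = c_{\mathds{1},X}$), so the triangle axiom is automatic.

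The substantive content is (ii): for each $xy\in\{12,21,13,31,23,32\}$ and each sign $\varepsilon\in\{\pm\}$, I must check
\begin{gather*}
(\id_D \triangleright^{xy}\psi^{xy,\varepsilon}_{D',D'',C})\circ\psi^{xy,\varepsilon}_{D, D'\otimes D'',C} = \psi^{xy,\varepsilon}_{D,D', D''\triangleright^{xy} C}\circ\psi^{xy,\varepsilon}_{D\otimes D',D'',C}
\end{gather*}
for $D=X_i\boxtimes Y_i$, $i=1,2,3$. Writing out both sides, the domain and codomain are tensor strings of the six objects $X_1,X_2,X_3,Y_1,Y_2,Y_3$ (with $C$ inserted in the appropriate slot), and each side is a composite of braidings. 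The identity to verify is a relation in the braid group on these strands, and it is precisely the statement that two ways of commuting $X$-strands past $Y$-strands agree. For the cases $12,21,13,31$ this follows directly from the hexagon axiom (naturality of the braiding with respect to the tensor product), applied either to $c_{X_2\otimes X_3, Y_1}$ or to $c_{X_3, Y_1\otimes Y_2}$. For the more elaborate cases $23,32$, where $\psi^{23,\varepsilon}$ and $\psi^{32,\varepsilon}$ are already composites of two braidings, the pentagon relation unfolds into a product of four braidings on each side; the required equality is then a consequence of a hexagon identity combined with naturality, or equivalently the Yang-Baxter equation \eqref{eq:reidemeister3}. The $\varepsilon=-$ variant is handled identically, using that the hexagon axioms hold for $c^{-1}$ as well (since $c^{-1}$ is also a braiding, for the reverse monoidal structure).

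The main potential obstacle is bookkeeping: the $23$ and $32$ cases have long tensor strings and two-braid associators, so writing the pentagon out in formulas is tedious and error-prone. I would therefore translate both sides into string diagrams and verify equality by isotopy, which reduces each case to a single application of Reidemeister III together with planar rearrangement of non-crossing strands. Once all six cases (with both signs) have been checked, the lemma follows.
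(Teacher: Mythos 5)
Your proposal is correct and follows exactly the approach the paper intends: the paper asserts only that ``it is straightforward to see that $\mathcal{P}^{xy,\varepsilon}$ is indeed a $\mathcal{C}\boxtimes\mathcal{C}$-module category'' and gives no details, while you carry out precisely the routine verification it has in mind --- extending the action to $\mathcal{C}\boxtimes\mathcal{C}$ by the universal property, noting the triangle axiom is automatic in the strict setting with identity unit constraint (using $c_{\mathds 1,X}=\id=c_{X,\mathds 1}$), and deriving the pentagon from the hexagon axioms, naturality of $c$, and (for the $23,32$ cases) the Yang--Baxter relation, with the $\varepsilon=-$ cases handled by the mirror braiding $c_{X,Y}^{-}=c_{Y,X}^{-1}$. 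The only nitpick is terminological: the hexagon axioms are not the same thing as naturality, and $c^{-}$ is the mirror braiding on the \emph{same} monoidal structure rather than a braiding for a reverse monoidal structure, but neither slip affects the argument.
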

Obviously these isomorphisms are natural (because the braiding is a natural isomorphism) and have minimal number of braidings. It is straightforward to see that $\mathcal{P}^{xy,\varepsilon}$ is indeed a $\mathcal{C}\boxtimes\mathcal{C}$-module category for every $x,y\in\{1,2,3\}$, $x\neq y$ and $\varepsilon\in\{\pm\}$. Note that we will not consider the situation where the module associativity isomorphisms associated to $\triangleright^{23}$ and $\triangleright^{32}$ contain mixed braidings.

Most of the time we will omit the plus sign and the index $12$ in the notation, so we write
\begin{gather*}
\psi^{xy} \defeq \psi^{xy,+} \, \, , \, \, \mathcal{P}^{xy} \defeq \mathcal{P}^{xy,+}
\end{gather*}
for all $x,y\in\{1,2,3\}$, $x\neq y$, and
\begin{gather}
\psi^{-} \defeq \psi^{12,-} \, \, , \, \, \mathcal{P}^{-} \defeq \mathcal{P}^{12,-} \, \, .
\label{eq:abbrev}
\end{gather}
From the definitions in Theorem \ref{thm:fs-thm} it follows that $\triangleright \equiv \triangleright^{12}$, $\psi \equiv \psi^{12,+}$ and $\mathcal{P} \equiv \mathcal{P}^{12,+}$.

Up to the choice of $c$ or $c^{-1}$ as braiding in the associativity isomorphisms, the following proposition shows that the module categories defined above are pairwise equivalent. The dependency on $\varepsilon$ (the choice of braiding) will be addressed afterwards.

\begin{proposition}
\label{prop:equivalent-2d}
Let $\mathcal{C}$ be a braided monoidal category. Then the left $\mathcal{C}\boxtimes\mathcal{C}$-module categories $\mathcal{P}$, $\mathcal{P}^{21,-}$, $\mathcal{P}^{13,-}$, $\mathcal{P}^{31}$, $\mathcal{P}^{23,-}$ and $\mathcal{P}^{32}$ are pairwise equivalent left $\mathcal{C}\boxtimes\mathcal{C}$-module categories.
\end{proposition}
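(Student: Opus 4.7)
Since all six module categories have the same underlying category $\mathcal{C}$, the obvious strategy is to build each equivalence on the identity functor $\id_{\mathcal{C}}$ and encode the difference between the actions and associativities purely in the module-structure isomorphism. The plan is to construct, for each of the other five module categories in the list, an explicit module equivalence $(F^{xy,\varepsilon}, s^{xy,\varepsilon})\colon \mathcal{P} \to \mathcal{P}^{xy,\varepsilon}$ with underlying functor $F^{xy,\varepsilon} = \id_{\mathcal{C}}$ and module natural isomorphism
\[
s^{xy,\varepsilon}_{X\boxtimes Y, C}\colon X\otimes Y \otimes C \;\longrightarrow\; (X\boxtimes Y)\triangleright^{xy,\varepsilon} C
\]
assembled from one or two (inverse) braidings that shuffle $X$, $Y$, $C$ into the order prescribed by $\triangleright^{xy,\varepsilon}$. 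For example, $s^{21,-}$ is a single braiding swapping $X$ past $Y$, $s^{13,-}$ braids $Y$ past $C$, $s^{31}$ carries $X$ past $Y$ and then past $C$, while $s^{23,-}$ and $s^{32}$ are compositions of two braidings moving the pair $(X,Y)$ jointly across $C$. The choice of $c$ versus $c^{-1}$ in each case is dictated by the sign $\varepsilon$ attached to the target. Pairwise equivalence of the six module categories then follows by composing these five equivalences with their quasi-inverses, which are themselves built on $\id_{\mathcal{C}}$ using inverse braidings.

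First I would check that each $s^{xy,\varepsilon}$ is a natural isomorphism, which is immediate because it is a composition of components of the natural isomorphisms $c$ and $c^{-1}$. The core step is then to verify the module-functor coherence axiom, i.e.\ the pentagon-type identity
\[
\psi^{xy,\varepsilon}_{D,D',F^{xy,\varepsilon}(C)} \circ (\id_D\triangleright^{xy,\varepsilon} s^{xy,\varepsilon}_{D', C}) \circ s^{xy,\varepsilon}_{D, D'\triangleright C} \;=\; s^{xy,\varepsilon}_{D\otimes D', C}\circ F^{xy,\varepsilon}(\psi_{D, D', C})
\]
for all $D, D' \in \mathcal{C}\boxtimes\mathcal{C}$ and $C\in\mathcal{C}$. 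I would carry this out in the graphical calculus of Section \ref{sec:prelim} by drawing both sides as braid diagrams in $\mathcal{C}$ and reducing them to a common form using naturality of the braiding and the Reidemeister II and III moves \eqref{eq:reidemeister2}, \eqref{eq:reidemeister3}. The signs $\varepsilon$ in the statement of the proposition are exactly those for which the crossings introduced by $s^{xy,\varepsilon}$ combine or cancel with the braiding appearing in $\psi^{xy,\varepsilon}$ so that both sides can be brought into the same normal form; any other sign choice would obstruct the reduction. Since $F^{xy,\varepsilon} = \id_{\mathcal{C}}$ is trivially an equivalence of $\mathds{k}$-linear categories, each pair $(F^{xy,\varepsilon}, s^{xy,\varepsilon})$ is a module equivalence as soon as the coherence has been verified.

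\textbf{Main obstacle.} The coherence check will be immediate for the one-braiding cases $\mathcal{P}^{21,-}$ and $\mathcal{P}^{13,-}$, and a short Yang-Baxter argument for $\mathcal{P}^{31}$. The most laborious verifications will be those for $\mathcal{P}^{23,-}$ and $\mathcal{P}^{32}$: here the target associativity isomorphisms $\psi^{23,\pm}$, $\psi^{32,\pm}$ defined in Lemma \ref{lem:all-module-cats} are themselves already compositions of two braidings, one of which braids a single object past a tensor product of two objects, and the corresponding $s^{23,-}$ and $s^{32}$ are also two-braiding expressions. Both sides of the coherence equation therefore carry several crossings, and one has to carefully track the signs and apply the third Reidemeister move \eqref{eq:reidemeister3} repeatedly to see the two diagrams collapse onto a common braid without invoking any structure (twist, duality, semisimplicity) beyond what is available in a general braided monoidal category.
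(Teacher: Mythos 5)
Your proposal matches the paper's proof exactly: the paper also defines five module equivalences $(\id_{\mathcal{C}},f),(\id_{\mathcal{C}},g),(\id_{\mathcal{C}},l),(\id_{\mathcal{C}},h),(\id_{\mathcal{C}},k)$ from $\mathcal{P}$ to $\mathcal{P}^{21,-},\mathcal{P}^{13,-},\mathcal{P}^{31},\mathcal{P}^{23,-},\mathcal{P}^{32}$, with the same one- or two-braiding module-functor constraints you describe, and verifies the module-functor pentagon diagrammatically via the second and third Reidemeister moves, with $\mathcal{P}^{23,-}$ and $\mathcal{P}^{32}$ requiring the longest reductions. One small caveat: the pentagon identity as you displayed it does not typecheck; with $F=\id_{\mathcal{C}}$ it should read $\psi^{xy,\varepsilon}_{D,D',C}\circ s^{xy,\varepsilon}_{D\otimes D',C}=(\id_D\triangleright^{xy,\varepsilon} s^{xy,\varepsilon}_{D',C})\circ s^{xy,\varepsilon}_{D,D'\triangleright C}\circ\psi_{D,D',C}$.
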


\begin{proof}
Let $D=X\boxtimes Y$ and $D'=X'\boxtimes Y'$. Recall that an equivalence of module categories is given by a module functor which induces an equivalence of categories (Definition \ref{def:appendix-equivalence-modulecats}). As functors, all these equivalences will be $\id_{\mathcal{C}}$. Hence we just have to find module functor constraints satisfying the pentagon and triangle axioms of Definition \ref{def:module-functor} to establish module functors and thus module equivalences between the different module categories.\medskip

\begin{itemize}
\item We show the equivalence of $\mathcal{P}$ and $\mathcal{P}^{21,-}$ by specifying a module functor $(\id_{\mathcal{C}},f)\colon \mathcal{P} \to \mathcal{P}^{21,-}$. The morphism $f$ is defined as
\begin{gather*}
f_{X\boxtimes Y,C}\colon \id_{\mathcal{C}} ((X\boxtimes Y)\triangleright C) \equiv X \otimes Y \otimes C \to (X\boxtimes Y) \triangleright^{21} \id_{\mathcal{C}} (C) \equiv Y \otimes X \otimes C \, \, ,\\
f_{X\boxtimes Y,C} = c_{X,Y} \otimes \id_C \, \, ,
\end{gather*}
which is obviously a natural isomorphism. The triangle axiom is satisfied trivially, as we chose the monoidal structure of $\mathcal{C}$ to be strict. The pentagon axiom reads
\begin{gather*}
\psi^{21,-}_{D,D',C} \circ f_{D\otimes D',C} = (\id_D \triangleright^{21} f_{D',C}) \circ f_{D,D'\triangleright C} \circ \psi_{D,D',C} \, \, ,
\end{gather*}
that is, one has to verify that
\begin{equation}
\begin{gathered}
(\id_Y \otimes c_{X,Y'}^{-1} \otimes \id_{X'\otimes C}) \circ (c_{X\otimes X',Y\otimes Y'} \otimes \id_C) = \\ = (\id_{Y\otimes X} \otimes c_{X',Y'} \otimes \id_C) \circ (c_{X,Y} \otimes \id_{X'\otimes Y'\otimes C}) \circ (\id_X \otimes c_{X',Y} \otimes \id_{Y'\otimes C})
\end{gathered}
\label{eq:check-eq-string}
\end{equation}
is true. In graphical notation (Section \ref{sec:prelim}), Equation \eqref{eq:check-eq-string} looks like this:
\begin{gather*}
\scalebox{0.7}[-0.7]{
\begin{tikzpicture}[baseline=(current bounding box.center)]
\braid[number of strands=5] (braid) a_2 a_1-a_3 a_2 a_2^{-1};
\node[yshift=0.3cm] at (braid-1-s) {\scalebox{1}[-1]{$X$}};
\node[yshift=-0.3cm] at (braid-1-e) {\scalebox{1}[-1]{$X$}};
\node[yshift=0.3cm] at (braid-2-s) {\scalebox{1}[-1]{$X'$}};
\node[yshift=-0.3cm] at (braid-2-e) {\scalebox{1}[-1]{$X'$}};
\node[yshift=0.3cm] at (braid-3-s) {\scalebox{1}[-1]{$Y$}};
\node[yshift=-0.3cm] at (braid-3-e) {\scalebox{1}[-1]{$Y$}};
\node[yshift=0.3cm] at (braid-4-s) {\scalebox{1}[-1]{$Y'$}};
\node[yshift=-0.3cm] at (braid-4-e) {\scalebox{1}[-1]{$Y'$}};
\node[yshift=0.3cm] at (braid-5-s) {\scalebox{1}[-1]{$C$}};
\node[yshift=-0.3cm] at (braid-5-e) {\scalebox{1}[-1]{$C$}};
\end{tikzpicture}} =
\scalebox{0.7}[-0.7]{
\begin{tikzpicture}[baseline=(current bounding box.center)]
\braid[number of strands=5] (braid) a_2 a_1 a_3;
\node[yshift=0.3cm] at (braid-1-s) {\scalebox{1}[-1]{$X$}};
\node[yshift=-0.3cm] at (braid-1-e) {\scalebox{1}[-1]{$X$}};
\node[yshift=0.3cm] at (braid-2-s) {\scalebox{1}[-1]{$X'$}};
\node[yshift=-0.3cm] at (braid-2-e) {\scalebox{1}[-1]{$X'$}};
\node[yshift=0.3cm] at (braid-3-s) {\scalebox{1}[-1]{$Y$}};
\node[yshift=-0.3cm] at (braid-3-e) {\scalebox{1}[-1]{$Y$}};
\node[yshift=0.3cm] at (braid-4-s) {\scalebox{1}[-1]{$Y'$}};
\node[yshift=-0.3cm] at (braid-4-e) {\scalebox{1}[-1]{$Y'$}};
\node[yshift=0.3cm] at (braid-5-s) {\scalebox{1}[-1]{$C$}};
\node[yshift=-0.3cm] at (braid-5-e) {\scalebox{1}[-1]{$C$}};
\end{tikzpicture}}
\end{gather*}
This is evidently true by the second Reidemeister move (Equation \eqref{eq:reidemeister2}) and functoriality of $\otimes$. Hence $\mathcal{P}$ and $\mathcal{P}^{21,-}$ are equivalent module categories.

\item We now show that $(\id_{\mathcal{C}},g)\colon \mathcal{P} \to \mathcal{P}^{13,-}$ with
\begin{gather*}
g_{X\boxtimes Y,C}\colon \id_{\mathcal{C}} ((X\boxtimes Y)\triangleright C) \equiv X \otimes Y \otimes C \to (X\boxtimes Y) \triangleright^{13} \id_{\mathcal{C}} (C) \equiv X \otimes C \otimes Y \, \, ,\\
g_{X\boxtimes Y,C} = \id_X \otimes c_{C,Y}^{-1}
\end{gather*}
is a module functor. Again, the module functor constraint $g$ is obviously a natural isomorphism and the triangle axiom is automatically satisfied. The pentagon axiom is
\begin{gather*}
\psi^{13,-}_{D,D',C} \circ g_{D\otimes D',C} = (\id_D \triangleright^{13} \, g_{D',C}) \circ g_{D,D'\triangleright C} \circ \psi_{D,D',C} \, \, ,
\end{gather*}
that is
\begin{gather*}
(\id_{X\otimes X' \otimes C} \otimes c_{Y',Y}^{-1}) \circ (\id_{X\otimes X'} \otimes c_{C,Y\otimes Y'}^{-1}) = \\ = (\id_{X\otimes X'} \otimes c_{C,Y'}^{-1} \otimes \id_Y) \circ (\id_X \otimes c_{X'\otimes Y' \otimes C,Y}^{-1}) \circ (\id_X \otimes c_{X',Y} \otimes \id_{Y'\otimes C}) \, \, ,
\end{gather*}
which is translated into string diagrams
\begin{gather*}
\scalebox{0.7}[-0.7]{
\begin{tikzpicture}[baseline=(current bounding box.center)]
\braid[number of strands=5] (braid) a_4^{-1} a_3^{-1} a_4^{-1};
\node[yshift=0.3cm] at (braid-1-s) {\scalebox{1}[-1]{$X$}};
\node[yshift=-0.3cm] at (braid-1-e) {\scalebox{1}[-1]{$X$}};
\node[yshift=0.3cm] at (braid-2-s) {\scalebox{1}[-1]{$X'$}};
\node[yshift=-0.3cm] at (braid-2-e) {\scalebox{1}[-1]{$X'$}};
\node[yshift=0.3cm] at (braid-3-s) {\scalebox{1}[-1]{$Y$}};
\node[yshift=-0.3cm] at (braid-3-e) {\scalebox{1}[-1]{$Y$}};
\node[yshift=0.3cm] at (braid-4-s) {\scalebox{1}[-1]{$Y'$}};
\node[yshift=-0.3cm] at (braid-4-e) {\scalebox{1}[-1]{$Y'$}};
\node[yshift=0.3cm] at (braid-5-s) {\scalebox{1}[-1]{$C$}};
\node[yshift=-0.3cm] at (braid-5-e) {\scalebox{1}[-1]{$C$}};
\end{tikzpicture}} =
\scalebox{0.7}[-0.7]{
\begin{tikzpicture}[baseline=(current bounding box.center)]
\braid[number of strands=5] (braid) a_2 a_2^{-1} a_3^{-1} a_4^{-1} a_3^{-1};
\node[yshift=0.3cm] at (braid-1-s) {\scalebox{1}[-1]{$X$}};
\node[yshift=-0.3cm] at (braid-1-e) {\scalebox{1}[-1]{$X$}};
\node[yshift=0.3cm] at (braid-2-s) {\scalebox{1}[-1]{$X'$}};
\node[yshift=-0.3cm] at (braid-2-e) {\scalebox{1}[-1]{$X'$}};
\node[yshift=0.3cm] at (braid-3-s) {\scalebox{1}[-1]{$Y$}};
\node[yshift=-0.3cm] at (braid-3-e) {\scalebox{1}[-1]{$Y$}};
\node[yshift=0.3cm] at (braid-4-s) {\scalebox{1}[-1]{$Y'$}};
\node[yshift=-0.3cm] at (braid-4-e) {\scalebox{1}[-1]{$Y'$}};
\node[yshift=0.3cm] at (braid-5-s) {\scalebox{1}[-1]{$C$}};
\node[yshift=-0.3cm] at (braid-5-e) {\scalebox{1}[-1]{$C$}};
\end{tikzpicture}} \, \, .
\end{gather*}
The validity of this equation is evident by the second and third Reidemeister move (Equation \eqref{eq:reidemeister3}). This proves the equivalence of the module categories $\mathcal{P}$ and $\mathcal{P}^{13,-}$.

\item To show the equivalence $\mathcal{P} \simeq \mathcal{P}^{31}$ we prove that $(\id_{\mathcal{C}},l)\colon \mathcal{P} \to \mathcal{P}^{31}$ with
\begin{gather*}
l_{X\boxtimes Y,C}\colon \id_{\mathcal{C}} ((X\boxtimes Y)\triangleright C) \equiv X \otimes Y \otimes C \to (X\boxtimes Y) \triangleright^{31} \id_{\mathcal{C}} (C) \equiv Y \otimes C \otimes X \, \, ,\\
l_{X\boxtimes Y,C} = c_{X,Y \otimes C}
\end{gather*}
is a module functor. The pentagon axiom is
\begin{gather*}
\psi^{31}_{D,D',C} \circ l_{D\otimes D',C} = (\id_D \triangleright^{31} \, l_{D',C}) \circ l_{D,D'\triangleright C} \circ \psi_{D,D',C} \, \, ,
\end{gather*}
i.e.
\begin{gather*}
\scalebox{0.7}[-0.7]{
\begin{tikzpicture}[baseline=(current bounding box.center)]
\braid[number of strands=5] (braid) a_2 a_1-a_3 a_2-a_4 a_3 a_4;
\node[yshift=0.3cm] at (braid-1-s) {\scalebox{1}[-1]{$X$}};
\node[yshift=-0.3cm] at (braid-1-e) {\scalebox{1}[-1]{$X$}};
\node[yshift=0.3cm] at (braid-2-s) {\scalebox{1}[-1]{$X'$}};
\node[yshift=-0.3cm] at (braid-2-e) {\scalebox{1}[-1]{$X'$}};
\node[yshift=0.3cm] at (braid-3-s) {\scalebox{1}[-1]{$Y$}};
\node[yshift=-0.3cm] at (braid-3-e) {\scalebox{1}[-1]{$Y$}};
\node[yshift=0.3cm] at (braid-4-s) {\scalebox{1}[-1]{$Y'$}};
\node[yshift=-0.3cm] at (braid-4-e) {\scalebox{1}[-1]{$Y'$}};
\node[yshift=0.3cm] at (braid-5-s) {\scalebox{1}[-1]{$C$}};
\node[yshift=-0.3cm] at (braid-5-e) {\scalebox{1}[-1]{$C$}};
\end{tikzpicture}} =
\scalebox{0.7}[-0.7]{
\begin{tikzpicture}[baseline=(current bounding box.center)]
\braid[number of strands=5] (braid) a_2 a_1 a_2 a_3 a_4 a_2 a_3;
\node[yshift=0.3cm] at (braid-1-s) {\scalebox{1}[-1]{$X$}};
\node[yshift=-0.3cm] at (braid-1-e) {\scalebox{1}[-1]{$X$}};
\node[yshift=0.3cm] at (braid-2-s) {\scalebox{1}[-1]{$X'$}};
\node[yshift=-0.3cm] at (braid-2-e) {\scalebox{1}[-1]{$X'$}};
\node[yshift=0.3cm] at (braid-3-s) {\scalebox{1}[-1]{$Y$}};
\node[yshift=-0.3cm] at (braid-3-e) {\scalebox{1}[-1]{$Y$}};
\node[yshift=0.3cm] at (braid-4-s) {\scalebox{1}[-1]{$Y'$}};
\node[yshift=-0.3cm] at (braid-4-e) {\scalebox{1}[-1]{$Y'$}};
\node[yshift=0.3cm] at (braid-5-s) {\scalebox{1}[-1]{$C$}};
\node[yshift=-0.3cm] at (braid-5-e) {\scalebox{1}[-1]{$C$}};
\end{tikzpicture}} \, \, .
\end{gather*}
This equation is obviously true by the third Reidemeister move, therefore we have an equivalence $\mathcal{P} \simeq \mathcal{P}^{31}$ of module categories.

\item Next, we verify that $(\id_{\mathcal{C}},h)\colon \mathcal{P} \to \mathcal{P}^{23,-}$ with
\begin{gather*}
h_{X\boxtimes Y,C}\colon \id_{\mathcal{C}} ((X\boxtimes Y)\triangleright C) \equiv X \otimes Y \otimes C \to (X\boxtimes Y) \triangleright^{23} \id_{\mathcal{C}} (C) \equiv C \otimes X \otimes Y \, \, ,\\
h_{X\boxtimes Y,C} = c_{C,X\otimes Y}^{-1}
\end{gather*}
is a module functor. The pentagon axiom is
\begin{gather*}
\psi^{23,-}_{D,D',C} \circ h_{D\otimes D',C} = (\id_D \triangleright^{23} h_{D',C}) \circ h_{D,D'\triangleright C} \circ \psi_{D,D',C} \, \, ,
\end{gather*}
which is diagrammatically
\begin{gather*}
\scalebox{0.7}[-0.7]{
\begin{tikzpicture}[baseline=(current bounding box.center)]
\braid[number of strands=5] (braid) a_4^{-1} a_3^{-1} a_2^{-1} a_1^{-1} a_2^{-1} a_4^{-1} a_3^{-1};
\node[yshift=0.3cm] at (braid-1-s) {\scalebox{1}[-1]{$X$}};
\node[yshift=-0.3cm] at (braid-1-e) {\scalebox{1}[-1]{$X$}};
\node[yshift=0.3cm] at (braid-2-s) {\scalebox{1}[-1]{$X'$}};
\node[yshift=-0.3cm] at (braid-2-e) {\scalebox{1}[-1]{$X'$}};
\node[yshift=0.3cm] at (braid-3-s) {\scalebox{1}[-1]{$Y$}};
\node[yshift=-0.3cm] at (braid-3-e) {\scalebox{1}[-1]{$Y$}};
\node[yshift=0.3cm] at (braid-4-s) {\scalebox{1}[-1]{$Y'$}};
\node[yshift=-0.3cm] at (braid-4-e) {\scalebox{1}[-1]{$Y'$}};
\node[yshift=0.3cm] at (braid-5-s) {\scalebox{1}[-1]{$C$}};
\node[yshift=-0.3cm] at (braid-5-e) {\scalebox{1}[-1]{$C$}};
\end{tikzpicture}} =
\scalebox{0.7}[-0.7]{
\begin{tikzpicture}[baseline=(current bounding box.center)]
\braid[number of strands=5] (braid) a_2 a_2^{-1} a_1^{-1}-a_3^{-1} a_2^{-1}-a_4^{-1} a_3^{-1} a_2^{-1} a_1^{-1};
\node[yshift=0.3cm] at (braid-1-s) {\scalebox{1}[-1]{$X$}};
\node[yshift=-0.3cm] at (braid-1-e) {\scalebox{1}[-1]{$X$}};
\node[yshift=0.3cm] at (braid-2-s) {\scalebox{1}[-1]{$X'$}};
\node[yshift=-0.3cm] at (braid-2-e) {\scalebox{1}[-1]{$X'$}};
\node[yshift=0.3cm] at (braid-3-s) {\scalebox{1}[-1]{$Y$}};
\node[yshift=-0.3cm] at (braid-3-e) {\scalebox{1}[-1]{$Y$}};
\node[yshift=0.3cm] at (braid-4-s) {\scalebox{1}[-1]{$Y'$}};
\node[yshift=-0.3cm] at (braid-4-e) {\scalebox{1}[-1]{$Y'$}};
\node[yshift=0.3cm] at (braid-5-s) {\scalebox{1}[-1]{$C$}};
\node[yshift=-0.3cm] at (braid-5-e) {\scalebox{1}[-1]{$C$}};
\end{tikzpicture}} \, \, .
\end{gather*}
Starting from the diagram on the right hand side we obtain, using the second Reidemeister move for the two lowest braidings in the first step and the third Reidemeister move and functoriality of $\otimes$ in the second step,
\begin{gather*}
\scalebox{0.7}[-0.7]{
\begin{tikzpicture}[baseline=(current bounding box.center)]
\braid[number of strands=5] (braid) a_2 a_2^{-1} a_1^{-1}-a_3^{-1} a_2^{-1}-a_4^{-1} a_3^{-1} a_2^{-1} a_1^{-1};
\node[yshift=0.3cm] at (braid-1-s) {\scalebox{1}[-1]{$X$}};
\node[yshift=-0.3cm] at (braid-1-e) {\scalebox{1}[-1]{$X$}};
\node[yshift=0.3cm] at (braid-2-s) {\scalebox{1}[-1]{$X'$}};
\node[yshift=-0.3cm] at (braid-2-e) {\scalebox{1}[-1]{$X'$}};
\node[yshift=0.3cm] at (braid-3-s) {\scalebox{1}[-1]{$Y$}};
\node[yshift=-0.3cm] at (braid-3-e) {\scalebox{1}[-1]{$Y$}};
\node[yshift=0.3cm] at (braid-4-s) {\scalebox{1}[-1]{$Y'$}};
\node[yshift=-0.3cm] at (braid-4-e) {\scalebox{1}[-1]{$Y'$}};
\node[yshift=0.3cm] at (braid-5-s) {\scalebox{1}[-1]{$C$}};
\node[yshift=-0.3cm] at (braid-5-e) {\scalebox{1}[-1]{$C$}};
\end{tikzpicture}} =
\scalebox{0.7}[-0.7]{
\begin{tikzpicture}[baseline=(current bounding box.center)]
\braid[number of strands=5] (braid) a_1^{-1}-a_3^{-1} a_2^{-1} a_4^{-1} a_3^{-1} a_2^{-1} a_1^{-1};
\node[yshift=0.3cm] at (braid-1-s) {\scalebox{1}[-1]{$X$}};
\node[yshift=-0.3cm] at (braid-1-e) {\scalebox{1}[-1]{$X$}};
\node[yshift=0.3cm] at (braid-2-s) {\scalebox{1}[-1]{$X'$}};
\node[yshift=-0.3cm] at (braid-2-e) {\scalebox{1}[-1]{$X'$}};
\node[yshift=0.3cm] at (braid-3-s) {\scalebox{1}[-1]{$Y$}};
\node[yshift=-0.3cm] at (braid-3-e) {\scalebox{1}[-1]{$Y$}};
\node[yshift=0.3cm] at (braid-4-s) {\scalebox{1}[-1]{$Y'$}};
\node[yshift=-0.3cm] at (braid-4-e) {\scalebox{1}[-1]{$Y'$}};
\node[yshift=0.3cm] at (braid-5-s) {\scalebox{1}[-1]{$C$}};
\node[yshift=-0.3cm] at (braid-5-e) {\scalebox{1}[-1]{$C$}};
\end{tikzpicture}} =
\scalebox{0.7}[-0.7]{
\begin{tikzpicture}[baseline=(current bounding box.center)]
\braid[number of strands=5] (braid) a_4^{-1} a_3^{-1} a_2^{-1} a_1^{-1} a_2^{-1} a_4^{-1} a_3^{-1};
\node[yshift=0.3cm] at (braid-1-s) {\scalebox{1}[-1]{$X$}};
\node[yshift=-0.3cm] at (braid-1-e) {\scalebox{1}[-1]{$X$}};
\node[yshift=0.3cm] at (braid-2-s) {\scalebox{1}[-1]{$X'$}};
\node[yshift=-0.3cm] at (braid-2-e) {\scalebox{1}[-1]{$X'$}};
\node[yshift=0.3cm] at (braid-3-s) {\scalebox{1}[-1]{$Y$}};
\node[yshift=-0.3cm] at (braid-3-e) {\scalebox{1}[-1]{$Y$}};
\node[yshift=0.3cm] at (braid-4-s) {\scalebox{1}[-1]{$Y'$}};
\node[yshift=-0.3cm] at (braid-4-e) {\scalebox{1}[-1]{$Y'$}};
\node[yshift=0.3cm] at (braid-5-s) {\scalebox{1}[-1]{$C$}};
\node[yshift=-0.3cm] at (braid-5-e) {\scalebox{1}[-1]{$C$}};
\end{tikzpicture}} \, \, ,
\end{gather*}
hence the pentagon axiom is satisfied.

\item Finally, there is a module functor $(\id_{\mathcal{C}},k)\colon \mathcal{P} \to \mathcal{P}^{32}$ with
\begin{gather*}
k_{X\boxtimes Y,C}\colon \id_{\mathcal{C}} ((X\boxtimes Y)\triangleright C) \equiv X \otimes Y \otimes C \to (X\boxtimes Y) \triangleright^{32} \id_{\mathcal{C}} (C) \equiv C \otimes Y \otimes X \, \, ,\\
k_{X\boxtimes Y,C} = (\id_C \otimes c_{X,Y}) \circ c_{X \otimes Y,C} \, \, ,
\end{gather*}
establishing a module equivalence. The pentagon axiom is
\begin{gather*}
\psi^{32}_{D,D',C} \circ k_{D\otimes D',C} = (\id_D \triangleright^{32} k_{D',C}) \circ k_{D,D'\triangleright C} \circ \psi_{D,D',C} \, \, ,
\end{gather*}
i.e.
\begin{gather*}
\scalebox{0.7}[-0.7]{
\begin{tikzpicture}[baseline=(current bounding box.center)]
\braid[number of strands=5] (braid) a_4 a_3 a_2 a_1 a_3 a_2-a_4 a_3 a_2 a_4 a_3;
\node[yshift=0.3cm] at (braid-1-s) {\scalebox{1}[-1]{$X$}};
\node[yshift=-0.3cm] at (braid-1-e) {\scalebox{1}[-1]{$X$}};
\node[yshift=0.3cm] at (braid-2-s) {\scalebox{1}[-1]{$X'$}};
\node[yshift=-0.3cm] at (braid-2-e) {\scalebox{1}[-1]{$X'$}};
\node[yshift=0.3cm] at (braid-3-s) {\scalebox{1}[-1]{$Y$}};
\node[yshift=-0.3cm] at (braid-3-e) {\scalebox{1}[-1]{$Y$}};
\node[yshift=0.3cm] at (braid-4-s) {\scalebox{1}[-1]{$Y'$}};
\node[yshift=-0.3cm] at (braid-4-e) {\scalebox{1}[-1]{$Y'$}};
\node[yshift=0.3cm] at (braid-5-s) {\scalebox{1}[-1]{$C$}};
\node[yshift=-0.3cm] at (braid-5-e) {\scalebox{1}[-1]{$C$}};
\end{tikzpicture}} =
\scalebox{0.7}[-0.7]{
\begin{tikzpicture}[baseline=(current bounding box.center)]
\braid[number of strands=5] (braid) a_2 a_2 a_1-a_3 a_2-a_4 a_3 a_4 a_2 a_1 a_2;
\node[yshift=0.3cm] at (braid-1-s) {\scalebox{1}[-1]{$X$}};
\node[yshift=-0.3cm] at (braid-1-e) {\scalebox{1}[-1]{$X$}};
\node[yshift=0.3cm] at (braid-2-s) {\scalebox{1}[-1]{$X'$}};
\node[yshift=-0.3cm] at (braid-2-e) {\scalebox{1}[-1]{$X'$}};
\node[yshift=0.3cm] at (braid-3-s) {\scalebox{1}[-1]{$Y$}};
\node[yshift=-0.3cm] at (braid-3-e) {\scalebox{1}[-1]{$Y$}};
\node[yshift=0.3cm] at (braid-4-s) {\scalebox{1}[-1]{$Y'$}};
\node[yshift=-0.3cm] at (braid-4-e) {\scalebox{1}[-1]{$Y'$}};
\node[yshift=0.3cm] at (braid-5-s) {\scalebox{1}[-1]{$C$}};
\node[yshift=-0.3cm] at (braid-5-e) {\scalebox{1}[-1]{$C$}};
\end{tikzpicture}} \, \, .
\end{gather*}
If we start from the right hand side, this equation is verified after a few rearrangements of braidings:
\begin{gather*}
\scalebox{0.7}[-0.7]{
\begin{tikzpicture}[baseline=(current bounding box.center)]
\braid[number of strands=5] (braid) a_2 a_2 a_1-a_3 a_2-a_4 a_3 a_4 a_2 a_1 a_2;
\node[yshift=0.3cm] at (braid-1-s) {\scalebox{1}[-1]{$X$}};
\node[yshift=-0.3cm] at (braid-1-e) {\scalebox{1}[-1]{$X$}};
\node[yshift=0.3cm] at (braid-2-s) {\scalebox{1}[-1]{$X'$}};
\node[yshift=-0.3cm] at (braid-2-e) {\scalebox{1}[-1]{$X'$}};
\node[yshift=0.3cm] at (braid-3-s) {\scalebox{1}[-1]{$Y$}};
\node[yshift=-0.3cm] at (braid-3-e) {\scalebox{1}[-1]{$Y$}};
\node[yshift=0.3cm] at (braid-4-s) {\scalebox{1}[-1]{$Y'$}};
\node[yshift=-0.3cm] at (braid-4-e) {\scalebox{1}[-1]{$Y'$}};
\node[yshift=0.3cm] at (braid-5-s) {\scalebox{1}[-1]{$C$}};
\node[yshift=-0.3cm] at (braid-5-e) {\scalebox{1}[-1]{$C$}};
\end{tikzpicture}} =
\scalebox{0.7}[-0.7]{
\begin{tikzpicture}[baseline=(current bounding box.center)]
\braid[number of strands=5] (braid) a_2 a_2 a_1-a_3 a_2 a_4 a_3 a_2 a_1 a_4 a_2;
\node[yshift=0.3cm] at (braid-1-s) {\scalebox{1}[-1]{$X$}};
\node[yshift=-0.3cm] at (braid-1-e) {\scalebox{1}[-1]{$X$}};
\node[yshift=0.3cm] at (braid-2-s) {\scalebox{1}[-1]{$X'$}};
\node[yshift=-0.3cm] at (braid-2-e) {\scalebox{1}[-1]{$X'$}};
\node[yshift=0.3cm] at (braid-3-s) {\scalebox{1}[-1]{$Y$}};
\node[yshift=-0.3cm] at (braid-3-e) {\scalebox{1}[-1]{$Y$}};
\node[yshift=0.3cm] at (braid-4-s) {\scalebox{1}[-1]{$Y'$}};
\node[yshift=-0.3cm] at (braid-4-e) {\scalebox{1}[-1]{$Y'$}};
\node[yshift=0.3cm] at (braid-5-s) {\scalebox{1}[-1]{$C$}};
\node[yshift=-0.3cm] at (braid-5-e) {\scalebox{1}[-1]{$C$}};
\end{tikzpicture}} = 
\scalebox{0.7}[-0.7]{
\begin{tikzpicture}[baseline=(current bounding box.center)]
\braid[number of strands=5] (braid) a_4 a_3 a_2 a_1 a_3 a_3 a_2-a_4 a_3 a_4 a_2;
\node[yshift=0.3cm] at (braid-1-s) {\scalebox{1}[-1]{$X$}};
\node[yshift=-0.3cm] at (braid-1-e) {\scalebox{1}[-1]{$X$}};
\node[yshift=0.3cm] at (braid-2-s) {\scalebox{1}[-1]{$X'$}};
\node[yshift=-0.3cm] at (braid-2-e) {\scalebox{1}[-1]{$X'$}};
\node[yshift=0.3cm] at (braid-3-s) {\scalebox{1}[-1]{$Y$}};
\node[yshift=-0.3cm] at (braid-3-e) {\scalebox{1}[-1]{$Y$}};
\node[yshift=0.3cm] at (braid-4-s) {\scalebox{1}[-1]{$Y'$}};
\node[yshift=-0.3cm] at (braid-4-e) {\scalebox{1}[-1]{$Y'$}};
\node[yshift=0.3cm] at (braid-5-s) {\scalebox{1}[-1]{$C$}};
\node[yshift=-0.3cm] at (braid-5-e) {\scalebox{1}[-1]{$C$}};
\end{tikzpicture}} =
\scalebox{0.7}[-0.7]{
\begin{tikzpicture}[baseline=(current bounding box.center)]
\braid[number of strands=5] (braid) a_4 a_3 a_2 a_1 a_3 a_2 a_3 a_2-a_4 a_3 a_2;
\node[yshift=0.3cm] at (braid-1-s) {\scalebox{1}[-1]{$X$}};
\node[yshift=-0.3cm] at (braid-1-e) {\scalebox{1}[-1]{$X$}};
\node[yshift=0.3cm] at (braid-2-s) {\scalebox{1}[-1]{$X'$}};
\node[yshift=-0.3cm] at (braid-2-e) {\scalebox{1}[-1]{$X'$}};
\node[yshift=0.3cm] at (braid-3-s) {\scalebox{1}[-1]{$Y$}};
\node[yshift=-0.3cm] at (braid-3-e) {\scalebox{1}[-1]{$Y$}};
\node[yshift=0.3cm] at (braid-4-s) {\scalebox{1}[-1]{$Y'$}};
\node[yshift=-0.3cm] at (braid-4-e) {\scalebox{1}[-1]{$Y'$}};
\node[yshift=0.3cm] at (braid-5-s) {\scalebox{1}[-1]{$C$}};
\node[yshift=-0.3cm] at (braid-5-e) {\scalebox{1}[-1]{$C$}};
\end{tikzpicture}}
\end{gather*}
The first step in this continued equality is evident from functoriality of $\otimes$, whereas the other steps are achieved using the third Reidemeister move. By another third Reidemeister move we arrive at the left hand side of the pentagon axiom. 
\end{itemize}

To wrap up, all stated module categories are pairwise equivalent.
\end{proof}

The definition of the module associativity isomorphisms for the module categories considered in this proposition does not rely on a consistent choice of braiding. For example, the module associativity $\psi^{21,-}$ of $\mathcal{P}^{21,-}$ is defined using the braiding $c^{-1}$, and $\psi$ of $\mathcal{P}$ contains the braiding $c$; but the actions of $X\boxtimes Y \in \mathcal{C}\boxtimes\mathcal{C}$ on $C\in\mathcal{C}$ under $\triangleright^{21}$ and $\triangleright$ only differ by a transposition of $X$ and $Y$. It is therefore desirable to determine the relation between $\mathcal{P}$ and $\mathcal{P}^{21}$, and in turn also the other $\mathcal{C}\boxtimes\mathcal{C}$-module categories $\mathcal{P}^{xy} \defeq \mathcal{P}^{xy,+}$ of Lemma \ref{lem:all-module-cats}.

However, it turns out that $\mathcal{P}$ and $\mathcal{P}^{21}$ are inequivalent $\mathcal{C}\boxtimes\mathcal{C}$-module categories, unless $\mathcal{C}$ is endowed with a twist, i.e. a natural isomorphism $\theta \colon 1_{\mathcal{C}} \to 1_{\mathcal{C}}$ with components $\theta_X \colon X \to X$ satisfying $\theta_{X\otimes Y} = c_{Y,X} \circ c_{X,Y} \circ (\theta_X \otimes \theta_Y)$ for all objects $X,Y$. Similarly, a twist is needed to establish equivalences between all module categories $\mathcal{P}^{xy}$. This follows from \cite[Lemma 2.5]{bfrs}, which is restated here in a simplified form.

\begin{lemma}
\label{lem:equivalence-braidings}
Let $\mathcal{C}$ be a braided strict monoidal category with twist $\theta$. Then the module categories $\mathcal{P}$ and $\mathcal{P}^{-}$ of Lemma \ref{lem:all-module-cats} (with abbreviations from Equation \eqref{eq:abbrev}) are equivalent via the module functor $(\id_{\mathcal{C}},p)\colon \mathcal{P} \to \mathcal{P}^{-}$, with
\begin{gather*}
p_{X\boxtimes Y, C} = \id_X \otimes (\theta_{Y \otimes C}^{-1} \circ (\id_Y \otimes \theta_C)) \, \equiv \, \,
\scalebox{0.7}[-0.7]{
\begin{tikzpicture}[trim left=0.993cm,baseline=(current bounding box.center)]
\braid[number of strands=3] (braid) a_{-1} a_{-1} a_{-1};
\node[yshift=0.3cm] at (braid-1-s) {\scalebox{1}[-1]{$X$}};
\node[yshift=-0.3cm] at (braid-1-e) {\scalebox{1}[-1]{$X$}};
\node[yshift=0.3cm] at (braid-2-s) {\scalebox{1}[-1]{$Y$}};
\node[yshift=-0.3cm] at (braid-2-e) {\scalebox{1}[-1]{$Y$}};
\node[yshift=0.3cm] at (braid-3-s) {\scalebox{1}[-1]{$C$}};
\node[yshift=-0.3cm] at (braid-3-e) {\scalebox{1}[-1]{$C$}};
\filldraw[draw=black,dashed,fill=white] (3,-0.9) circle (0.4);
\filldraw[draw=black,dashed,fill=white] (2.5,-2.6) ellipse (0.9cm and 0.4cm);
\node at (3,-0.9) {\scalebox{1}[-1]{$\theta$}};
\node at (2.5,-2.6) {\scalebox{1}[-1]{$\theta^{-1}$}};
\end{tikzpicture}} \, \, .
\end{gather*}
\end{lemma}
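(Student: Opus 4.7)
The plan is to verify that $(\id_{\mathcal{C}}, p)$ satisfies the module functor axioms of Definition \ref{def:module-functor}; since the underlying functor is $\id_{\mathcal{C}}$, this automatically upgrades to an equivalence of module categories. Each morphism $p_{X\boxtimes Y,C}$ is a natural isomorphism by naturality of the twist $\theta$ and of the braiding, together with functoriality of $\otimes$, and invertibility is immediate from invertibility of $\theta_{Y\otimes C}$ and $\theta_C$. The triangle axiom holds trivially since $\mathcal{C}$ is strict and the module unit constraints are identities.

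The real work is the pentagon axiom
\begin{gather*}
\psi^{-}_{D,D',C} \circ p_{D\otimes D',C} = (\id_D \triangleright p_{D',C}) \circ p_{D,D'\triangleright C} \circ \psi_{D,D',C},
\end{gather*}
with $D=X\boxtimes Y$ and $D'=X'\boxtimes Y'$. I would translate both sides into string diagrams on the five strands $X,X',Y,Y',C$ and simplify using the identity
\begin{gather*}
\theta^{-1}_{A\otimes B} \circ (\id_A \otimes \theta_B) = (\theta^{-1}_A \otimes \id_B) \circ c^{-1}_{A,B} \circ c^{-1}_{B,A},
\end{gather*}
which follows by expanding $\theta^{-1}_{A\otimes B}$ via the twist axiom $\theta_{A\otimes B} = c_{B,A}\circ c_{A,B}\circ(\theta_A \otimes \theta_B)$ and then commuting $\theta_B$ past the two inverse braidings by naturality of $c$. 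Applied to each of the three occurrences of $p$, this reduces $p$ to a single twist $\theta^{-1}$ localized on a $Y$-type strand, preceded by a pair of inverse braidings.

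The main obstacle will be organizing this reduction on the left hand side, where $p_{D\otimes D',C}$ produces $\theta^{-1}_{Y\otimes Y'}$, which must be further expanded with the twist axiom to yield twists $\theta^{-1}_Y$ and $\theta^{-1}_{Y'}$ localized on the individual $Y,Y'$ strands matching the placement arising on the right hand side from $p_{D,D'\triangleright C}$ and $\id_D\triangleright p_{D',C}$. Once all twists have been split and moved onto the $Y,Y'$ strands through naturality of $\theta$, both sides of the pentagon reduce to pure braid words on five strands, and their equality is established by the second and third Reidemeister moves combined with functoriality of $\otimes$, exactly in the style of the graphical calculations in the proof of Proposition \ref{prop:equivalent-2d}.
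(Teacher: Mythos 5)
Your proposal is correct and follows essentially the same route as the paper's proof: both rewrite $p$ by the twist identity $\theta^{-1}_{A\otimes B}\circ(\id_A\otimes\theta_B) = (\theta^{-1}_A\otimes\id_B)\circ c^{-1}_{A,B}\circ c^{-1}_{B,A}$, then further expand $\theta^{-1}_{Y\otimes Y'}$ in $p_{D\otimes D',C}$, eliminate the twists by naturality, and check the residual pure-braid equation with Reidemeister moves. The paper phrases the twist elimination as composing both sides with $\theta_Y\otimes\theta_{Y'}$ on the left, which is just your ``move twists onto the $Y,Y'$ strands and cancel'' step written out.
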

\begin{proof}
Since $\theta$ is a twist, the relation
\begin{gather*}
\theta_{Y \otimes C}^{-1} = (\theta_{Y}^{-1} \otimes \theta_{C}^{-1}) \circ c_{Y,C}^{-1} \circ c_{C,Y}^{-1}
\end{gather*}
allows to write
\begin{gather*}
p_{X\boxtimes Y, C} = \id_X \otimes ( (\theta_Y^{-1} \otimes \id_C) \circ c_{Y,C}^{-1} \circ c_{C,Y}^{-1} ) \, \equiv \, \,
\scalebox{0.7}[-0.7]{
\begin{tikzpicture}[trim left=0.993cm,baseline=(current bounding box.center)]
\braid[number of strands=3] (braid) a_2^{-1} a_2^{-1} a_{-1};
\node[yshift=0.3cm] at (braid-1-s) {\scalebox{1}[-1]{$X$}};
\node[yshift=-0.3cm] at (braid-1-e) {\scalebox{1}[-1]{$X$}};
\node[yshift=0.3cm] at (braid-2-s) {\scalebox{1}[-1]{$Y$}};
\node[yshift=-0.3cm] at (braid-2-e) {\scalebox{1}[-1]{$Y$}};
\node[yshift=0.3cm] at (braid-3-s) {\scalebox{1}[-1]{$C$}};
\node[yshift=-0.3cm] at (braid-3-e) {\scalebox{1}[-1]{$C$}};
\filldraw[draw=black,dashed,fill=white] (2,-2.7) circle (0.4);
\node at (2,-2.7) {\scalebox{1}[-1]{$\theta^{-1}$}};
\end{tikzpicture}} \, \, .
\end{gather*}
The pentagon axiom for $(\id_{\mathcal{C}},p)$ is
\begin{gather*}
\psi^{-}_{D,D',C} \circ p_{D\otimes D',C} = (\id_D \triangleright p_{D',C}) \circ p_{D,D'\triangleright C} \circ \psi_{D,D',C} \, \, ,
\end{gather*}
where
\begin{gather*}
p_{D \otimes D',C} = \id_{X \otimes X'} \otimes ( (\theta_{Y \otimes Y'}^{-1} \otimes \id_C) \circ c_{Y \otimes Y',C}^{-1} \circ c_{C,Y \otimes Y'}^{-1} ) = \\ = \id_{X \otimes X'} \otimes \left\lbrace \left[ \left( (\theta_Y^{-1} \otimes \theta_{Y'}^{-1}) \circ c_{Y,Y'}^{-1} \circ c_{Y',Y}^{-1} \right) \otimes \id_C \right] \circ c_{Y \otimes Y',C}^{-1} \circ c_{C,Y \otimes Y'}^{-1} \right\rbrace \, \, .
\end{gather*}
We verify that $(\id_{\mathcal{C}},p)$ satisfies this axiom. By naturality, the twist morphisms can be eliminated by composing both sides of the pentagon axiom with $\theta_Y \otimes \theta_{Y'}$ from the left, resulting in
\begin{gather*}
\scalebox{0.7}[-0.7]{
\begin{tikzpicture}[baseline=(current bounding box.center)]
\braid[number of strands=5] (braid) a_4^{-1} a_3^{-1} a_3^{-1} a_4^{-1} a_3^{-1} a_3^{-1} a_2^{-1};
\node[yshift=0.3cm] at (braid-1-s) {\scalebox{1}[-1]{$X$}};
\node[yshift=-0.3cm] at (braid-1-e) {\scalebox{1}[-1]{$X$}};
\node[yshift=0.3cm] at (braid-2-s) {\scalebox{1}[-1]{$X'$}};
\node[yshift=-0.3cm] at (braid-2-e) {\scalebox{1}[-1]{$X'$}};
\node[yshift=0.3cm] at (braid-3-s) {\scalebox{1}[-1]{$Y$}};
\node[yshift=-0.3cm] at (braid-3-e) {\scalebox{1}[-1]{$Y$}};
\node[yshift=0.3cm] at (braid-4-s) {\scalebox{1}[-1]{$Y'$}};
\node[yshift=-0.3cm] at (braid-4-e) {\scalebox{1}[-1]{$Y'$}};
\node[yshift=0.3cm] at (braid-5-s) {\scalebox{1}[-1]{$C$}};
\node[yshift=-0.3cm] at (braid-5-e) {\scalebox{1}[-1]{$C$}};
\end{tikzpicture}} =
\scalebox{0.7}[-0.7]{
\begin{tikzpicture}[baseline=(current bounding box.center)]
\braid[number of strands=5] (braid) a_2 a_2^{-1} a_3^{-1} a_4^{-1} a_4^{-1} a_3^{-1} a_2^{-1} a_4^{-1} a_4^{-1};
\node[yshift=0.3cm] at (braid-1-s) {\scalebox{1}[-1]{$X$}};
\node[yshift=-0.3cm] at (braid-1-e) {\scalebox{1}[-1]{$X$}};
\node[yshift=0.3cm] at (braid-2-s) {\scalebox{1}[-1]{$X'$}};
\node[yshift=-0.3cm] at (braid-2-e) {\scalebox{1}[-1]{$X'$}};
\node[yshift=0.3cm] at (braid-3-s) {\scalebox{1}[-1]{$Y$}};
\node[yshift=-0.3cm] at (braid-3-e) {\scalebox{1}[-1]{$Y$}};
\node[yshift=0.3cm] at (braid-4-s) {\scalebox{1}[-1]{$Y'$}};
\node[yshift=-0.3cm] at (braid-4-e) {\scalebox{1}[-1]{$Y'$}};
\node[yshift=0.3cm] at (braid-5-s) {\scalebox{1}[-1]{$C$}};
\node[yshift=-0.3cm] at (braid-5-e) {\scalebox{1}[-1]{$C$}};
\end{tikzpicture}} \, \, .
\end{gather*}
To prove this equation, it is instructive to perform one intermediate step, starting from the right hand side:
\begin{gather*}
\scalebox{0.7}[-0.7]{
\begin{tikzpicture}[baseline=(current bounding box.center)]
\braid[number of strands=5] (braid) a_2 a_2^{-1} a_3^{-1} a_4^{-1} a_4^{-1} a_3^{-1} a_2^{-1} a_4^{-1} a_4^{-1};
\node[yshift=0.3cm] at (braid-1-s) {\scalebox{1}[-1]{$X$}};
\node[yshift=-0.3cm] at (braid-1-e) {\scalebox{1}[-1]{$X$}};
\node[yshift=0.3cm] at (braid-2-s) {\scalebox{1}[-1]{$X'$}};
\node[yshift=-0.3cm] at (braid-2-e) {\scalebox{1}[-1]{$X'$}};
\node[yshift=0.3cm] at (braid-3-s) {\scalebox{1}[-1]{$Y$}};
\node[yshift=-0.3cm] at (braid-3-e) {\scalebox{1}[-1]{$Y$}};
\node[yshift=0.3cm] at (braid-4-s) {\scalebox{1}[-1]{$Y'$}};
\node[yshift=-0.3cm] at (braid-4-e) {\scalebox{1}[-1]{$Y'$}};
\node[yshift=0.3cm] at (braid-5-s) {\scalebox{1}[-1]{$C$}};
\node[yshift=-0.3cm] at (braid-5-e) {\scalebox{1}[-1]{$C$}};
\draw[black,dashed] (4.5,-7.75) circle (0.5);
\draw[->,loosely dashed] plot [smooth, tension=0.3] coordinates { (4.5,-7.2) (4.5,-6) (3.5,-4.4) (4.5,-2.1) (4.5,-0.9)};
\end{tikzpicture}} =
\scalebox{0.7}[-0.7]{
\begin{tikzpicture}[baseline=(current bounding box.center)]
\braid[number of strands=5] (braid) a_4^{-1} a_3^{-1} a_4^{-1} a_4^{-1} a_3^{-1} a_4^{-1} a_2^{-1};
\node[yshift=0.3cm] at (braid-1-s) {\scalebox{1}[-1]{$X$}};
\node[yshift=-0.3cm] at (braid-1-e) {\scalebox{1}[-1]{$X$}};
\node[yshift=0.3cm] at (braid-2-s) {\scalebox{1}[-1]{$X'$}};
\node[yshift=-0.3cm] at (braid-2-e) {\scalebox{1}[-1]{$X'$}};
\node[yshift=0.3cm] at (braid-3-s) {\scalebox{1}[-1]{$Y$}};
\node[yshift=-0.3cm] at (braid-3-e) {\scalebox{1}[-1]{$Y$}};
\node[yshift=0.3cm] at (braid-4-s) {\scalebox{1}[-1]{$Y'$}};
\node[yshift=-0.3cm] at (braid-4-e) {\scalebox{1}[-1]{$Y'$}};
\node[yshift=0.3cm] at (braid-5-s) {\scalebox{1}[-1]{$C$}};
\node[yshift=-0.3cm] at (braid-5-e) {\scalebox{1}[-1]{$C$}};
\draw[black,dashed] (4.5,-0.75) circle (0.5);
\draw[black,dashed] (4.5,-3.25) ellipse (0.7cm and 1.3cm);
\draw[->,loosely dashed] (4.5,-4.55) to (3.5,-5.55);
\end{tikzpicture}} =
\scalebox{0.7}[-0.7]{
\begin{tikzpicture}[baseline=(current bounding box.center)]
\braid[number of strands=5] (braid) a_4^{-1} a_3^{-1} a_3^{-1} a_4^{-1} a_3^{-1} a_3^{-1} a_2^{-1};
\node[yshift=0.3cm] at (braid-1-s) {\scalebox{1}[-1]{$X$}};
\node[yshift=-0.3cm] at (braid-1-e) {\scalebox{1}[-1]{$X$}};
\node[yshift=0.3cm] at (braid-2-s) {\scalebox{1}[-1]{$X'$}};
\node[yshift=-0.3cm] at (braid-2-e) {\scalebox{1}[-1]{$X'$}};
\node[yshift=0.3cm] at (braid-3-s) {\scalebox{1}[-1]{$Y$}};
\node[yshift=-0.3cm] at (braid-3-e) {\scalebox{1}[-1]{$Y$}};
\node[yshift=0.3cm] at (braid-4-s) {\scalebox{1}[-1]{$Y'$}};
\node[yshift=-0.3cm] at (braid-4-e) {\scalebox{1}[-1]{$Y'$}};
\node[yshift=0.3cm] at (braid-5-s) {\scalebox{1}[-1]{$C$}};
\node[yshift=-0.3cm] at (braid-5-e) {\scalebox{1}[-1]{$C$}};
\draw[black,dashed] (3.5,-5.2) ellipse (0.7cm and 1.3cm);
\end{tikzpicture}} \, \, .
\end{gather*}
Hence $(\id_{\mathcal{C}},p)\colon \mathcal{P} \to \mathcal{P}^{-}$ is a module functor.
\end{proof}

From \cite[Theorem 2.4]{bfrs} it also follows that the twist is necessary: the module categories $\mathcal{P}$ and $\mathcal{P}^{-}$ (and in turn also a whole family of module structures on $\mathcal{C}$ involving higher powers of the braiding) are equivalent if and only if $\mathcal{C}$ is endowed with a twist.\medskip

We now assume $\mathcal{C}$ to be a modular tensor category, but the following corollary still holds if $\mathcal{C}$ is just a braided monoidal category with a twist.
\begin{corollary}
\label{cor:modules2}
Let $\mathcal{C}$ be a modular tensor category. Then all module categories $\mathcal{P}^{xy,\varepsilon}$ of Lemma \ref{lem:all-module-cats} (for $x,y\in\{1,2,3\}$, $x\neq y$, and $\varepsilon\in\{\pm\}$) are pairwise equivalent left $\mathcal{C}\boxtimes\mathcal{C}$-module categories.
\end{corollary}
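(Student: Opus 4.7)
The plan is to combine Proposition \ref{prop:equivalent-2d} with Lemma \ref{lem:equivalence-braidings}, together with one additional observation, to cover all twelve module categories. Since $\mathcal{C}$ is modular it is in particular ribbon, so it carries a twist $\theta$, and Lemma \ref{lem:equivalence-braidings} directly yields $\mathcal{P}^{12,+} \simeq \mathcal{P}^{12,-}$. This supplies one ``bridge'' between the $\varepsilon=+$ and $\varepsilon=-$ families.

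The key observation is that $(\mathcal{C},\otimes,c^{-1})$ is again a braided monoidal category, since the hexagon axioms are invariant under inversion of the braiding. Applying the construction of Lemma \ref{lem:all-module-cats} to this reverse-braided version of $\mathcal{C}$ produces precisely the module categories $\mathcal{P}^{xy,-}$ of the original $\mathcal{C}$ (they are the ``$+$'' module categories for the braiding $c^{-1}$). Since the proof of Proposition \ref{prop:equivalent-2d} uses only the second and third Reidemeister moves and functoriality of $\otimes$ --- all of which hold equally well for $c^{-1}$ as for $c$ --- running it for the reverse-braided category yields the analogous chain of six equivalences
\[
\mathcal{P}^{12,-} \simeq \mathcal{P}^{21,+} \simeq \mathcal{P}^{13,+} \simeq \mathcal{P}^{31,-} \simeq \mathcal{P}^{23,+} \simeq \mathcal{P}^{32,-}.
\]
Combining this chain with the original six equivalences of Proposition \ref{prop:equivalent-2d} through the bridge from Lemma \ref{lem:equivalence-braidings} gives, by transitivity, pairwise equivalences among all twelve module categories $\mathcal{P}^{xy,\varepsilon}$.

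The only nontrivial step is verifying that the proof of Proposition \ref{prop:equivalent-2d} carries over unchanged under the simultaneous replacement $c \leftrightarrow c^{-1}$. I expect this to be the main obstacle in the sense of book-keeping rather than ideas: one must check that each of the five module functor constraints ($f$, $g$, $l$, $h$, $k$) has a mirror version whose pentagon axiom reduces to the same Reidemeister manipulations used in the original proof, merely with over- and under-crossings interchanged. An alternative --- slightly more tedious but more self-contained --- route is to construct, for each remaining index pair $xy$, a twist-based module functor $\mathcal{P}^{xy,+} \to \mathcal{P}^{xy,-}$ directly analogous to the one in Lemma \ref{lem:equivalence-braidings}, and check its pentagon axiom using $\theta_{A\otimes B}^{-1} = (\theta_A^{-1}\otimes\theta_B^{-1})\circ c_{A,B}^{-1}\circ c_{B,A}^{-1}$; then transitivity with Proposition \ref{prop:equivalent-2d} closes the argument.
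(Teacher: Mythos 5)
Your proposal is correct and follows essentially the same route as the paper: invoke Proposition~\ref{prop:equivalent-2d} for the chain $\mathcal{P}\simeq\mathcal{P}^{21,-}\simeq\mathcal{P}^{13,-}\simeq\mathcal{P}^{31}\simeq\mathcal{P}^{23,-}\simeq\mathcal{P}^{32}$, invoke its braiding-reversed analogue for the chain $\mathcal{P}^{-}\simeq\mathcal{P}^{21}\simeq\mathcal{P}^{13}\simeq\mathcal{P}^{31,-}\simeq\mathcal{P}^{23}\simeq\mathcal{P}^{32,-}$, and then bridge the two via the twist equivalence $\mathcal{P}\simeq\mathcal{P}^{-}$ of Lemma~\ref{lem:equivalence-braidings}. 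The one point where you add something the paper leaves implicit is the justification of the reversed analogue: the paper simply asserts ``there is an analogous result if all braidings are reversed,'' whereas you observe that $(\mathcal{C},\otimes,c^{-1})$ is itself braided monoidal and that the ``$+$'' module categories for this reverse-braided category coincide with the ``$-$'' module categories for the original one, so the reversed chain is an instance of Proposition~\ref{prop:equivalent-2d} rather than requiring a separate Reidemeister-by-Reidemeister check. This is a clean conceptual packaging of the same calculation; your fallback ``alternative route'' (twist-based functors $\mathcal{P}^{xy,+}\to\mathcal{P}^{xy,-}$ for each $xy$) is not needed and is not what the paper does.
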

\begin{proof}
In Proposition \ref{prop:equivalent-2d} it is shown that $\mathcal{P}$ is equivalent to $\mathcal{P}^{21,-}$, $\mathcal{P}^{13,-}$, $\mathcal{P}^{31}$, $\mathcal{P}^{23,-}$ and $\mathcal{P}^{32}$. There is an analogous result if all braidings are reversed throughout Proposition \ref{prop:equivalent-2d} (replacing $c$ by $c^{-1}$ and vice versa), i.e. the module categories $\mathcal{P}^{-}$, $\mathcal{P}^{21}$, $\mathcal{P}^{13}$, $\mathcal{P}^{31,-}$, $\mathcal{P}^{23}$ and $\mathcal{P}^{32,-}$ are pairwise equivalent. By Lemma \ref{lem:equivalence-braidings}, the left $\mathcal{C}\boxtimes\mathcal{C}$-module categories $\mathcal{P}$ and $\mathcal{P}^{-}$ are equivalent, and the claim follows.
\end{proof}

We emphasize that the corollary does not provide a complete classification of $\mathcal{C}\boxtimes\mathcal{C}$-module structures on $\mathcal{C}$. However, we showed that certain $\mathcal{C}\boxtimes\mathcal{C}$-module structures on $\mathcal{C}$, which are restricted by a set of reasonable assumptions (module actions which arrange three objects under the tensor product, a minimal number of braidings, and no mixed braidings in the associativity isomorphisms corresponding to the actions $\triangleright^{23}$ and $\triangleright^{32}$), are equivalent. Consequently, different choices for the $S_2$ permutation action in the literature are equivalent: for example, $\mathcal{P}$ appears in \cite{fuchs-schweigert}, whereas $\mathcal{P}^{13}$ is considered in \cite{ejp2018} in the same context. From the physical point of view, our partial classification relates several possible candidates for the permutation twist surface defect of a bilayer system.

\section{Construction of a representation}
\label{sec:construction}

In this section we construct a representation of $S_n$ on $\mathcal{C}^{\boxtimes n}$ for a modular tensor category $\mathcal{C}$ for any $n\geq 2$, which extends the representation of Fuchs and Schweigert for $n=2$ (Theorem \ref{thm:fs-thm}). Let $n \geq 2$ be an integer and let $\tau_j \defeq  (j \, (j+1))\in S_n$ be the $j$-th adjacent transposition for $j=1,\ldots,n-1$. We choose the Coxeter presentation $S_n = \langle S | R \rangle$ of the symmetric group, where
\begin{gather*}
S \defeq  \{ \tau_j, \, j=1,\ldots,n-1 \}
\end{gather*}
is the set of generators, and the subset
\begin{align*}
R \defeq  & \{ \tau_j \tau_j, \, j=1,\ldots,n \} \cup \{ \tau_i \tau_j \tau_i^{-1} \tau_j^{-1}, \, |i-j|>1 \} \cup \\ & \cup \{ \tau_j \tau_{j+1} \tau_j \tau_{j+1}^{-1} \tau_j^{-1} \tau_{j+1}^{-1}, \, j=1,\ldots,n-2 \}
\end{align*}
of the free group of $S$ is the subset of relations. In order to construct a representation of $S_n$ on $\mathcal{C}^{\boxtimes n}$ in the sense of Definition \ref{def:representation-on-cat}, it is sufficient to specify a map $[\bar{\varrho}]\colon S\to\mathsf{Pic}(\mathcal{C}^{\boxtimes n})$ such that $[\bar{\varrho}](R)=\{ [\mathcal{C}^{\boxtimes n}] \}$ (the singleton containing the bimodule equivalence class of the trivial $\mathcal{C}^{\boxtimes n}$-bimodule category $\mathcal{C}^{\boxtimes n}$). The universal property of a presentation then yields a unique group homomorphism $[\varrho]\colon S_n \to\mathsf{Pic}(\mathcal{C}^{\boxtimes n})$. Therefore, we want to detect non-trivial candidates for invertible bimodule categories describing adjacent transpositions. Their bimodule equivalence classes are the elements of $\mathsf{Pic}(\mathcal{C}^{\boxtimes n})$ in the image of the map $[\bar{\varrho}]$. Concatenating the bimodule categories of adjacent transpositions under the relative Deligne product leads to representatives of general permutations. 

By definition, the bimodule categories of adjacent transpositions are supposed to be $\mathcal{C}^{\boxtimes n}$-bimodule categories $\varPi_j$, $j=1,\ldots,n-1$, satisfying the relations
\begin{enumerate}
\item $\varPi_j \boxtimes_n \varPi_j \simeq \mathcal{C}^{\boxtimes n}$ for all $j=1,\ldots,n-1$,
\item $\varPi_i \boxtimes_n \varPi_j \simeq \varPi_j \boxtimes_n \varPi_i$ for all $i,j$ with $|i-j|>1$,
\item $\varPi_j \boxtimes_n \varPi_{j+1} \boxtimes_n \varPi_j \simeq \varPi_{j+1} \boxtimes_n \varPi_{j} \boxtimes_n \varPi_{j+1}$ for all $j = 1,\ldots,n-2$
\end{enumerate}
with respect to the relative Deligne product $\boxtimes_n \defeq  \boxtimes_{\mathcal{C}^{\boxtimes n}}$, which are equivalences of $\mathcal{C}^{\boxtimes n}$-bimodule categories. In the third relation as well as in the following, we suppress the associativity equivalences for $\boxtimes_n$ (Proposition \ref{prop:assoc-rel-deligne}) by omitting particular bracketings in the relative Deligne product of more than two categories. This simplifies the treatment and the notation, however to obtain a precise expression for the equivalence one has to choose a bracketing and insert the canonical equivalences in the proper places. For further simplification we write $\mathcal{C}^{n}$ instead of $\mathcal{C}^{\boxtimes n}$ from now on. As in Section \ref{sec:modules-deligne-two} it suffices to consider $\boxtimes$-factorizable objects $\bigbox_{i=1}^{n} X_i$ of $\mathcal{C}^{n}$.

\begin{definition}
\label{def:generators-Sn}
Let $\mathcal{C}$ be a modular tensor category (strict monoidal without loss of generality) and $n \geq 2$ a natural number. Then $\{ \varPi_j \}_{j=1,\ldots,n-1}$ is a family of $\mathcal{C}^{n}$-bimodule categories defined by $\varPi_j = (\mathcal{C}^{n-1},\triangleright_j,\varphi^{(j)})$ (with bimodule structure obtained from Proposition \ref{prop:from-module-to-bimodule-category}), where the left action $\triangleright_j$ is defined by
\begin{gather*}
\left( \bigbox\limits_{i=1}^{n} X_i \right) \triangleright_j \left( \bigbox\limits_{k=1}^{n-1} C_k \right) \defeq  \bigbox\limits_{k=1}^{j-1} (X_k \otimes C_k) \boxtimes (X_j \otimes X_{j+1} \otimes C_j) \boxtimes \bigbox\limits_{k=j+1}^{n-1} (X_{k+1} \otimes C_k) \, \, ,
\end{gather*}
and, for objects $D\defeq \bigbox_{i=1}^{n} X_i$, $D'\defeq \bigbox_{i=1}^{n} X_i'$ and $C\defeq \bigbox\limits_{k=1}^{n-1} C_k$,
\begin{gather*}
\varphi^{(j)}_{D,D',C} \colon (D \otimes D') \triangleright_j \left( \bigbox\limits_{k=1}^{n-1} C_k \right) \to D \triangleright_j \left( D' \triangleright_j \left( \bigbox\limits_{k=1}^{n-1} C_k \right)\right)\\
\varphi^{(j)}_{D,D',C} = \bigbox\limits_{k=1}^{j-1} \id_{X_k \otimes X_k' \otimes C_k} \boxtimes \left(\id_{X_j} \otimes c_{X_j',X_{j+1}} \otimes \id_{X_{j+1}'\otimes C_j} \right) \boxtimes \bigbox\limits_{k=j+1}^{n-1} \id_{X_{k+1} \otimes X_{k+1}' \otimes C_k}
\end{gather*}
is the module associativity isomorphism . The module unit isomorphism is given by the identity morphism.
\end{definition}
We use the convention to omit factors in a Deligne product if they are not defined for a specific index (e.g. $\bigbox_{k=j+1}^{n-1}$ for $j=n-1$). In the module associativity isomorphism $\boxtimes$ is identified with the tensor product of $\mathds{k}$-vector spaces (for $\mathds{k}$-linear $\mathcal{C}$), since e.g. for morphisms in $\mathcal{C}^2$ one has $\Hom_{\mathcal{C}^2} (X \boxtimes Y,X' \boxtimes Y') = \Hom(X,X') \otimes_{\mathds{k}} \Hom(Y,Y')$ \cite[Proposition 1.11.2]{egno}.

The $\mathcal{C}^{n}$-bimodule category $\varPi_j$ can be written as the Deligne product 
\begin{gather*}
\varPi_j = \mathcal{C}^{j-1} \boxtimes \mathcal{P} \boxtimes \mathcal{C}^{n-j-1} \, \, ,
\end{gather*}
where $\mathcal{C}^{j-1}$ and $\mathcal{C}^{n-j-1}$ are the trivial $\mathcal{C}^{j-1}$- and $\mathcal{C}^{n-j-1}$-bimodule categories, respectively, and $\mathcal{P}\equiv(\mathcal{C},\triangleright,\psi)$ is the $\mathcal{C}^{2}$-bimodule category with underlying category $\mathcal{C}$ from Theorem \ref{thm:fs-thm}. The module associativity isomorphism $\varphi^{(j)}$ is of the form
\begin{gather*}
\varphi^{(j)} = \id^{\otimes_{\mathds{k}} (j-1)} \, \otimes_{\mathds{k}} \, \psi \, \otimes_{\mathds{k}} \, \id^{\otimes_{\mathds{k}} (n-j-1)} \, \, ,
\end{gather*}
where the morphism $\psi$ is the module associativity isomorphism of $\mathcal{P}$ from Equation \eqref{eq:module-assoc-deligne2}. For example, $\varPi_1 = \mathcal{P} \boxtimes \mathcal{C}^{n-2}$ with
\begin{gather*}
\left(\bigbox\limits_{i=1}^{n} X_i \right) \triangleright_1 \left(\bigbox\limits_{k=1}^{n-1} C_k \right) \defeq  (X_1 \otimes X_2 \otimes C_1) \boxtimes \bigbox\limits_{k=2}^{n-1} (X_{k+1} \otimes C_k) \, \, .
\end{gather*}
For $n=2$ the definition of $\varPi_1$ collapses to $\mathcal{P}$, as desired.

The main result of this paper is the following theorem.

\begin{theorem}[Representation of $S_n$]
\label{thm:rep-Sn}
Let $\mathcal{C}$ be a (monoidally strict) modular tensor category. For each natural number $n\geq 2$, the map $[\bar{\varrho}]\colon S\to\mathsf{Pic}(\mathcal{C}^{\boxtimes n})$, $\tau_j \mapsto [\varPi_j]$ with $\varPi_j = \mathcal{C}^{j-1} \boxtimes \mathcal{P} \boxtimes \mathcal{C}^{n-j-1}$, $j=1,\ldots,n-1$, satisfies $[\bar{\varrho}](R)=\{ [\mathcal{C}^{\boxtimes n}] \}$ and hence induces a representation $[\varrho]\colon S_n \to \mathsf{Pic}(\mathcal{C}^{\boxtimes n})$ of the symmetric group $S_n = \langle S | R \rangle$ on $\mathcal{C}^{n}$.
\end{theorem}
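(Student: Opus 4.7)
My plan is to exploit the factorization $\varPi_j = \mathcal{C}^{j-1} \boxtimes \mathcal{P} \boxtimes \mathcal{C}^{n-j-1}$, which localizes each generator so that it acts nontrivially on only two of the $n$ tensor factors of $\mathcal{C}^{n}$. The core technical ingredient is a compatibility of the Deligne product with the relative Deligne product: for a decomposition $\mathcal{C}^n \simeq \mathcal{A} \boxtimes \mathcal{B}$ and factorized bimodule categories $\mathcal{M}_i \boxtimes \mathcal{N}_i$,
\begin{equation*}
(\mathcal{M}_1 \boxtimes \mathcal{N}_1) \boxtimes_{\mathcal{A} \boxtimes \mathcal{B}} (\mathcal{M}_2 \boxtimes \mathcal{N}_2) \simeq (\mathcal{M}_1 \boxtimes_{\mathcal{A}} \mathcal{M}_2) \boxtimes (\mathcal{N}_1 \boxtimes_{\mathcal{B}} \mathcal{N}_2),
\end{equation*}
which follows either from the universal property of $\boxtimes_{\mathcal{A}\boxtimes\mathcal{B}}$ or by passing to algebra objects via Theorem \ref{thm:egno-mod-alg} and Proposition \ref{appendix-prop-relative-deligne}. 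Combined with the unitality $\mathcal{C}^k \boxtimes_{\mathcal{C}^k} \mathcal{X} \simeq \mathcal{X}$ (Proposition \ref{prop:unitality}), this reduces each Coxeter relation to a statement involving only a few adjacent factors.

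The involution relation is then immediate: the factorization gives $\varPi_j \boxtimes_n \varPi_j \simeq \mathcal{C}^{j-1} \boxtimes (\mathcal{P} \boxtimes_{\mathcal{C}^2} \mathcal{P}) \boxtimes \mathcal{C}^{n-j-1}$, and the middle factor is $\mathcal{C}^2$ by the Fuchs-Schweigert equivalence \eqref{eq:equivalence2}. For the commutation relation with $|i-j|>1$ (say $i+1<j$), the nontrivial factors of $\varPi_i$ and $\varPi_j$ lie in disjoint index ranges, so both orders of the relative Deligne product factor as $\mathcal{C}^{i-1} \boxtimes \mathcal{P} \boxtimes \mathcal{C}^{j-i-2} \boxtimes \mathcal{P} \boxtimes \mathcal{C}^{n-j-1}$, giving the desired equivalence with no further work.

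The main obstacle is the braid relation $\varPi_j \boxtimes_n \varPi_{j+1} \boxtimes_n \varPi_j \simeq \varPi_{j+1} \boxtimes_n \varPi_j \boxtimes_n \varPi_{j+1}$. Factorizing onto the three adjacent positions $\{j, j+1, j+2\}$ reduces it to the genuinely new $\mathcal{C}^3$-bimodule equivalence
\begin{equation*}
(\mathcal{P} \boxtimes \mathcal{C}) \boxtimes_{\mathcal{C}^3} (\mathcal{C} \boxtimes \mathcal{P}) \boxtimes_{\mathcal{C}^3} (\mathcal{P} \boxtimes \mathcal{C}) \simeq (\mathcal{C} \boxtimes \mathcal{P}) \boxtimes_{\mathcal{C}^3} (\mathcal{P} \boxtimes \mathcal{C}) \boxtimes_{\mathcal{C}^3} (\mathcal{C} \boxtimes \mathcal{P}),
\end{equation*}
both sides of which should represent the three-cycle $(1\,2\,3)$. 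Two complementary strategies suggest themselves. The algebraic one: realize $\mathcal{P}$ as modules over the algebra $A_\mathcal{P} \in \mathcal{C}^2$ of Fuchs and Schweigert, compute via Proposition \ref{appendix-prop-relative-deligne} the algebras $A_L, A_R \in \mathcal{C}^3$ representing the two triple products, and exhibit an explicit Morita equivalence between them. The diagrammatic one, in the spirit of Proposition \ref{prop:equivalent-2d}: construct a module functor whose underlying functor is the identity and whose module constraint is assembled from the braiding of $\mathcal{C}$, then verify the pentagon axiom by repeated application of the Yang-Baxter relation \eqref{eq:reidemeister3}, using the twist of the modular tensor category to reconcile braidings of opposite signs as in Lemma \ref{lem:equivalence-braidings}. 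I expect the hardest step to be the careful bookkeeping of balancing relations in the triple relative Deligne product, which is presumably why the proof in the paper leans on the graphical algebra of dual bases developed in Appendix \ref{appendix:algebras-graphical-notation}.
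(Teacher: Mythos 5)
Your reduction strategy coincides with the paper's: the compatibility of $\boxtimes$ with the relative Deligne product, together with the weak unitality of Proposition \ref{prop:unitality}, localizes each Coxeter relation to the two or three adjacent factors on which the generators act nontrivially. Your arguments for the involution relation $\varPi_j \boxtimes_n\varPi_j\simeq\mathcal{C}^n$ and for the commutation relation with $|i-j|>1$ are correct and are essentially verbatim Parts 1 and 2 of the paper's proof, resting on the Fuchs--Schweigert equivalence $\mathcal{P}\boxtimes_2\mathcal{P}\simeq\mathcal{C}^2$ and unitality respectively. You also correctly isolate the braid relation as the only genuinely new content and reduce it to the $\mathcal{C}^3$-statement of Proposition \ref{prop:rep-Sn-3rd}.

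The gap is that for the braid relation your text is a plan, not a proof: you name two candidate strategies and stop. The diagrammatic one --- an identity-on-objects module functor with a braiding-built constraint, as in Proposition \ref{prop:equivalent-2d} --- does not transfer directly, because the two triple relative Deligne products are no longer the same underlying category $\mathcal{C}^{n-1}$ equipped with two module structures; to even describe them one must first pass to module categories over algebra objects, which forces the algebraic route. On that route, the step ``exhibit an explicit Morita equivalence'' understates what is needed and done: the paper proves the stronger claim that $A_1^{\op}\otimes A_2\otimes A_1$ and $A_2^{\op}\otimes A_1\otimes A_2$ are \emph{isomorphic as algebras} in $\mathcal{C}^3$. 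The isomorphism $f$ of Equation \eqref{eq:def-f} is built from dual bases of fusion Hom-spaces, factored through an intermediate algebra $C$ with a hybrid multiplication as $f=f_2\circ f_1$, and each $f_i$ is shown to be an algebra homomorphism by a delicate sequence of basis changes (between $\{\xi\}$ and $\{\zeta\}$ and their $\kappa$-,~$\eta$-,~$\vartheta$-duals) in the graphical calculus of Appendix \ref{appendix:algebras-graphical-notation}. None of this is visible from the ingredients in your sketch, and it is the real work of the theorem.
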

\textbf{Proof.}\\
The proof consists of three parts.

\proofpart{1}{For all $j=1,\ldots,n-1$, there is a bimodule equivalence $\varPi_j \boxtimes_n \varPi_j \simeq \mathcal{C}^n$.}
\label{proofpart:rep1}

For simplicity we write $\mathcal{C}^{k}$ for the trivial $\mathcal{C}^{k}$-bimodule category. The category $\mathcal{C}^{n}$ can be decomposed into $\mathcal{C}^{j-1} \boxtimes \mathcal{C}^{2} \boxtimes \mathcal{C}^{n-j-1}$, thus $\boxtimes_n = \boxtimes_{\mathcal{C}^{j-1} \boxtimes \mathcal{C}^{2} \boxtimes \mathcal{C}^{n-j-1}}$. (If $j=1$ respectively $j=n-1$ then $\mathcal{C}^{j-1}$ respectively $\mathcal{C}^{n-j-1}$ is omitted.) Accordingly,
\begin{gather*}
\varPi_j \boxtimes_n \varPi_j = (\mathcal{C}^{j-1} \boxtimes \mathcal{P} \boxtimes \mathcal{C}^{n-j-1}) \boxtimes_n (\mathcal{C}^{j-1} \boxtimes \mathcal{P} \boxtimes \mathcal{C}^{n-j-1}) \simeq \\ \simeq (\mathcal{C}^{j-1} \boxtimes_{j-1} \mathcal{C}^{j-1}) \boxtimes (\mathcal{P} \boxtimes_2 \mathcal{P}) \boxtimes (\mathcal{C}^{n-j-1} \boxtimes_{n-j-1} \mathcal{C}^{n-j-1}) \simeq \\ \simeq \mathcal{C}^{j-1} \boxtimes \mathcal{C}^{2} \boxtimes \mathcal{C}^{n-j-1} = \mathcal{C}^{n} \, \, ,
\end{gather*}
where we use that the relative Deligne product factorizes under the Deligne product, and that there exists an equivalence $\mathcal{P} \boxtimes_2 \mathcal{P} \simeq \mathcal{C}^{2}$ of bimodule categories by the results of \cite{fuchs-schweigert}. Hence all $\varPi_j$ are self-inverse; in particular they are invertible bimodule categories.

\proofpart{2}{For all $i,j = 1,\ldots,n-1$ with $|i-j|>1$ there is a bimodule equivalence $\varPi_i \boxtimes_n \varPi_j \simeq \varPi_j \boxtimes_n \varPi_i$.}
\label{proofpart:rep2}

Without loss of generality we choose $i>j+1$. Then we have
\begin{gather*}
\varPi_i = \mathcal{C}^{i-1} \boxtimes \mathcal{P} \boxtimes \mathcal{C}^{n-i-1} = \mathcal{C}^{j-1} \boxtimes \mathcal{C}^{2} \boxtimes \mathcal{C}^{i-j-2} \boxtimes \mathcal{P} \boxtimes \mathcal{C}^{n-i-1} \, \, ,
\end{gather*}
since $i-j \geq 2$. Similarly,
\begin{gather*}
\varPi_j = \mathcal{C}^{j-1} \boxtimes \mathcal{P} \boxtimes \mathcal{C}^{n-j-1} = \mathcal{C}^{j-1} \boxtimes \mathcal{P} \boxtimes \mathcal{C}^{i-j-2} \boxtimes \mathcal{C}^{2} \boxtimes \mathcal{C}^{n-i-1} \, \, ,
\end{gather*}
since $n-j-1 > n-i-1$. (In both decompositions $\mathcal{C}^{i-j-2}$ is omitted if $i-j=2$.) Note that this makes sense, as from $1 \leq i \leq n-1$ we have $j \leq i-2 \leq n-3$, and hence $n-j-1 \geq n-n+3-1=2$. Now we can decompose the relative Deligne product into several parts,
\begin{gather*}
\boxtimes_n = \boxtimes_{\mathcal{C}^{j-1} \boxtimes \mathcal{C}^{2} \boxtimes \mathcal{C}^{i-j-2} \boxtimes \mathcal{C}^{2} \boxtimes \mathcal{C}^{n-i-1}} \, \, ,
\end{gather*}
and therefore
\begin{gather*}
\varPi_i \boxtimes_n \varPi_j = (\mathcal{C}^{j-1} \boxtimes_{j-1} \mathcal{C}^{j-1}) \boxtimes (\mathcal{C}^{2} \boxtimes_2 \mathcal{P}) \boxtimes \\ \boxtimes (\mathcal{C}^{i-j-2} \boxtimes_{i-j-2} \mathcal{C}^{i-j-2}) \boxtimes (\mathcal{P} \boxtimes_2 \mathcal{C}^{2}) \boxtimes (\mathcal{C}^{n-i-1} \boxtimes_{n-i-1} \mathcal{C}^{n-i-1}) \simeq \\ \simeq \mathcal{C}^{j-1} \boxtimes \mathcal{P} \boxtimes \mathcal{C}^{i-j-2} \boxtimes \mathcal{P} \boxtimes \mathcal{C}^{n-i-1} \simeq \varPi_j \boxtimes_n \varPi_i \, \, .
\end{gather*}
Here we used the fact that $\mathcal{P}$ is a $\mathcal{C}^{2}$-bimodule category and hence there are equivalences $\mathcal{C}^{2} \boxtimes_2 \mathcal{P} \simeq \mathcal{P} \simeq \mathcal{P} \boxtimes_2 \mathcal{C}^{2}$ of bimodule categories (Proposition \ref{prop:unitality}).

\proofpart{3}{For all $j = 1,\ldots,n-2$ there is a bimodule equivalence $\varPi_j \boxtimes_n \varPi_{j+1} \boxtimes_n \varPi_j \simeq \varPi_{j+1} \boxtimes_n \varPi_{j} \boxtimes_n \varPi_{j+1}$.}

By definition we can write
\begin{gather*}
\varPi_j = \mathcal{C}^{j-1} \boxtimes \mathcal{P} \boxtimes \mathcal{C}^{n-j-1} = \mathcal{C}^{j-1} \boxtimes \mathcal{P} \boxtimes \mathcal{C} \boxtimes \mathcal{C}^{n-j-2} \, \, , \\
\varPi_{j+1} = \mathcal{C}^{j} \boxtimes \mathcal{P} \boxtimes \mathcal{C}^{n-j-2} = \mathcal{C}^{j-1} \boxtimes \mathcal{C} \boxtimes \mathcal{P} \boxtimes \mathcal{C}^{n-j-2} \, \, ,
\end{gather*}
thus
\begin{gather*}
\varPi_j \boxtimes_n \varPi_{j+1} \boxtimes_n \varPi_j \simeq \mathcal{C}^{j-1} \boxtimes [(\mathcal{P}\boxtimes\mathcal{C}) \boxtimes_3 (\mathcal{C}\boxtimes\mathcal{P}) \boxtimes_3 (\mathcal{P}\boxtimes\mathcal{C})] \boxtimes \mathcal{C}^{n-j-2} \, \, , \\
\varPi_{j+1} \boxtimes_n \varPi_{j} \boxtimes_n \varPi_{j+1} \simeq \mathcal{C}^{j-1} \boxtimes [(\mathcal{C}\boxtimes\mathcal{P}) \boxtimes_3 (\mathcal{P}\boxtimes\mathcal{C}) \boxtimes_3 (\mathcal{C}\boxtimes\mathcal{P})] \boxtimes \mathcal{C}^{n-j-2} \, \, .
\end{gather*}
The theorem is proven if we show that the two $\mathcal{C}^3$-bimodule categories enclosed in square brackets are equivalent. To summarize, it remains to show the following assertion:
\begin{proposition}
\label{prop:rep-Sn-3rd}
Let $\mathcal{P}$ be the $\mathcal{C}\boxtimes\mathcal{C}$-bimodule category from Theorem \ref{thm:fs-thm}, view $\mathcal{C}$ as the trivial $\mathcal{C}$-bimodule category, and define the $\mathcal{C}^{3}$-bimodule categories $\mathcal{P}_1 \defeq  \mathcal{P}\boxtimes\mathcal{C}$ and $\mathcal{P}_2 \defeq  \mathcal{C}\boxtimes\mathcal{P}$. Then there is an equivalence of $\mathcal{C}^{3}$-bimodule categories
\begin{gather*}
\mathcal{P}_1 \boxtimes_3 \mathcal{P}_2 \boxtimes_3 \mathcal{P}_1 \simeq \mathcal{P}_2 \boxtimes_3 \mathcal{P}_1 \boxtimes_3 \mathcal{P}_2 \, \, .
\end{gather*}
\end{proposition}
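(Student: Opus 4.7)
The plan is to construct an explicit equivalence of left $\mathcal{C}^{3}$-module categories in the spirit of Proposition \ref{prop:equivalent-2d}, and then upgrade it to a bimodule equivalence via Proposition \ref{prop:from-module-functor-to-bimodule-functor}. The underlying functor will be (induced from) the identity on the common underlying category of the two sides, and the module-functor constraint will be a natural isomorphism built out of braidings that implements, on the ambient three strands of $\mathcal{C}^{3}$, the braid-group identity underlying $\tau_1\tau_2\tau_1 = \tau_2\tau_1\tau_2$ in $S_3$. After translating into string diagrams, the pentagon axiom for this constraint will reduce to a diagrammatic identity provable by the Yang--Baxter equation (Equation \eqref{eq:reidemeister3}) and the second Reidemeister move (Equation \eqref{eq:reidemeister2}), exactly as in the equivalences of Proposition \ref{prop:equivalent-2d}.

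Concretely, I would first unpack both triple relative Deligne products explicitly. Each $\mathcal{P}_i$ has underlying category $\mathcal{C}^{\boxtimes 2}$, with left and right $\mathcal{C}^{3}$-actions determined by the module structure of $\mathcal{P}$ (Theorem \ref{thm:fs-thm}) in one of two adjacent slots. Using the balancing relation of $\boxtimes_3$ together with the known equivalence $\mathcal{P}\boxtimes_2\mathcal{P}\simeq\mathcal{C}^{\boxtimes 2}$ (Equation \eqref{eq:equivalence2}) applied to the pair of $\mathcal{P}$-factors that sits over the same $\mathcal{C}^{\boxtimes 2}$-sub-slot of $\mathcal{C}^3$, one identifies both $\mathcal{P}_1\boxtimes_3\mathcal{P}_2\boxtimes_3\mathcal{P}_1$ and $\mathcal{P}_2\boxtimes_3\mathcal{P}_1\boxtimes_3\mathcal{P}_2$ as $\mathcal{C}^{3}$-bimodule categories whose underlying category is equivalent to $\mathcal{C}^{\boxtimes 2}$, with left $\mathcal{C}^{3}$-action given, on $\boxtimes$-factorizable objects, by an arrangement of the three input slots $X_1,X_2,X_3$ on the two output slots that realizes the reversal permutation $(1\,3)\in S_3$.

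The two module associativity isomorphisms differ only in how the resulting braidings are ordered: the left-hand side inherits one copy of $\psi$ from each $\varPi_j$ spliced through the balancing data of $\boxtimes_3$, while the right-hand side gives a different ordering of the same elementary braidings on the same strands. The equivalence is then the module functor $(\mathrm{id}, F)$, where $F$ is constructed from products of $c$ and $c^{-1}$ so that, read as a braid on the three ambient $\mathcal{C}$-strands, it intertwines the two associativity isomorphisms. Verifying the pentagon identity amounts, after spelling out the two sides as string diagrams on five strands (three from $\mathcal{C}^{3}\otimes\mathcal{C}^{3}$, two from the underlying $\mathcal{C}^{\boxtimes 2}$), to repeated application of the third Reidemeister move along the lines of the calculations in Proposition \ref{prop:equivalent-2d}.

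The main obstacle will be the combinatorial bookkeeping: unpacking the triple relative Deligne products and the associated balancing isomorphisms produces a long string of braidings on five strands, and the pentagon identity becomes a braid equation involving a considerable number of crossings. A cleaner alternative, which I would pursue in parallel as a sanity check, is the algebraic route: by Theorem \ref{thm:egno-mod-alg} realize each $\mathcal{P}_i$ as modules over an algebra $A_i\in\mathcal{C}^{3}$ (obtained by Deligne-tensoring the Fuchs--Schweigert algebra $A_{\mathcal{P}}$ with the trivial algebra in the inactive slot), use Proposition \ref{appendix-prop-relative-deligne} to identify the two sides with modules over $A_1\otimes A_2\otimes A_1$ and $A_2\otimes A_1\otimes A_2$, and construct a Morita equivalence between these two algebras via an explicit invertible bimodule built from braidings. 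In either approach, the ultimate source of the equivalence is the Yang--Baxter equation in $\mathcal{C}$.
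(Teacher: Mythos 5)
Your second (``sanity-check'') route is essentially the one the paper follows: realize $\mathcal{P}\boxtimes\mathcal{C}$ and $\mathcal{C}\boxtimes\mathcal{P}$ as module categories over algebras $A_1 = A_{\mathcal{P}}\boxtimes\mathds{1}$ and $A_2 = \mathds{1}\boxtimes A_{\mathcal{P}}$ in $\mathcal{C}^3$, then use Proposition \ref{appendix-prop-relative-deligne} to reduce the bimodule equivalence to a statement about the corresponding tensor-product algebras. However, two things you leave open are in fact where the entire difficulty lies.

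First, there is a technical correction: when the leftmost factor of a triple $\boxtimes_3$-product is converted from a category of $A_1$-$B$-bimodules to a category of right modules, one picks up an opposite algebra, so the algebras in question are $A_1^{\op}\otimes A_2\otimes A_1$ and $A_2^{\op}\otimes A_1\otimes A_2$, not $A_1\otimes A_2\otimes A_1$ and $A_2\otimes A_1\otimes A_2$. Losing the $\op$ is not cosmetic, since it changes the multiplication (by a braiding) and therefore which dualities appear in any candidate isomorphism.

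Second and more substantially, you propose ``constructing a Morita equivalence between these two algebras via an explicit invertible bimodule built from braidings'' but do not say what that bimodule is or why it would be invertible. The paper does something sharper and less open-ended: it constructs an explicit algebra \emph{isomorphism} $f\colon A_1^{\op}\otimes A_2\otimes A_1 \to A_2^{\op}\otimes A_1\otimes A_2$, factored through an intermediate algebra $C$ as $f = f_2\circ f_1$, and verifies the two pieces are algebra homomorphisms by an extended string-diagram computation that hinges on judicious changes of dual basis in the Hom spaces $\Hom(i\otimes j,k)$ (the material of Appendix \ref{appendix:algebras-graphical-notation}). Exhibiting this isomorphism and checking compatibility with the multiplications, rather than Yang--Baxter alone, is the actual content of the proof; your sketch stops before that work begins.

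Your primary (``direct module functor'') route also has a specific flaw. You want to invoke the equivalence $\mathcal{P}\boxtimes_2\mathcal{P}\simeq\mathcal{C}^{\boxtimes 2}$ on ``the pair of $\mathcal{P}$-factors that sits over the same $\mathcal{C}^{\boxtimes 2}$-sub-slot,'' but in $\mathcal{P}_1\boxtimes_3\mathcal{P}_2\boxtimes_3\mathcal{P}_1$ the two copies of $\mathcal{P}_1$ are not adjacent over any common $\mathcal{C}^{\boxtimes 2}$-slot -- their active slots are $\{1,2\}$ and they are separated by a $\mathcal{P}_2$ whose active slot $\{2,3\}$ overlaps both. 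You cannot cancel them without first commuting $\mathcal{P}_1$ past $\mathcal{P}_2$ inside $\boxtimes_3$, which is exactly what Proposition \ref{prop:rep-Sn-3rd} is asserting. So that step, as written, is circular. The direct approach is not hopeless, but its substance would have to be computing the balanced tensor products honestly, at which point you are effectively doing the algebraic computation anyway.
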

We reemphasize that different bracketings of $\mathcal{P}_1 \boxtimes_3 \mathcal{P}_2 \boxtimes_3 \mathcal{P}_1$ lead to equivalent bimodule categories (Proposition \ref{prop:assoc-rel-deligne}), hence it does not matter -- up to canonical equivalence -- if we interpret it as $(\mathcal{P}_1 \boxtimes_3 \mathcal{P}_2) \boxtimes_3 \mathcal{P}_1$ or $\mathcal{P}_1 \boxtimes_3 (\mathcal{P}_2 \boxtimes_3 \mathcal{P}_1)$. Proving this proposition is quite extensive and needs some preparation. \bigskip

From the results of \cite{fuchs-schweigert,bfrs} we know that $\mathcal{P}$ is equivalent to the category $\Mod[r]{A_{\mathcal{P}}} (\mathcal{C}^2)$ of modules over an algebra object $A_{\mathcal{P}} \in \mathcal{C}^2$ (Theorem \ref{thm:egno-mod-alg}). Expressed in terms of representatives $\{X_i\}_{i\in\mathcal{I}}$ of the isomorphism classes of simple objects in $\mathcal{C}$ (where $\mathcal{I}$ is some index set), $A_{\mathcal{P}}$ takes the form
\begin{gather*}
A_{\mathcal{P}} = \bigoplus\limits_{i\in\mathcal{I}} X_i^\ast \boxtimes X_i \, \, .
\end{gather*}
(For the basic background on algebra and module objects in monoidal categories we refer to Appendix \ref{appendix:cats-of-modules-algebras}.) Furthermore, one has $\mathcal{C} \simeq \Mod[r]{\mathds{1}_{\mathcal{C}}} (\mathcal{C})$ as $\mathcal{C}$-bimodule categories. Then by \cite[Proposition 3.7]{dsps} there are equivalences
\begin{gather*}
\mathcal{P}\boxtimes\mathcal{C} \simeq \Mod[r]{(A_{\mathcal{P}} \boxtimes \mathds{1}_{\mathcal{C}})} (\mathcal{C}^3) \, \, , \, \, \mathcal{C}\boxtimes\mathcal{P} \simeq \Mod[r]{(\mathds{1}_{\mathcal{C}} \boxtimes A_{\mathcal{P}})} (\mathcal{C}^3)
\end{gather*}
of $\mathcal{C}^3$-bimodule categories. From now on, let us write
\begin{gather*}
A_1 \defeq  A_{\mathcal{P}} \boxtimes \mathds{1}_{\mathcal{C}} \, \, , \, \, A_2 \defeq  \mathds{1}_{\mathcal{C}} \boxtimes A_{\mathcal{P}}
\end{gather*}
for short. The algebra structures of these algebras will be discussed in the next subsection. Realizing the relative Deligne product via bimodules (Proposition \ref{appendix-prop-relative-deligne}) one obtains
\begin{gather*}
\mathcal{P}_1 \boxtimes_3 \mathcal{P}_2 \simeq A_1\mh\Bimod\mh A_2 (\mathcal{C}^3) \, \, , 
\end{gather*}
and therefore
\begin{gather*}
(\mathcal{P}_1 \boxtimes_3 \mathcal{P}_2) \boxtimes_3 \mathcal{P}_1 \simeq A_1\mh\Bimod\mh (A_2 \otimes A_1) (\mathcal{C}^3) \simeq \Mod[r]{(A_1^{\op} \otimes A_2 \otimes A_1)} (\mathcal{C}^3)
\end{gather*}
(see Proposition \ref{def:op-alg} and Proposition \ref{prop:alg-structure-on-tensor-product} for the definition of the opposite algebra and the tensor product of algebras). Similarly,
\begin{gather*}
(\mathcal{P}_2 \boxtimes_3 \mathcal{P}_1) \boxtimes_3 \mathcal{P}_2 \simeq \Mod[r]{(A_2^{\op} \otimes A_1 \otimes A_2)} (\mathcal{C}^3) \, \, .
\end{gather*}
To show the desired equivalence of these two bimodule categories we need to prove that $A_1^{\op} \otimes A_2 \otimes A_1$ and $A_2^{\op} \otimes A_1 \otimes A_2$ are Morita equivalent algebras (Definition \ref{def:morita-eq}) in $\mathcal{C}^3$. In fact, we will show that these two algebras are even isomorphic.

\subsection{Algebra structures}
\label{sec:algebra-structures}

To accomplish this goal it is beneficial to employ the graphical calculus of string diagrams, using the conventions of Section \ref{sec:prelim} and Appendix \ref{appendix:algebras-graphical-notation}. In addition we graphically separate the factors of a Deligne product of objects by thickened vertical lines. Using that the Deligne product is associative up to equivalence (and hence associative up to isomorphism on the level of algebras), we can forget about bracketings of multiple factors. For morphisms between objects in $\mathcal{C}^n$, $n\in\mathds{N}$, whose factors are graphically separated by thickened vertical lines, these lines just denote the tensor product of $\mathds{k}$-vector spaces (for $\mathds{k}$-linear $\mathcal{C}$).\medskip

First we specify the algebra structures of all appearing algebra objects. For the rest of this section, let $\{X_i\}_{i\in\mathcal{I}}$ be a collection of representatives of the isomorphism classes of simple objects in $\mathcal{C}$ with index set $\mathcal{I}$. To shorten the notation we will write $i,j,k,\ldots$ instead of $X_i,X_j,X_k,\ldots$. We choose a basis $\{ \alpha \}$ of the space $\Hom(i\otimes j,k)$ as in Equation \eqref{eq:basis-alpha}, where the different basis elements are notationally identified with the labels $\alpha=1,\ldots,N_{ij}^k$ for $N_{ij}^k = \dim(\Hom(i\otimes j,k)) \in \mathds{N}$. A basis of $\Hom(k,i \otimes j)$, which is dual to $\{ \alpha \}$, is given by $\{ \hat{\alpha} \}$ as defined in Equation \eqref{eq:dual-kappa}. For details and notation regarding the different incarnations of dual bases appearing in this section (which are all equivalent according to Lemma \ref{lem:duals-are-equivalent}) we refer to Appendix \ref{appendix:algebras-graphical-notation}. Primes $'$ and tildes $\tilde{\,}$ distinguish different simple objects and basis vectors, whereas stars $^\ast$, hats $\hat{\,}$ and checks $\check{\,}$ refer to the different notions of duality.

The algebra $A_{\mathcal{P}} \in \mathcal{C}^2$ introduced above is
\begin{gather}
A_{\mathcal{P}} = \bigoplus\limits_{i\in\mathcal{I}} X_i^\ast \boxtimes X_i \equiv \; \, \bigoplus\limits_{i\in\mathcal{I}} \; \, \, \begin{tikzpicture}[baseline=(current bounding box.center)]
\draw[-<-=.5] (-0.5,-0.5) node[right] {$\mathsmaller{i}\,$} to (-0.5,0.5);
\draw[line width=0.5mm] (0,-0.6) -- (0,0.6);
\draw[->-=.5] (0.5,-0.5) node[right] {$\mathsmaller{i}\,$} to (0.5,0.5);
\end{tikzpicture}
\label{eq:def-A_P}
\end{gather}
with multiplication $m_{A_{\mathcal{P}}}\colon A_{\mathcal{P}} \otimes A_{\mathcal{P}} \to A_{\mathcal{P}}$ given by \cite{bfrs}
\begin{gather*}
m_{A_{\mathcal{P}}} \, = \; \, \bigoplus\limits_{i,j,k\in\mathcal{I}} \sum\limits_{\alpha=1}^{N_{ij}^k} \; \, \, \begin{tikzpicture}[baseline=(current bounding box.center)] 
\draw[->-=.9,-<-=.1] (2,1.5) node[below left] (i2) {$\mathsmaller{i}$} .. controls +(0,-1) and +(0,-1) .. (2.8,1.5) node[below right] (j2) {$\mathsmaller{j}$};
\draw[-<-=.5] (3.6,0.75) to (3.6,2) node[below left] (k2) {$\mathsmaller{k}$};
\draw (2.4,0.75) .. controls +(0,-1.2) and +(0,-1.2) .. (3.6,0.75);
\draw[black,fill=white] (2.4,0.75) circle (0.285cm) node (hatalpha) {$\hat{\alpha}$};
\draw[-<-=.5] (0.4,-0.2) node[above left] (i3) {$\mathsmaller{i}$} to (0.4,0.75);
\draw (0.4,0.75) .. controls +(0,1) and +(0,1) .. (2,1.5);
\draw[-<-=.5] (1.2,-0.2) node[above left] (j3) {$\mathsmaller{j}$} to (1.2,0.75);
\draw[line width=1mm,white] (1.2,0.75) .. controls +(0,1) and +(0,1) .. (2.8,1.5);
\draw (1.2,0.75) .. controls +(0,1) and +(0,1) .. (2.8,1.5);
\draw[line width=0.5mm] (4.2,-0.3) -- (4.2,2.1);
\draw[-<-=.8,->-=.2] (5,-0.2) node[above left] (i) {$\mathsmaller{i}$} .. controls +(0,1.2) and +(0,1.2) .. (5.8,-0.2) node[above right] (j) {$\mathsmaller{j}$};
\draw[->-=.7] (5.4,0.75) to (5.4,2) node[below right] (k) {$\mathsmaller{k}$}; 
\draw[black,fill=white] (5.4,0.75) circle (0.285cm) node (alpha) {$\alpha$};
\end{tikzpicture} \; \, \, \equiv \\
\equiv \; \, \bigoplus\limits_{i,j,k\in\mathcal{I}} \sum\limits_{\alpha=1}^{N_{ij}^k} \; \, \,
\begin{tikzpicture}[baseline=(current bounding box.center)] 
\draw[->-=.9,-<-=.1] (0,0) node[above left] (j2) {$\mathsmaller{j}$} .. controls +(0,1) and +(0,1) .. (0.8,0) node[above right] (i2) {$\mathsmaller{i}$};
\draw[-<-=.6] (0.4,0.75) to (0.4,1.5) node[below right] (k2) {$\mathsmaller{k}$}; 
\draw[black,fill=white] (0.4,0.75) circle (0.285cm) node (hatalphadual) {$\hat{\alpha}^\ast$};
\draw (0,-0.8) .. controls +(0,0.25) and +(-0.15,0) .. 
       (0.4,-0.4) .. controls +(0.15,0) and +(0,-0.25) .. (0.8,0);
\draw[line width=1mm,white] (0.8,-0.8) .. controls +(0,0.25) and +(0.15,0) ..
       (0.4,-0.4) .. controls +(-0.15,0) and +(0,-0.25) .. (0,0);
\draw (0.8,-0.8) .. controls +(0,0.25) and +(0.15,0) ..
       (0.4,-0.4) .. controls +(-0.15,0) and +(0,-0.25) .. (0,0);
\draw[line width=0.5mm] (1.6,-0.9) -- (1.6,1.6);
\draw[-<-=.8,->-=.2] (2.4,-0.8) node[above left] (i) {$\mathsmaller{i}$} .. controls +(0,1.5) and +(0,1.5) .. (3.2,-0.8) node[above right] (j) {$\mathsmaller{j}$};
\draw[->-=.7] (2.8,0.65) to (2.8,1.5) node[below right] (k) {$\mathsmaller{k}$}; 
\draw[black,fill=white] (2.8,0.35) circle (0.285cm) node (alpha) {$\alpha$};
\end{tikzpicture} \; \, \, .
\end{gather*}
Here, $\hat{\alpha}^\ast$ is in fact $(\hat{\alpha})^\ast$, the categorical dual morphism (Equation \eqref{eq:cat-dual-morphism-left}) of $\hat{\alpha}$, but by Lemma \ref{lem:cat-dual-kappa-dual} this is equal to $\widehat{\alpha^\ast}$, the dual basis vector of the categorical dual morphism $\alpha^\ast$ of $\alpha$. One can show that $m_{A_{\mathcal{P}}}\colon A_{\mathcal{P}} \otimes A_{\mathcal{P}} \to A_{\mathcal{P}}$ is independent of the choice of basis. The unit morphism $e_{A_{\mathcal{P}}}\colon \mathds{1}_{\mathcal{C}^2} \to A_{\mathcal{P}}$ is the embedding morphism of $\mathds{1}_{\mathcal{C}} \boxtimes \mathds{1}_{\mathcal{C}}$ as a subobject of $A_{\mathcal{P}}$. By the results of \cite{bfrs}, $A_{\mathcal{P}}$ indeed is an algebra object in $\mathcal{C}^2$. Consequently, $A_1 \defeq  A_{\mathcal{P}} \boxtimes \mathds{1}_{\mathcal{C}}$ and $A_2 \defeq  \mathds{1}_{\mathcal{C}} \boxtimes A_{\mathcal{P}}$ are algebra objects in $\mathcal{C}^3$, with obvious algebra structures: for example, the multiplication morphism for $A_1$ is given by $m_{A_{\mathcal{P}}} \otimes_{\mathds{k}} \id_{\mathds{1}_{\mathcal{C}}}$ (here we implicitly use the imposed strictness for $\mathds{1}_{\mathcal{C}} \otimes \mathds{1}_{\mathcal{C}} = \mathds{1}_{\mathcal{C}}$), and the unit morphism is the embedding morphism of $(\mathds{1}_{\mathcal{C}} \boxtimes \mathds{1}_{\mathcal{C}}) \boxtimes \mathds{1}_{\mathcal{C}}$ as a subobject of $A_1$.\medskip

Now we are ready to specify the algebra structures of $A \defeq  A_1^{\op} \otimes A_2 \otimes A_1$ and $B \defeq  A_2^{\op} \otimes A_1 \otimes A_2$ in $\mathcal{C}^3$ (using Proposition \ref{def:op-alg} and Proposition \ref{prop:alg-structure-on-tensor-product}). It is evident from Equation \eqref{eq:def-A_P} that
\begin{gather}
A = \bigoplus\limits_{i,j,k\in\mathcal{I}} \; \, \, 
 \; \, \, ,
\label{eq:def-f}
\end{gather}
which is basis independent\footnote{Expressions of the form $\sum \alpha \otimes_{\mathds{k}} \hat{\alpha}$, where $\{ \hat{\alpha} \}$ is dual to $\{ \alpha \}$ with respect to some non-degenerate pairing, are basis independent. This can be shown directly by mimicking the proof of the statement from basic linear algebra: for a $\mathds{k}$-vector space $V$, a basis $\{ \alpha_i \}_i$ of $V$ and its dual basis $\{ \alpha^i \}_i$ of $V^\ast \defeq  \Hom_{\mathds{k}} (V,\mathds{k})$, the vector $\sum_i \alpha_i \otimes_{\mathds{k}} \alpha^i \in V \otimes_{\mathds{k}} V^\ast$ is basis independent.} since the basis $\{ \alpha \}$ is dual to $\{ \check{\alpha} \}$ (and $\{ \beta \}$ is dual to $\{ \check{\beta} \}$) with respect to the pairing $\vartheta$ defined in Equation \eqref{eq:def-pairing-theta}.

To subdivide the problem, we set
\begin{gather}
f_1 \defeq  \, \bigoplus \sum \; \, \,
\begin{tikzpicture}[baseline=(current bounding box.center)]
\draw[->-=.9,-<-=.1] (0,0) node[below] {$\mathsmaller{i}$} .. controls +(0,1) and +(0,1) .. (0.8,0) node[below] {$\mathsmaller{k}$};
\draw[-<-=.6] (0.4,0.75) to (0.4,1.5) node[below right] {$\mathsmaller{\tilde{k}}$}; 
\draw[black,fill=white] (0.4,0.75) circle (0.285cm) node {$\alpha$};
\draw[line width=0.5mm] (1.6,-0.4) -- (1.6,1.6);
\draw[->-=.5] (2.4,0) node[below] {$\mathsmaller{i}$} to (2.4,1.5);
\draw[-<-=.5] (3.2,0) node[below] {$\mathsmaller{j}$} to (3.2,1.5);
\draw[-<-=.8,-<-=.1] (4,1.5) node[right] {$\mathsmaller{\tilde{k}}$} .. controls +(0,-1) and +(0,-1) .. (4.8,1.5) node[right] {$\mathsmaller{i}$};
\draw[-<-=.5] (4.4,0.75) to (4.4,0) node[below] (k2) {$\mathsmaller{k}$}; 
\draw[black,fill=white] (4.4,0.75) circle (0.285cm) node {$\check{\alpha}$};
\draw[line width=0.5mm] (5.6,-0.4) -- (5.6,1.6);
\draw[->-=.5] (6.4,0) node[below] {$\mathsmaller{j}$} to (6.4,1.5);
\end{tikzpicture}
\label{eq:def-f1}
\end{gather}
and
\begin{gather}
f_2 \defeq  \, \bigoplus \sum \; \, \,
\begin{tikzpicture}[baseline=(current bounding box.center)]
\draw[-<-=.5] (0.8,0) node[below] {$\mathsmaller{\tilde{k}}$} to (0.8,1.5);
\draw[line width=0.5mm] (1.6,-0.4) -- (1.6,1.6);
\draw[->-=.9,->-=.2] (2.4,0) node[below] {$\mathsmaller{i}$} .. controls +(0,1) and +(0,1) .. (3.2,0) node[below] {$\mathsmaller{j}$};
\draw[-<-=.6] (2.8,0.75) to (2.8,1.5) node[below right] {$\mathsmaller{\tilde{j}}$}; 
\draw[black,fill=white] (2.8,0.75) circle (0.285cm) node {$\beta$};
\draw[->-=.5] (4,0) node[below] {$\mathsmaller{\tilde{k}}$} to (4,1.5);
\draw[-<-=.5] (4.8,0) node[below] {$\mathsmaller{i}$} to (4.8,1.5);
\draw[line width=0.5mm] (5.6,-0.4) -- (5.6,1.6);
\draw[->-=.9,-<-=.1] (6.4,1.5) node[right] {$\mathsmaller{\tilde{j}}$} .. controls +(0,-1) and +(0,-1) .. (7.2,1.5) node[right] {$\mathsmaller{i}$};
\draw[-<-=.5] (6.8,0.75) to (6.8,0) node[below] (k2) {$\mathsmaller{j}$}; 
\draw[black,fill=white] (6.8,0.75) circle (0.285cm) node {$\check{\beta}$};
\end{tikzpicture} \; \, \, .
\label{eq:def-f2}
\end{gather}
This yields a decomposition of the morphism $f$ into $f_1 \colon A \to C$ and $f_2 \colon C \to A$, such that $f=f_2 \circ f_1$, for
\begin{gather}
C \defeq  \bigoplus\limits_{i,j,\tilde{k}\in\mathcal{I}} \; \, \,
\begin{tikzpicture}[baseline=(current bounding box.center)]
\draw[-<-=.5] (-2.5,-0.5) node[right] {$\mathsmaller{\tilde{k}}$} to (-2.5,0.5);
\draw[line width=0.5mm] (-2,-0.6) -- (-2,0.6);
\draw[->-=.5] (-1.5,-0.5) node[right] {$\mathsmaller{i}$} to (-1.5,0.5);
\draw[-<-=.5] (-1,-0.5) node[right] {$\mathsmaller{j}$} to (-1,0.5);
\draw[->-=.5] (-0.5,-0.5) node[right] {$\mathsmaller{\tilde{k}}$} to (-0.5,0.5);
\draw[-<-=.5] (0,-0.5) node[right] {$\mathsmaller{i}$} to (0,0.5);
\draw[line width=0.5mm] (0.5,-0.6) -- (0.5,0.6);
\draw[->-=.5] (1,-0.5) node[right] {$\mathsmaller{j}$} to (1,0.5);
\end{tikzpicture} \; \, \, .
\label{eq:def-C}
\end{gather}
Now we use the following strategy to prove that $f\colon A\to B$ is an algebra isomorphism:
\begin{itemize}
\item Check that $f$ is an ordinary isomorphism between the objects $A,B \in \mathcal{C}^3$.
\item Equip $C$ with an algebra structure.
\item Show that $f_1 \colon A \to C$ and $f_2 \colon C \to B$ are algebra homomorphisms (Definition \ref{def:algebra-homomorphism}). Then also $f=f_2 \circ f_1$ is an algebra homomorphism, and thus an isomorphism $A \cong B$ of algebra objects is established by the first step.
\end{itemize}

Indeed, using Equation \eqref{eq:dual-kappa} and Equation \eqref{eq:appendix-completeness-relation}, it is immediate that $f$ is an isomorphism with inverse
\begin{gather*}
f^{-1} = \, \bigoplus \sum \; \, \, \begin{tikzpicture}[baseline=(current bounding box.center)]
\draw[-<-=.8,->-=.2] (0,1.5) node[right] {$\mathsmaller{i}$} .. controls +(0,-1) and +(0,-1) .. (0.8,1.5) node[right] {$\mathsmaller{k}$};
\draw[->-=.8] (0.4,0.75) to (0.4,0) node[below] (k2) {$\mathsmaller{\tilde{k}}$}; 
\draw[black,fill=white] (0.4,0.75) circle (0.285cm) node {$\hat{\alpha}$};
\draw[line width=0.5mm] (1.6,-0.4) -- (1.6,1.6);
\draw[-<-=.8,-<-=.1] (2.4,1.5) node[right] {$\mathsmaller{i}$} .. controls +(0,-1) and +(0,-1) .. (3.2,1.5) node[right] {$\mathsmaller{j}$};
\draw[->-=.8] (2.8,0.75) to (2.8,0) node[below] (k2) {$\mathsmaller{\tilde{j}}$}; 
\draw[black,fill=white] (2.8,0.75) circle (0.285cm) node {$\hat{\beta}$};
\draw[->-=.9,->-=.2] (4,0) node[below] {$\mathsmaller{\tilde{k}}$} .. controls +(0,1) and +(0,1) .. (4.8,0) node[below] {$\mathsmaller{i}$};
\draw[->-=.8] (4.4,0.75) to (4.4,1.5) node[below right] {$\mathsmaller{k}$}; 
\draw[black,fill=white] (4.4,0.75) circle (0.285cm) node {$\widehat{\check{\alpha}}$};
\draw[line width=0.5mm] (5.6,-0.4) -- (5.6,1.6);
\draw[-<-=.8,->-=.2] (6.4,0) node[below] {$\mathsmaller{\tilde{j}}$} .. controls +(0,1) and +(0,1) .. (7.2,0) node[below] {$\mathsmaller{i}$};
\draw[->-=.8] (6.8,0.75) to (6.8,1.5) node[right] {$\mathsmaller{j}$}; 
\draw[black,fill=white] (6.8,0.75) circle (0.31cm) node {$\widehat{\check{\beta}}$};
\end{tikzpicture} \; \, \, .
\end{gather*}

For the second point of the above list, we equip $C$ with a ``reasonable'' algebra structure. The object $C$, decomposed into simple objects as in Equation \eqref{eq:def-C}, consists of invariant factors: the objects $\tilde{k}^\ast$, $\tilde{k}$ and $i^\ast$ are the same as in $B$ (Equation \eqref{eq:def-B}) in the sense that $f_2$ acts as identity on these objects (where we view $C$ as the source object of the morphism $f_2 \colon C\to B$). Similarly, $i$, $j^\ast$ and $j$ are the same as in $A$ (Equation \eqref{eq:def-A}) in the sense that $f_1$ acted as identity on these objects (where we view $C$ as the target object of the morphism $f_1 \colon A\to C$). Hence a ``reasonable'' algebra structure for $C$ should respect the algebra structures of $A$ and $B$ in the decomposition of $C$ into simple objects. This heuristic observation motivates to define a morphism $m_C \colon C\otimes C \to C$ as
\begin{gather*}
m_C \,=\bigoplus \sum \; \, \, 
\begin{tikzpicture}[baseline=(current bounding box.center)]
\draw[-<-=.1] (1.6,0) node[below] {$\mathsmaller{\tilde{k}}$} to (1.6,0.8);
\draw[-<-=.1] (2.4,0) node[below] {$\mathsmaller{\tilde{k}'}$} to (2.4,0.8);
\draw (1.6,0.8) .. controls +(0,0.25) and +(-0.15,0) .. 
       (2,1.2) .. controls +(0.15,0) and +(0,-0.25) .. (2.4,1.6);
\draw[line width=1mm,white] (2.4,0.8) .. controls +(0,0.25) and +(0.15,0) .. 
       (2,1.2) .. controls +(-0.15,0) and +(0,-0.25) .. (1.6,1.6);
\draw (2.4,0.8) .. controls +(0,0.25) and +(0.15,0) .. 
       (2,1.2) .. controls +(-0.15,0) and +(0,-0.25) .. (1.6,1.6);
\draw[->-=.9,-<-=.1] (1.6,1.6) .. controls +(0,1) and +(0,1) .. (2.4,1.6);
\draw[-<-=.6] (2,2.35) to (2,4.7) node[below right] {$\mathsmaller{\tilde{k}''}$}; 
\draw[black,fill=white] (2,2.35) circle (0.285cm) node {$\hat{\beta}^{\ast}$};
\draw[line width=0.5mm] (3.2,-0.4) -- (3.2,4.8);
\draw[->-=.1] (4,0) node[below] {$\mathsmaller{i}$} to (4,2.4);
\draw[-<-=.1] (4.8,0) node[below] {$\mathsmaller{j}$} to (4.8,1.6);
\draw[->-=.2] (5.6,0) node[below] {$\mathsmaller{\tilde{k}}$} to (5.6,0.8);
\draw[-<-=.2] (8,0) node[below] {$\mathsmaller{j'}$} to (8,0.8);
\draw[->-=.1] (8.8,0) node[below] {$\mathsmaller{\tilde{k}'}$} to (8.8,1.6);
\draw[-<-=.2] (9.6,0) node[below] {$\mathsmaller{i'}$} to (9.6,2.4);
\draw[->-=.2] (7.2,0) node[below] {$\mathsmaller{i'}$} .. controls +(0,0.25) and +(0.15,0) ..
       (6.8,0.4) .. controls +(-0.15,0) and +(0,-0.25) .. (6.4,0.8);
\draw[line width=1mm,white] (6.4,0) .. controls +(0,0.25) and +(-0.15,0) .. 
       (6.8,0.4) .. controls +(0.15,0) and +(0,-0.25) .. (7.2,0.8);
\draw[-<-=.1] (6.4,0) node[below] {$\mathsmaller{i}$} .. controls +(0,0.25) and +(-0.15,0) .. 
       (6.8,0.4) .. controls +(0.15,0) and +(0,-0.25) .. (7.2,0.8);
\draw (6.4,0.8) .. controls +(0,0.25) and +(0.15,0) ..
       (6,1.2) .. controls +(-0.15,0) and +(0,-0.25) .. (5.6,1.6);
\draw[line width=1mm,white] (5.6,0.8) .. controls +(0,0.25) and +(-0.15,0) .. 
       (6,1.2) .. controls +(0.15,0) and +(0,-0.25) .. (6.4,1.6);
\draw (5.6,0.8) .. controls +(0,0.25) and +(-0.15,0) .. 
       (6,1.2) .. controls +(0.15,0) and +(0,-0.25) .. (6.4,1.6);
\draw (8,0.8) .. controls +(0,0.25) and +(0.15,0) ..
       (7.6,1.2) .. controls +(-0.15,0) and +(0,-0.25) .. (7.2,1.6);
\draw[line width=1mm,white] (7.2,0.8) .. controls +(0,0.25) and +(-0.15,0) .. 
       (7.6,1.2) .. controls +(0.15,0) and +(0,-0.25) .. (8,1.6);
\draw (7.2,0.8) .. controls +(0,0.25) and +(-0.15,0) .. 
       (7.6,1.2) .. controls +(0.15,0) and +(0,-0.25) .. (8,1.6);
\draw (8.8,1.6) .. controls +(0,0.25) and +(0.15,0) ..
       (8.4,2) .. controls +(-0.15,0) and +(0,-0.25) .. (8,2.4);
\draw[line width=1mm,white] (8,1.6) .. controls +(0,0.25) and +(-0.15,0) .. 
       (8.4,2) .. controls +(0.15,0) and +(0,-0.25) .. (8.8,2.4);
\draw (8,1.6) .. controls +(0,0.25) and +(-0.15,0) .. 
       (8.4,2) .. controls +(0.15,0) and +(0,-0.25) .. (8.8,2.4);
\draw (5.6,1.6) .. controls +(0,0.25) and +(0.15,0) ..
       (5.2,2) .. controls +(-0.15,0) and +(0,-0.25) .. (4.8,2.4);
\draw[line width=1mm,white] (4.8,1.6) .. controls +(0,0.25) and +(-0.15,0) .. 
       (5.2,2) .. controls +(0.15,0) and +(0,-0.25) .. (5.6,2.4);
\draw (4.8,1.6) .. controls +(0,0.25) and +(-0.15,0) .. 
       (5.2,2) .. controls +(0.15,0) and +(0,-0.25) .. (5.6,2.4);
\draw (7.2,1.6) .. controls +(0,0.25) and +(0.15,0) ..
       (6.8,2) .. controls +(-0.15,0) and +(0,-0.25) .. (6.4,2.4);
\draw[line width=1mm,white] (6.4,1.6) .. controls +(0,0.25) and +(-0.15,0) .. 
       (6.8,2) .. controls +(0.15,0) and +(0,-0.25) .. (7.2,2.4);
\draw (6.4,1.6) .. controls +(0,0.25) and +(-0.15,0) .. 
       (6.8,2) .. controls +(0.15,0) and +(0,-0.25) .. (7.2,2.4);
\draw (4.8,2.4) .. controls +(0,0.25) and +(0.15,0) ..
       (4.4,2.8) .. controls +(-0.15,0) and +(0,-0.25) .. (4,3.2);
\draw[line width=1mm,white] (4,2.4) .. controls +(0,0.25) and +(-0.15,0) .. 
       (4.4,2.8) .. controls +(0.15,0) and +(0,-0.25) .. (4.8,3.2);
\draw (4,2.4) .. controls +(0,0.25) and +(-0.15,0) .. 
       (4.4,2.8) .. controls +(0.15,0) and +(0,-0.25) .. (4.8,3.2);
\draw (5.6,2.4) .. controls +(0,0.25) and +(-0.15,0) .. 
       (6,2.8) .. controls +(0.15,0) and +(0,-0.25) .. (6.4,3.2);
\draw[line width=1mm,white] (6.4,2.4) .. controls +(0,0.25) and +(0.15,0) ..
       (6,2.8) .. controls +(-0.15,0) and +(0,-0.25) .. (5.6,3.2);
\draw (6.4,2.4) .. controls +(0,0.25) and +(0.15,0) ..
       (6,2.8) .. controls +(-0.15,0) and +(0,-0.25) .. (5.6,3.2);
\draw[-<-=.9,->-=.1] (4,3.2) .. controls +(0,1) and +(0,1) .. (4.8,3.2);
\draw[->-=.7] (4.4,3.95) to (4.4,4.7) node[below right] {$\mathsmaller{i''}$}; 
\draw[black,fill=white] (4.4,3.95) circle (0.285cm) node {$\gamma$};
\draw[->-=.9,-<-=.1] (5.6,3.2) .. controls +(0,1) and +(0,1) .. (6.4,3.2);
\draw[-<-=.7] (6,3.95) to (6,4.7) node[below right] {$\mathsmaller{j''}$}; 
\draw[black,fill=white] (6,3.95) circle (0.285cm) node {$\hat{\delta}^\ast$};
\draw[-<-=.9,->-=.1] (7.2,2.4) .. controls +(0,1) and +(0,1) .. (8,2.4);
\draw[->-=.6] (7.6,3.15) to (7.6,4.7) node[below right] {$\mathsmaller{\tilde{k}''}$};
\draw[black,fill=white] (7.6,3.15) circle (0.285cm) node {$\beta$};
\draw[->-=.9,-<-=.1] (8.8,2.4) .. controls +(0,1) and +(0,1) .. (9.6,2.4);
\draw[-<-=.6] (9.2,3.15) to (9.2,4.7) node[below right] {$\mathsmaller{i''}$};
\draw[black,fill=white] (9.2,3.15) circle (0.285cm) node {$\hat{\gamma}^\ast$};
\draw[line width=0.5mm] (10.4,-0.4) -- (10.4,4.8);
\draw[->-=.6] (11.2,0) node[below] {$\mathsmaller{j}$} to (11.2,1.5);
\draw[->-=.6] (12,0) node[below] {$\mathsmaller{j'}$} to (12,1.5);
\draw (11.2,1.5) .. controls +(0,1) and +(0,1) .. (12,1.5);
\draw[->-=.6] (11.6,2.25) to (11.6,4.7) node[below right] {$\mathsmaller{j''}$}; 
\draw[black,fill=white] (11.6,2.25) circle (0.285cm) node {$\delta$};
\end{tikzpicture} \; \, \, ,
\end{gather*}
and a morphism $e_C$ as the embedding morphism of $(\mathds{1}_{\mathcal{C}} \boxtimes \mathds{1}_{\mathcal{C}}) \boxtimes \mathds{1}_{\mathcal{C}}$ as a subobject of $C$. It is a direct calculation to check that $(C,m_C,e_C)$ is an algebra object in $\mathcal{C}^3$; loosely speaking, as $m_C$ contains parts of $m_A$ and $m_B$, this follows from the fact that $A$ and $B$ are algebra objects.\medskip

We are now ready to prove that $f_1 \colon A \to C$ is an algebra homomorphism (Definition \ref{def:algebra-homomorphism}). As the unit morphisms $e_A$ and $e_C$ of the algebras $A$ and $C$, respectively, are just embedding morphisms of units, it evidently follows that $f_1 \circ e_A = e_C$. Thus it remains to verify that
\begin{gather}

\end{gather*}
form a basis of $\Hom(i^\ast \otimes k^\ast \otimes i'^\ast \otimes k'^\ast,\tilde{k}''^\ast)$, since $\alpha$, $\alpha'$ and $\beta$ are basis vectors of their corresponding Hom spaces. (Recall we identify basis vectors with their integer label, as is indicated in the above definition of $\xi$. This means that $\alpha$, $\alpha'$ and $\beta$ are actually integer-valued labels for basis vectors, and hence $\xi$ is a label for a basis of $\Hom(i^\ast \otimes k^\ast \otimes i'^\ast \otimes k'^\ast,\tilde{k}''^\ast)$, depending on the labels $\alpha$, $\alpha'$ and $\beta$.) The morphism
\begin{gather*}
\hat{\xi} \; \equiv \; \, \, \begin{tikzpicture}[baseline=(current bounding box.center)]
\draw[-<-=.9,->-=.1] (0,0) node[above] {$\mathsmaller{i}$} .. controls +(0,-1.5) and +(0,-1.5) .. (2.4,0) node[above] {$\mathsmaller{k'}$};
\draw[->-=.8] (1.2,-1.2) to (1.2,-2.25) node[above right] {$\mathsmaller{\tilde{k}''}$}; 
\draw[->-=.25] (0.8,0) node[above] {$\mathsmaller{k}$} .. controls +(0,-0.5) and +(0,-0.5) .. (1.2,-1.2);
\draw[->-=.25] (1.6,0) node[above] {$\mathsmaller{i'}$} .. controls +(0,-0.5) and +(0,-0.5) .. (1.2,-1.2);
\draw[black,fill=white] (1.2,-1.2) circle (0.285cm) node {$\hat{\xi}$};
\end{tikzpicture} \; \, \, \defeq  \; \, \, \begin{tikzpicture}[baseline=(current bounding box.center)]
\draw[-<-=.9,->-=.1] (0,0) node[above] {$\mathsmaller{i}$} .. controls +(0,-1) and +(0,-1) .. (0.8,0) node[above] {$\mathsmaller{k}$};
\draw (0.4,-0.75) to (0.4,-1); 
\draw[-<-=.9,->-=.1] (1.6,0) node[above] {$\mathsmaller{i'}$} .. controls +(0,-1) and +(0,-1) .. (2.4,0) node[above] {$\mathsmaller{k'}$};
\draw (2,-0.75) to (2,-1); 
\draw[-<-=.8,->-=.2] (0.4,-1) node[left,yshift=-0.5cm] {$\mathsmaller{\tilde{k}}$}  .. controls +(0,-1.25) and +(0,-1.25) .. (2,-1) node[right,yshift=-0.5cm] {$\mathsmaller{\tilde{k}'}$};
\draw[-<-=.2] (1.2,-2.75) node[above right] {$\mathsmaller{\tilde{k}''}$} to (1.2,-1.9); 
\draw[black,fill=white] (1.2,-1.9) circle (0.285cm) node {$\beta^{\ast}$};
\draw[black,fill=white] (0.4,-0.75) circle (0.285cm) node {$\hat{\alpha}$};
\draw[black,fill=white] (2,-0.75) circle (0.285cm) node {$\hat{\alpha'}$};
\end{tikzpicture}
\end{gather*}
is dual to $\xi$ with respect to the pairing $\kappa$ from Equation \eqref{eq:def-pairing-kappa} as $\hat{\alpha}$, $\hat{\alpha'}$ and $\hat{\beta}^\ast$ are dual to $\alpha$, $\alpha'$ and $\beta^\ast$ with respect to $\kappa$, respectively (cf. Equation \eqref{eq:dual-kappa}, and Lemma \ref{lem:cat-dual-kappa-dual} that the categorical dual morphism $\beta^\ast$ of $\beta \in \Hom(\tilde{k}' \otimes \tilde{k}, \tilde{k}'')$ is $\kappa$-dual to $\hat{\beta}^\ast$):
\begin{gather*}
\begin{tikzpicture}[baseline=(current bounding box.center)]
\draw (0,0) node[left] {$\mathsmaller{i}$} .. controls +(0,1) and +(0,1) .. (0.8,0) node[right] {$\mathsmaller{k}$};
\draw (0.4,0.75) to (0.4,1); 
\draw (1.6,0) node[left] {$\mathsmaller{i'}$} .. controls +(0,1) and +(0,1) .. (2.4,0) node[right] {$\mathsmaller{k'}$};
\draw (2,0.75) to (2,1); 
\draw[->-=.9,-<-=.1] (0.4,1) node[left,yshift=0.5cm] {$\mathsmaller{\tilde{k}}$}  .. controls +(0,1.25) and +(0,1.25) .. (2,1) node[right,yshift=0.5cm] {$\mathsmaller{\tilde{k}'}$};
\draw[-<-=.6] (1.2,1.9) to (1.2,2.75) node[below right] {$\mathsmaller{\tilde{k}''}$}; 
\draw[black,fill=white] (1.2,1.9) circle (0.285cm) node {$\hat{\beta}^{\ast}$};
\draw[black,fill=white] (0.4,0.75) circle (0.285cm) node {$\alpha$};
\draw[black,fill=white] (2,0.75) circle (0.285cm) node {$\alpha'$};
\draw[-<-=.9,->-=.1] (0,0) .. controls +(0,-1) and +(0,-1) .. (0.8,0);
\draw (0.4,-0.75) to (0.4,-1); 
\draw[-<-=.9,->-=.1] (1.6,0) .. controls +(0,-1) and +(0,-1) .. (2.4,0);
\draw (2,-0.75) to (2,-1); 
\draw[-<-=.8,->-=.2] (0.4,-1) node[left,yshift=-0.5cm] {$\mathsmaller{\tilde{k}}$}  .. controls +(0,-1.25) and +(0,-1.25) .. (2,-1) node[right,yshift=-0.5cm] {$\mathsmaller{\tilde{k}'}$};
\draw[-<-=.2] (1.2,-2.75) node[above right] {$\mathsmaller{\tilde{k}''}$} to (1.2,-1.9); 
\draw[black,fill=white] (1.2,-1.9) circle (0.285cm) node {$\nu^{\ast}$};
\draw[black,fill=white] (0.4,-0.75) circle (0.285cm) node {$\hat{\lambda}$};
\draw[black,fill=white] (2,-0.75) circle (0.285cm) node {$\hat{\lambda'}$};
\end{tikzpicture} \; \, \, = \, \, \; \delta_{\alpha\lambda}\delta_{\alpha'\lambda'}\delta_{\beta\nu} \id_{\tilde{k}''^\ast}
\end{gather*}
Using the canonical isomorphisms of Hom spaces (Equation \eqref{eq:hom-isomorphisms}) we obtain
\begin{gather}
\begin{tikzpicture}[baseline=(current bounding box.center)]
\draw[-<-=.9,->-=.1] (0,0) node[above] {$\mathsmaller{i}$} .. controls +(0,-1.5) and +(0,-1.5) .. (2.4,0) node[above] {$\mathsmaller{k'}$};
\draw[->-=.8] (1.2,-1.2) to (1.2,-2.25) node[above right] {$\mathsmaller{\tilde{k}''}$}; 
\draw[->-=.25] (0.8,0) node[above] {$\mathsmaller{k}$} .. controls +(0,-0.5) and +(0,-0.5) .. (1.2,-1.2);
\draw[->-=.25] (1.6,0) node[above] {$\mathsmaller{i'}$} .. controls +(0,-0.5) and +(0,-0.5) .. (1.2,-1.2);
\draw[black,fill=white] (1.2,-1.2) circle (0.285cm) node {$\hat{\xi}$};
\end{tikzpicture} \; \, \, \mapsto \; \, \, \begin{tikzpicture}[baseline=(current bounding box.center)]
\draw[-<-=.9,->-=.1] (0,0) node[below] {$\mathsmaller{k'}$} .. controls +(0,1.5) and +(0,1.5) .. (2.4,0) node[below] {$\mathsmaller{i}$};
\draw[->-=.8] (1.2,1.2) to (1.2,2.25) node[below right] {$\mathsmaller{\tilde{k}''}$}; 
\draw[->-=.15] (0.8,0) node[below] {$\mathsmaller{i'}$} .. controls +(0,0.5) and +(0,0.5) .. (1.2,1.2);
\draw[->-=.15] (1.6,0) node[below] {$\mathsmaller{k}$} .. controls +(0,0.5) and +(0,0.5) .. (1.2,1.2);
\draw[black,fill=white] (1.2,1.2) circle (0.285cm) node {$\hat{\xi}^\ast$};
\end{tikzpicture} \; \, \, = \, \, \; \begin{tikzpicture}[baseline=(current bounding box.center)]
\draw[-<-=.9,->-=.1] (0,0) node[below] {$\mathsmaller{k'}$} .. controls +(0,1) and +(0,1) .. (0.8,0) node[below] {$\mathsmaller{i'}$};
\draw (0.4,0.75) to (0.4,1); 
\draw[-<-=.9,->-=.1] (1.6,0) node[below] {$\mathsmaller{k}$} .. controls +(0,1) and +(0,1) .. (2.4,0) node[below] {$\mathsmaller{i}$};
\draw (2,0.75) to (2,1); 
\draw[-<-=.8,->-=.2] (0.4,1) node[left,yshift=0.5cm] {$\mathsmaller{\tilde{k}'}$}  .. controls +(0,1.25) and +(0,1.25) .. (2,1) node[right,yshift=0.5cm] {$\mathsmaller{\tilde{k}}$};
\draw[->-=.8] (1.2,1.9) to (1.2,2.75) node[below right] {$\mathsmaller{\tilde{k}''}$}; 
\draw[black,fill=white] (1.2,1.9) circle (0.285cm) node {$\beta$};
\draw[black,fill=white] (0.4,0.75) circle (0.33cm) node {$\hat{\alpha'}^\ast$};
\draw[black,fill=white] (2,0.75) circle (0.285cm) node {$\hat{\alpha}^\ast$};
\end{tikzpicture} \; \, \, \mapsto \nonumber \\ \mapsto \; \, \, \begin{tikzpicture}[baseline=(current bounding box.center)]
\draw[-<-=.9,->-=.1] (0,0) node[below] {$\mathsmaller{k'}$} .. controls +(0,1.5) and +(0,0.8) .. (2,0.5);
\draw[->-=.8] (1.2,1.2) to (1.2,2.25) node[below right] {$\mathsmaller{\tilde{k}''}$}; 
\draw[->-=.15] (0.8,0) node[below] {$\mathsmaller{i'}$} .. controls +(0,0.5) and +(0,0.5) .. (1.2,1.2);
\draw[->-=.15] (1.6,0) node[below] {$\mathsmaller{k}$} .. controls +(0,0.5) and +(0,0.5) .. (1.2,1.2);
\draw[black,fill=white] (1.2,1.2) circle (0.285cm) node {$\hat{\xi}^\ast$};
\draw[-<-=.7] (2.8,1.2) to (2.8,2.25) node[below right] {$\mathsmaller{i}$};
\draw (2,0.5) .. controls +(0,-1) and +(0,-1) .. (2.8,1.2);
\end{tikzpicture} \; \, \, \equaltext{\eqref{eq:appendix-check-hatstar}} \; \, \, \begin{tikzpicture}[baseline=(current bounding box.center)]
\draw[-<-=.9,->-=.1] (0,0) node[below] {$\mathsmaller{k'}$} .. controls +(0,1) and +(0,1) .. (0.8,0) node[below] {$\mathsmaller{i'}$};
\draw (0.4,0.75) to (0.4,1); 
\draw (2,1) .. controls +(0,-0.5) and +(0,-1) .. (2.8,1.5);
\draw[->-=.2] (2.4,0) node[below] {$\mathsmaller{k}$} to (2.4,0.75);
\draw[-<-=.75] (2.8,1.5) to (2.8,2.75) node[below right] {$\mathsmaller{i}$}; 
\draw[-<-=.8,->-=.2] (0.4,1) node[left,yshift=0.5cm] {$\mathsmaller{\tilde{k}'}$}  .. controls +(0,1.25) and +(0,1.25) .. (2,1) node[right,yshift=0.5cm] {$\mathsmaller{\tilde{k}}$};
\draw[->-=.8] (1.2,1.9) to (1.2,2.75) node[below right] {$\mathsmaller{\tilde{k}''}$}; 
\draw[black,fill=white] (1.2,1.9) circle (0.285cm) node {$\beta$};
\draw[black,fill=white] (0.4,0.75) circle (0.33cm) node {$\hat{\alpha'}^\ast$};
\draw[black,fill=white] (2.4,0.75) circle (0.285cm) node {$\check{\alpha}$};
\end{tikzpicture} \; \, \, \defeqinv  \; \, \, \begin{tikzpicture}[baseline=(current bounding box.center)]
\draw[-<-=.9,->-=.1] (0.6,-2.25) node[below] {$\mathsmaller{k'}$} .. controls +(0,1.5) and +(0,1.5) .. (1.8,-2.25) node[below] {$\mathsmaller{k}$};
\draw[-<-=.8] (1.2,-1.2) to (1.2,-2.25) node[below] {$\mathsmaller{i'}$}; 
\draw[-<-=.15] (0.8,0) node[below left] {$\mathsmaller{\tilde{k}''}$} .. controls +(0,-0.5) and +(0,-0.5) .. (1.2,-1.2);
\draw[->-=.25] (1.6,0) node[below right] {$\mathsmaller{i}$} .. controls +(0,-0.5) and +(0,-0.5) .. (1.2,-1.2);
\draw[black,fill=white] (1.2,-1.15) circle (0.285cm) node {$\bar{\xi}$};
\end{tikzpicture} \; \, \, ,
\label{eq:xi-bar}
\end{gather}
so the morphism $\bar{\xi}$ is dual to $\xi$ with respect to the non-degenerate pairing $(\omega,\bar{\omega'})\mapsto(\omega,\hat{\omega'}^\ast)\mapsto\eta(\omega,\hat{\omega'}^\ast)$ for basis vectors $\omega\in\Hom(i^\ast \otimes k^\ast \otimes i'^\ast \otimes k'^\ast,\tilde{k}''^\ast)$ and $\bar{\omega'}\in\Hom(k' \otimes i' \otimes k,\tilde{k}'' \otimes i^\ast)$, where $\hat{\omega'}^\ast \in \Hom(k' \otimes i' \otimes k \otimes i,\tilde{k}'')$ is obtained from $\bar{\omega'}$ by composing with $\widetilde{\ev}_i$, and $\eta$ is the pairing given by Equation \eqref{eq:def-pairing-eta}. The plan now is to extract $\bar{\xi}$ from the second component of \eqref{eq:showthis-leftside} and then change the basis $\{ \xi \}$ of $\Hom(i^\ast \otimes k^\ast \otimes i'^\ast \otimes k'^\ast,\tilde{k}''^\ast)$ to the basis formed by the morphisms
\begin{gather*}
\zeta \; \defeq  \; \, \, 
 \; \, \, .
\end{gather*}
Again it is allowed to perform such a basis change since $\sum_\sigma \sigma \otimes_{\mathds{k}} \bar{\sigma}$ is basis independent. One immediately verifies that the resulting calculation is obtained by appropriate relabeling of the lines and vertices in parts of the diagrams in the calculation for $f_1$. Thus $f_2$ is an algebra homomorphism.\medskip

In summary, $f$ (Equation \eqref{eq:def-f}) is an algebra isomorphism between $A$ (Equation \eqref{eq:def-A}) and $B$ (Equation \eqref{eq:def-B}), which proves Proposition \ref{prop:rep-Sn-3rd} and therefore completes the proof of Theorem \ref{thm:rep-Sn}.

\begin{flushright}
$\Box$
\end{flushright}

\section{Outlook}
\label{sec:outlook}

In this paper we constructed a representation of the symmetric group $S_n$ on the Deligne power $\mathcal{C}^{\boxtimes n}$ of a modular tensor category $\mathcal{C}$ by explicitly specifying $\mathcal{C}^{\boxtimes n}$-bimodule categories, which correspond to the adjacent transpositions generating $S_n$. We want to comment on open problems emerging from this construction.\medskip

Let $\mathcal{B}$ be a modular tensor category and let $G$ be a finite group. A representation $[\varrho]\colon G \to \mathsf{Pic}(\mathcal{B})$ (Definition \ref{def:representation-on-cat}) is related to the specification of surface defects corresponding to a global $G$-symmetry of the topological phase described by $\mathcal{B}$. ``Gauging the $G$-symmetry'' is the process of promoting the global symmetry to a local one, so that the extrinsic defects are turned into deconfined quasiparticles in a new topological phase \cite{BBCW}. On the categorical side, gauging allows to produce a new modular tensor category from a given input modular tensor category together with an action of a group: first, $\mathcal{B}$ is extended to a ``$G$-crossed braided extension'', and then the $G$-action is ``equivariantized'' to obtain a new modular tensor category \cite{CGPW}. This two-step process amounts to lifting the group representation $[\varrho]$ to a representation on the level of tricategories, in a sense which we now want to make precise.

\begin{definition}
Let $G$ be a group. 
\begin{itemize}
\item $\underline{G}$ is the monoidal category with the elements of $G$ as objects, only identity morphisms, and monoidal structure induced by the group multiplication.
\item $\dunderline{G}$ is the monoidal bicategory \cite[Definition 2.20]{greenough} with the elements of $G$ as objects, only identity 1- and 2-morphisms, and monoidal structure induced by the group multiplication.
\item $B\dunderline{G}$ is the tricategory \cite[Definition 4.1]{gurski} with a single object $\ast$ and $\Hom(\ast,\ast)=\dunderline{G}$.
\end{itemize}
\end{definition}

The definitions of a $G$-crossed braided extension and equivariantization will not be stated here (see \cite{eno09,turaev-hqft} for details). We consider the following characterization: a $G$-crossed braided extension of $\mathcal{B}$ defines a group homomorphism $G \to \mathsf{Pic}(\mathcal{B})$, plus additional data from the tensor product and the associator of the extension, which can be encoded into a 2-functor $\dunderline{G} \to \dunderline{\mathsf{Pic}}(\mathcal{B})$ of monoidal bicategories (recall Definition \ref{def:picard} of the Picard 3-groupoid). Equivalently, such a 2-functor of monoidal bicategories is a 3-functor $B\dunderline{G} \to \dunderline{\mathsf{Pic}}$, which maps the single object of $B\dunderline{G}$ (denoted $\ast$) to $\mathcal{B}\in\dunderline{\mathsf{Pic}}$. We refer to \cite[Definition 4.10]{gurski} for the definition of a 3-functor, a morphism between tricategories.

\begin{theorem}[{\cite[Theorem 7.12]{eno09}}]
Equivalence classes of $G$-crossed extensions of $\mathcal{B}$ are in bijection with 3-functors $B\dunderline{G} \to \dunderline{\mathsf{Pic}}$, which map the single object of $B\dunderline{G}$ to $\mathcal{B}\in\dunderline{\mathsf{Pic}}$.
\end{theorem}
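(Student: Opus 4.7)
The strategy is to unpack the tricategorical data of a 3-functor $F\colon B\dunderline{G}\to\dunderline{\mathsf{Pic}}$ with $F(\ast)=\mathcal{B}$ and match it term-by-term to the algebraic data of a $G$-crossed braided extension of $\mathcal{B}$, then show that the two resulting assignments are mutually inverse up to the appropriate equivalence on each side.

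For the forward direction I would proceed as follows. Given a $G$-crossed braided extension $\mathcal{C}=\bigoplus_{g\in G}\mathcal{C}_g$ with $\mathcal{C}_e=\mathcal{B}$, faithfulness of the grading implies that each homogeneous component $\mathcal{C}_g$ is an invertible $\mathcal{B}$-bimodule category whose left and right module structures are related via the braiding of $\mathcal{B}$ (so that $\mathcal{C}_g\in\dunderline{\mathsf{Pic}}(\mathcal{B})$). I would then define $F$ on objects by $\ast\mapsto\mathcal{B}$, on 1-morphisms by $g\mapsto\mathcal{C}_g$, and take the composition constraints to be the bimodule equivalences $\mu_{g,h}\colon\mathcal{C}_g\boxtimes_{\mathcal{B}}\mathcal{C}_h\simeq\mathcal{C}_{gh}$ induced by the tensor product of $\mathcal{C}$ (factoring the balanced bifunctor $\mathcal{C}_g\times\mathcal{C}_h\to\mathcal{C}_{gh}$ through the relative Deligne product via Proposition \ref{appendix-prop-relative-deligne}). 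The associator 3-cells of $F$ are obtained by restricting the associativity isomorphism of $\mathcal{C}$ to the graded pieces, and MacLane's pentagon in $\mathcal{C}$ translates directly into the pentagonator axiom for a 3-functor.

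For the converse direction, given $F$ I would set $\mathcal{C}_g\defeq F(g)$ and $\mathcal{C}\defeq\bigoplus_{g\in G}\mathcal{C}_g$, with the tensor product on $\mathcal{C}$ assembled out of the composition constraints $\mu_{g,h}$ of $F$ via the universal property of $\boxtimes_{\mathcal{B}}$. The associator is read off from the invertible 3-cell data of $F$, and its pentagon coherence supplies the pentagon for $\mathcal{C}$; the unit is $\mathcal{B}$ itself (a simple monoidal unit because $\mathcal{B}$ is a fusion category and each $\mathcal{C}_g$ is invertible, hence indecomposable). The faithfully graded fusion category so constructed comes equipped with a $G$-crossed braided structure, since the fact that each $\mathcal{C}_g$ lies in $\dunderline{\mathsf{Pic}}(\mathcal{B})$ (not merely in $\Bimod(\mathcal{B},\mathcal{B})$) is precisely the braided compatibility needed to define the crossed half-braiding. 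Finally I would check that the two procedures are mutually inverse: starting from $\mathcal{C}$, decomposing into homogeneous components, and reassembling recovers $\mathcal{C}$ on the nose, while starting from $F$ the reconstructed 3-functor is 3-naturally equivalent to $F$ since each piece of data is produced by the inverse operation used to define it.

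The main obstacle, in my view, is pure bookkeeping: a 3-functor of tricategories carries a substantial hierarchy of coherence data (composition and unit constraints, their invertibility modifications, a pentagonator, unitors, and the higher axioms relating them) and each piece must be matched to either the unit/associativity axioms of $\mathcal{C}$ or the compatibility of the grading with the tensor product, without confusing which data is prescribed and which is constructed. To keep this manageable I would invoke Gurski's coherence/strictification theorem for tricategories to replace $\dunderline{\mathsf{Pic}}$ by an equivalent Gray-category, so that most of the higher coherence collapses and only the composition constraints $\mu_{g,h}$ together with a single associator 3-cell per triple $(g,h,k)$, subject to a pentagon, remain to be matched against the fusion-categorical data of $\mathcal{C}$.
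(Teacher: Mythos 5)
The theorem you are asked to prove is not actually proved in this paper: it is a cited result, Theorem 7.12 of Etingof--Nikshych--Ostrik \cite{eno09}, stated in Section \ref{sec:outlook} only to motivate the definition of gauging and the subsequent open problems. There is therefore no in-paper proof against which to compare your argument. That said, your outline --- send each homogeneous component $\mathcal{C}_g$ of a $G$-crossed braided extension to a $1$-morphism in $\dunderline{\mathsf{Pic}}$, the graded tensor product to the composition constraints, and the associator to the pentagonator, and then run the process backwards --- is the same ``direct matching of data'' scheme that underlies the ENO argument, so as a strategy it is sound.

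The genuine gap is that you understate the role of the $G$-crossed braiding, which is precisely the extra structure that distinguishes this theorem (for $\dunderline{\mathsf{Pic}}$) from the non-braided extension theorem (ENO Theorem~7.7, for the Brauer--Picard 3-groupoid $\dunderline{\mathrm{BrPic}}$). In the forward direction you write that ``faithfulness of the grading implies'' each $\mathcal{C}_g$ has left and right $\mathcal{B}$-module structures related via the braiding of $\mathcal{B}$. That attribution is wrong: faithfulness of the grading yields invertibility of $\mathcal{C}_g$ as an ordinary $\mathcal{B}$-bimodule category, i.e.\ an object of $\dunderline{\mathrm{BrPic}}(\mathcal{B})$, but the property that places it inside $\dunderline{\mathsf{Pic}}(\mathcal{B})$ --- that the left/right actions come from a single module structure via the braiding, as in Proposition~\ref{prop:from-module-to-bimodule-category} --- is supplied by the $G$-crossed braiding restricted to $\mathcal{C}_e\otimes\mathcal{C}_g$, i.e.\ $c_{X,M}\colon X\otimes M\to M\otimes X$ for $X\in\mathcal{B}$, $M\in\mathcal{C}_g$, using $T_e\simeq\mathrm{id}$. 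Symmetrically, in the backward direction you assert that the $\dunderline{\mathsf{Pic}}$ structure ``is precisely the braided compatibility needed to define the crossed half-braiding'' without any argument. That sentence conceals the substantive work: one must recover a $G$-action by braided autoequivalences from the bimodule data (via $\alpha$-induction and the equivalence $\underline{\mathsf{Pic}}(\mathcal{B})\simeq\underline{\mathsf{EqBr}}(\mathcal{B})$ of ENO Theorem~5.2, invoked in Section~\ref{sec:prelim} of this paper), assemble the crossed braiding from the mixed associativity isomorphisms $\gamma$ of the bimodule categories, and verify the two $G$-crossed hexagon axioms together with the compatibility between the crossed braiding and the $G$-action. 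Those verifications are the actual content of the theorem; a complete proof cannot dispatch them parenthetically.
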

Thus ``gauging'' can be defined as the process of passing from a group homomorphism $G \to \mathsf{Pic}(\mathcal{B})$ to a 3-functor \cite{CGPW}:

\begin{definition}[Gauging]
\label{def:gauging}
A global symmetry $(G,[\varrho])$ of $\mathcal{B}$ (cf. Definition \ref{def:representation-on-cat}) can be \textit{gauged}, if there exists a 3-functor $\varrho\colon B\dunderline{G} \to \dunderline{\mathsf{Pic}}$ with $\varrho(\ast)=\mathcal{B}$, such that $\varrho$ is equivalent to $[\varrho]$ under the truncation map $\dunderline{\mathsf{Pic}} \to \mathsf{Pic}$.
\end{definition}

Gauging a global symmetry $(G,[\varrho])$ of a modular tensor category $\mathcal{B}$ is not always possible. By \cite{eno09}, certain obstruction classes in the third and fourth cohomology of $G$ have to vanish to ensure the existence of a $G$-crossed braided extension. However, once a $G$-crossed braided extension is found, equivariantization is always possible. In the case of permutation actions the following problems arise.

\begin{problem}
Is it possible to lift the group homomorphism $[\varrho]\colon S_n \to \mathsf{Pic}(\mathcal{C}^{\boxtimes n})$ of Theorem \ref{thm:rep-Sn} to a 3-functor $\varrho\colon B\dunderline{S_n} \to \dunderline{\mathsf{Pic}}$ with $\varrho(\ast)=\mathcal{C}^{\boxtimes n}$, and if yes, how many different such 3-functors are there? Put differently, do $S_n$-crossed braided extensions of $\mathcal{C}^{\boxtimes n}$ exist for all $n$?
\end{problem}

In a recent paper by Gannon and Jones \cite{gj2018} the question of existence is answered affirmatively: both obstruction classes vanish for the permutation action, hence gauging is always possible. For any subgroup $G \subseteq S_n$, the equivalence classes of $G$-crossed braided extensions then form a torsor over $H^3 (G,\mathds{k}^{\times})$ (for $\mathds{k}$-linear $\mathcal{C}$). By \cite{ejp2018}, there are precisely $n$ distinct $\mathds{Z}/n\mathds{Z}$-crossed braided extensions of $\mathcal{C}^{\boxtimes n}$; in particular there are precisely two distinct $S_2$-crossed braided extensions of $\mathcal{C}\boxtimes\mathcal{C}$.

\begin{problem}
\label{problem}
Specify the data of an $S_n$-crossed braided extension of $\mathcal{C}^{\boxtimes n}$ corresponding to $[\varrho]$ by constructing a 3-functor $B\dunderline{S_n} \to \dunderline{\mathsf{Pic}}$.
\end{problem}

For $S_2$ this problem is partially solved in \cite{ejp2018}, where the fusion rules of the two $S_2$-crossed braided extensions of $\mathcal{C}\boxtimes\mathcal{C}$ are specified. In \cite{gj2018} a canonical lift of $G \to \mathsf{Pic}(\mathcal{C}^{\boxtimes n})$ to a monoidal functor $\underline{G} \to \underline{\mathsf{EqBr}}(\mathcal{C}^{\boxtimes n}) \simeq \underline{\mathsf{Pic}}(\mathcal{C}^{\boxtimes n})$ for any subgroup $G \subseteq S_n$ is constructed, and in the case $G=S_n$, $n \geq 3$ this monoidal functor is unique if and only if the group of invertible objects of $\mathcal{C}$ has odd order. However, to the best of the author's knowledge the problem of constructing the explicit data for an $S_n$-crossed braided extension of the permutation action by specifying a 3-functor $B\dunderline{S_n} \to \dunderline{\mathsf{Pic}}$ is still open. The monoidal functor $\underline{S_n} \to \underline{\mathsf{Pic}}(\mathcal{C}^{\boxtimes n})$ from \cite{gj2018} may help in the general case of $n>2$. A future goal is to construct (at least one) such a 3-functor based on the representation of Theorem \ref{thm:rep-Sn}. The equivariantization of the associated $S_n$-crossed braided extension could lead to a family of potentially new modular tensor categories.\medskip

Our results could also provide a deeper insight into the properties of topological quantum computers based on topological multilayer phases. As mentioned in the introductory section, adding permutation twist defects to a non-universal non-abelian state of a topological phase can lead to universal topological quantum computing. This has been demonstrated to be the case for twist defects in a bilayer Ising phase \cite[Section V.]{BaJQ1}. It would be interesting to verify if universality also arises when permutation defects are added to multilayer Ising phases and to multilayer systems of other models like the three-state Potts model \cite{potts-model} and weakly integral anyon models (which are believed to be non-universal by the ``property \textbf{F} conjecture'' of \cite{naidu-rowell}).


\newpage
\appendix

\section{Dual bases of Hom spaces}
\label{appendix:algebras-graphical-notation}

Let $\mathcal{C}$ be a $\mathds{k}$-linear pivotal fusion category (for an algebraically closed field $\mathds{k}$ of zero characteristic) and let $\{X_i\}_{i\in\mathcal{I}}$ be a collection of representatives of the isomorphism classes of simple objects, where $\mathcal{I}$ is some index set (see Section \ref{sec:prelim} for definitions of these notions, and \cite{bak-kir,egno,kassel} for details). We will write $i,j,k,\ldots$ instead of $X_i,X_j,X_k,\ldots$ for short. Choose a basis $\{ \alpha_m \}_m$ of $\Hom(i\otimes j,k)$, with $m=1,\ldots,N_{ij}^k$ for $N_{ij}^k = \dim(\Hom(i\otimes j,k)) \in \mathds{N}$. To avoid lengthy notation we will simply denote the basis $\{ \alpha_m \}_m$ by the labels, $\{ \alpha \}$, leaving the range $\alpha=1,\ldots,N_{ij}^k$ for the different basis elements implicit. A specific basis element $\alpha$ is graphically represented by a vertex labeled $\alpha$ having two incoming edges and one outgoing edge:
\begin{gather}
\label{eq:basis-alpha}
\alpha \; \equiv \; \, \, \begin{tikzpicture}[baseline=(current bounding box.center)]
\draw[-<-=.8,->-=.2] (2,0) node[above left] (i) {$\mathsmaller{i}$} .. controls +(0,1) and +(0,1) .. (2.8,0) node[above right] (j) {$\mathsmaller{j}$};
\draw[->-=.7] (2.4,0.75) to (2.4,1.5) node[below right] (k) {$\mathsmaller{k}$}; 
\draw[black,fill=white] (2.4,0.75) circle (0.285cm) node (alpha) {$\alpha$};
\end{tikzpicture}
\end{gather}

We remark that writing $\{ \alpha \}$ instead of $\{ \alpha_m \}_m$ enables us to drop indices in the notation, but doing so requires to distinguish different bases graphically since a particular basis is only denoted by its index label. Most of the time we will consider only one basis for each Hom space, thus our notation convention is unambiguous. (If the basis is changed one could emphasize this in the diagrams by choosing a different vertex layout, e.g. squares instead of circles.)

Whenever two parallel morphisms $g$ and $h$ only differ by a constant, we write $\langle g \rangle_h \in \mathds{k}$ for the element of $\mathds{k}$ such that $g=\langle g \rangle_h \cdot h$.

We define the dual basis $\{ \hat{\alpha} \}$ of $\Hom(k,i\otimes j)$ via the non-degenerate pairing
\begin{gather}
\kappa\colon \Hom(i\otimes j,k) \otimes \Hom(k,i\otimes j) \to \mathds{k} \nonumber\\
\kappa\left(
\begin{tikzpicture}[baseline=(current bounding box.center)]
\draw[-<-=.8,->-=.2] (2,0) node[above left] (i) {$\mathsmaller{i}$} .. controls +(0,1) and +(0,1) .. (2.8,0) node[above right] (j) {$\mathsmaller{j}$};
\draw[->-=.7] (2.4,0.75) to (2.4,1.5) node[below right] (k) {$\mathsmaller{k}$}; 
\draw[black,fill=white] (2.4,0.75) circle (0.285cm) node {$\alpha$};
\end{tikzpicture} \; \, \raisebox{-0.7cm}{,} \, \;
\begin{tikzpicture}[baseline=(current bounding box.center)]
\draw[->-=.9,-<-=.1] (2,0) node[below left] (i2) {$\mathsmaller{i}$} .. controls +(0,-1) and +(0,-1) .. (2.8,0) node[below right] (j2) {$\mathsmaller{j}$};
\draw[-<-=.5] (2.4,-0.75) to (2.4,-1.5) node[above right] (k2) {$\mathsmaller{k}$}; 
\draw[black,fill=white] (2.4,-0.75) circle (0.285cm) node {$\hat{\beta}$};
\end{tikzpicture}
\right) \, \defeq  \,
{\scaleleftright[3ex]{\Biggl\langle}{\begin{tikzpicture}[baseline=(current bounding box.center)]
\draw[-<-=.99,->-=.01] (2,0) node[left] (i) {$\mathsmaller{i}$} .. controls +(0,1) and +(0,1) .. (2.8,0) node[right] (j) {$\mathsmaller{j}$};
\draw[->-=.7] (2.4,0.75) to (2.4,1.5) node[below right] (k) {$\mathsmaller{k}$}; 
\draw[black,fill=white] (2.4,0.75) circle (0.285cm) node {$\alpha$};
\draw (2,0) .. controls +(0,-1) and +(0,-1) .. (2.8,0);
\draw[-<-=.5] (2.4,-0.75) to (2.4,-1.5) node[above right] (k2) {$\mathsmaller{k}$}; 
\draw[black,fill=white] (2.4,-0.75) circle (0.285cm) node {$\hat{\beta}$};
\end{tikzpicture}}{\Biggr\rangle}}_{\id_k}
\, = \, \delta_{\alpha\beta} \, \, ,
\label{eq:def-pairing-kappa}
\end{gather}
i.e. $\hat{\beta}$ is defined by
\begin{gather}
\begin{tikzpicture}[baseline=(current bounding box.center)]
\draw[-<-=.99,->-=.01] (2,0) node[left] {$\mathsmaller{i}$} .. controls +(0,1) and +(0,1) .. (2.8,0) node[right] {$\mathsmaller{j}$};
\draw[->-=.7] (2.4,0.75) to (2.4,1.5) node[below right] {$\mathsmaller{k}$}; 
\draw[black,fill=white] (2.4,0.75) circle (0.285cm) node {$\alpha$};
\draw (2,0) .. controls +(0,-1) and +(0,-1) .. (2.8,0);
\draw[-<-=.5] (2.4,-0.75) to (2.4,-1.5) node[above right] {$\mathsmaller{k}$}; 
\draw[black,fill=white] (2.4,-0.75) circle (0.285cm) node {$\hat{\beta}$};
\end{tikzpicture} \; \, \defeq  \delta_{\alpha\beta} \id_k \, \, .
\label{eq:dual-kappa}
\end{gather}
Note that the argument of $\left\langle . \right\rangle_{\id_k}$ above is an element of $\End(k) \cong \mathds{k}$ (since $k$ is simple) and hence indeed is a multiple of $\id_k$.

Furthermore, semisimplicity of $\mathcal{C}$ and duality imply the ``completeness relation''
\begin{gather}
\bigoplus\limits_k \sum\limits_{\alpha} \; \, \, \begin{tikzpicture}[baseline=(current bounding box.center)]
\draw[-<-=.8,->-=.2] (6.4,0) node[below] {$\mathsmaller{i}$} .. controls +(0,1) and +(0,1) .. (7.2,0) node[below] {$\mathsmaller{j}$};
\draw (6.8,0.75) to (6.8,1.25) node[right] {$\mathsmaller{k}$}; 
\draw[black,fill=white] (6.8,0.75) circle (0.285cm) node {$\alpha$};
\draw[->-=.9,-<-=.1] (6.4,2.75) node[left] {$\mathsmaller{i}$} .. controls +(0,-1) and +(0,-1) .. (7.2,2.75) node[right] {$\mathsmaller{j}$};
\draw[->-=.2] (6.8,1.25) to (6.8,2); 
\draw[black,fill=white] (6.8,2) circle (0.285cm) node {$\hat{\alpha}$};
\end{tikzpicture} \; \, \, = \, \id_{i \otimes j} \, \, .
\label{eq:appendix-completeness-relation}
\end{gather}

Using the rigid structure of $\mathcal{C}$ (and the pivotal structure to identify left and right duals), there are isomorphisms \cite[Lemma 2.1.6]{bak-kir}
\begin{gather}
\Hom(k,i\otimes j) \cong \Hom(j^\ast \otimes i^\ast,k^\ast) \cong \Hom(j^\ast , k^\ast \otimes i) \, \, ,
\label{eq:hom-isomorphisms}
\end{gather}
which are obtained by appropriate compositions with (co)evaluations and give rise to other bases dual to $\{ \alpha \}$ with respect to different non-degenerate pairings. The goal is to elaborate how these different pairings are related. In particular, we will see that the notion of duality with respect to $\kappa$ is compatible with the canonical isomorphisms of Equation \eqref{eq:hom-isomorphisms}. In order to clearly distinguish the different notions of duality, we will refer to dual morphisms (Equations \eqref{eq:cat-dual-morphism-right} and \eqref{eq:cat-dual-morphism-left}) obtained from the rigid structure of the category as ``categorical duals'' and call duality in the sense of Equation \eqref{eq:dual-kappa} ``$\kappa$-duality''.\medskip

Let $\eta$ be the non-degenerate pairing
\begin{gather}
\eta\colon \Hom(i\otimes j,k) \otimes \Hom(j^\ast \otimes i^\ast,k^\ast) \to \mathds{k} \nonumber\\
\eta\left(
\begin{tikzpicture}[baseline=(current bounding box.center)]
\draw[-<-=.8,->-=.2] (2,0) node[above left] (i) {$\mathsmaller{i}$} .. controls +(0,1) and +(0,1) .. (2.8,0) node[above right] (j) {$\mathsmaller{j}$};
\draw[->-=.7] (2.4,0.75) to (2.4,1.5) node[below right] (k) {$\mathsmaller{k}$}; 
\draw[black,fill=white] (2.4,0.75) circle (0.285cm) node {$\alpha$};
\end{tikzpicture} \; \, \raisebox{-0.7cm}{,} \, \;
\begin{tikzpicture}[baseline=(current bounding box.center)]
\draw[->-=.9,-<-=.1] (2,0) node[above left] {$\mathsmaller{j}$} .. controls +(0,1) and +(0,1) .. (2.8,0) node[above right] {$\mathsmaller{i}$};
\draw[-<-=.7] (2.4,0.75) to (2.4,1.5) node[below right] {$\mathsmaller{k}$}; 
\draw[black,fill=white] (2.4,0.75) circle (0.285cm) node {$\hat{\beta}^\ast$};
\end{tikzpicture}
\right) \, \defeq  \, 
{\scaleleftright[3ex]{\Biggl\langle}{\begin{tikzpicture}[baseline=(current bounding box.center)]
\draw[-<-=.8,->-=.2] (2,0) node[above left] {$\mathsmaller{i}$} .. controls +(0,1) and +(0,1) .. (2.8,0) node[above right] {$\mathsmaller{j}$};
\draw[->-=.7] (2.4,0.75) to (2.4,1.5) node[below right] {$\mathsmaller{k}$}; 
\draw[white] (2.4,1.5) to (2.4,2);
\draw[black,fill=white] (2.4,0.75) circle (0.285cm) node {$\alpha$};
\draw[->-=.9,-<-=.1] (3.6,0) node[above left] {$\mathsmaller{j}$} .. controls +(0,1) and +(0,1) .. (4.4,0) node[above right] {$\mathsmaller{i}$};
\draw[-<-=.7] (4,0.75) to (4,1.5) node[below right] {$\mathsmaller{k}$}; 
\draw[white] (4,1.5) to (4,2);
\draw[black,fill=white] (4,0.75) circle (0.285cm) node {$\hat{\beta}^\ast$};
\draw (2.8,0) .. controls +(0,-0.75) and +(0,-0.75) .. (3.6,0);
\draw (2,0) .. controls +(0,-1.5) and +(0,-1.5) .. (4.4,0);
\end{tikzpicture}}{\Biggr\rangle}}_{\widetilde{\coev}_k} \, \, .
\label{eq:def-pairing-eta}
\end{gather}
If $\hat{\beta}^\ast$ comes from some $\hat{\beta}\in\Hom(k,i\otimes j)$ which is $\kappa$-dual to $\beta\in\Hom(i\otimes j,k)$, i.e.
\begin{gather}
\begin{tikzpicture}[baseline=(current bounding box.center)]
\draw[->-=.9,-<-=.1] (2,0) node[above left] {$\mathsmaller{j}$} .. controls +(0,1) and +(0,1) .. (2.8,0) node[above right] {$\mathsmaller{i}$};
\draw[-<-=.7] (2.4,0.75) to (2.4,1.5) node[below right] {$\mathsmaller{k}$}; 
\draw[black,fill=white] (2.4,0.75) circle (0.285cm) node {$\hat{\beta}^\ast$};
\end{tikzpicture} \, = \,
\begin{tikzpicture}[baseline=(current bounding box.center)]
\draw[->-=.9,-<-=.1] (2,1.5) node[below left] {$\mathsmaller{i}$} .. controls +(0,-1) and +(0,-1) .. (2.8,1.5) node[below right] {$\mathsmaller{j}$};
\draw[-<-=.6] (3.6,0.75) to (3.6,2.3) node[below left] {$\mathsmaller{k}$};
\draw (2.4,0.75) .. controls +(0,-1.2) and +(0,-1.2) .. (3.6,0.75);
\draw[black,fill=white] (2.4,0.75) circle (0.285cm) node {$\hat{\beta}$};
\draw[-<-=.5] (0.4,-0.2) node[above left] {$\mathsmaller{j}$} to (0.4,0.75);
\draw (0.4,0.75) .. controls +(0,1.5) and +(0,1.5) .. (2.8,1.5);
\draw[-<-=.5] (1.2,-0.2) node[above left] {$\mathsmaller{i}$} to (1.2,0.75);
\draw (1.2,0.75) .. controls +(0,1.15) and +(0,0.65) .. (2,1.5);
\end{tikzpicture}
\label{eq:beta-hat-star}
\end{gather}
(so $\hat{\beta}^\ast$ really is the categorical dual morphism of $\hat{\beta}$, as suggested by the notation), then
\begin{gather}
\eta\left(
\begin{tikzpicture}[baseline=(current bounding box.center)]
\draw[-<-=.8,->-=.2] (2,0) node[above left] {$\mathsmaller{i}$} .. controls +(0,1) and +(0,1) .. (2.8,0) node[above right] {$\mathsmaller{j}$};
\draw[->-=.7] (2.4,0.75) to (2.4,1.5) node[below right] {$\mathsmaller{k}$}; 
\draw[black,fill=white] (2.4,0.75) circle (0.285cm) node {$\alpha$};
\end{tikzpicture} \; \, \raisebox{-0.7cm}{,} \, \;
\begin{tikzpicture}[baseline=(current bounding box.center)]
\draw[->-=.9,-<-=.1] (2,0) node[above left] {$\mathsmaller{j}$} .. controls +(0,1) and +(0,1) .. (2.8,0) node[above right] {$\mathsmaller{i}$};
\draw[-<-=.7] (2.4,0.75) to (2.4,1.5) node[below right] {$\mathsmaller{k}$}; 
\draw[black,fill=white] (2.4,0.75) circle (0.285cm) node {$\hat{\beta}^\ast$};
\end{tikzpicture}
\right) \, = \, {\scaleleftright[3ex]{\Biggl\langle}{\begin{tikzpicture}[baseline=(current bounding box.center)]
\draw[->-=.9,-<-=.1] (2,1.5) node[below left] {$\mathsmaller{i}$} .. controls +(0,-1) and +(0,-1) .. (2.8,1.5) node[below right] {$\mathsmaller{j}$};
\draw[-<-=.5] (3.6,0.75) to (3.6,2.3) node[below left] {$\mathsmaller{k}$};
\draw (2.4,0.75) .. controls +(0,-1.2) and +(0,-1.2) .. (3.6,0.75);
\draw[black,fill=white] (2.4,0.75) circle (0.285cm) node {$\hat{\beta}$};
\draw[-<-=.5] (0.4,0.5) node[above left] {$\mathsmaller{j}$} to (0.4,0.75);
\draw (0.4,0.75) .. controls +(0,1.5) and +(0,1.5) .. (2.8,1.5);
\draw[-<-=.5] (1.2,0.5) node[above left] {$\mathsmaller{i}$} to (1.2,0.75);
\draw (1.2,0.75) .. controls +(0,1.15) and +(0,0.65) .. (2,1.5);
\draw[-<-=.8,->-=.2] (-1.2,0.75) node[above left] {$\mathsmaller{i}$} .. controls +(0,1) and +(0,1) .. (-0.4,0.75) node[above right] {$\mathsmaller{j}$};
\draw[->-=.7] (-0.8,1.5) to (-0.8,2.3) node[below right] {$\mathsmaller{k}$}; 
\draw[black,fill=white] (-0.8,1.5) circle (0.285cm) node {$\alpha$};
\draw (-0.4,0.75) .. controls +(0,-0.95) and +(0,-0.75) .. (0.4,0.5);
\draw (-1.2,0.75) .. controls +(0,-1.5) and +(0,-1.5) .. (1.2,0.5);
\end{tikzpicture} \; \;}{\Biggr\rangle}}_{\widetilde{\coev}_k} \, = \nonumber \\
= \; \; {\scaleleftright[3ex]{\Biggl\langle}{\begin{tikzpicture}[baseline=(current bounding box.center)]
\draw[-<-=.99,->-=.01] (2,0) node[left] {$\mathsmaller{i}$} .. controls +(0,1) and +(0,1) .. (2.8,0) node[right] {$\mathsmaller{j}$};
\draw[->-=.7] (2.4,0.75) to (2.4,1.5) node[below right] {$\mathsmaller{k}$}; 
\draw[white] (2.4,1.5) to (2.4,2);
\draw[black,fill=white] (2.4,0.75) circle (0.285cm) node {$\alpha$};
\draw (2,0) .. controls +(0,-1) and +(0,-1) .. (2.8,0);
\draw[-<-=.5] (3.6,0) to (3.6,1.5) node[below left] {$\mathsmaller{k}$};
\draw[white] (3.6,1.5) to (3.6,2);
\draw (2.4,-0.75) .. controls +(0,-1.2) and +(0,-1.6) .. (3.6,0);
\draw[black,fill=white] (2.4,-0.75) circle (0.285cm) node {$\hat{\beta}$};
\end{tikzpicture} \; \;}{\Biggr\rangle}}_{\widetilde{\coev}_k} = \, \delta_{\alpha\beta} \, \label{eq:eta-dual} = \\
= \, \kappa\left(
\begin{tikzpicture}[baseline=(current bounding box.center)]
\draw[-<-=.8,->-=.2] (2,0) node[above left] (i) {$\mathsmaller{i}$} .. controls +(0,1) and +(0,1) .. (2.8,0) node[above right] (j) {$\mathsmaller{j}$};
\draw[->-=.7] (2.4,0.75) to (2.4,1.5) node[below right] (k) {$\mathsmaller{k}$}; 
\draw[black,fill=white] (2.4,0.75) circle (0.285cm) node {$\alpha$};
\end{tikzpicture} \; \, \raisebox{-0.7cm}{,} \, \;
\begin{tikzpicture}[baseline=(current bounding box.center)]
\draw[->-=.9,-<-=.1] (2,0) node[below left] (i2) {$\mathsmaller{i}$} .. controls +(0,-1) and +(0,-1) .. (2.8,0) node[below right] (j2) {$\mathsmaller{j}$};
\draw[-<-=.5] (2.4,-0.75) to (2.4,-1.5) node[above right] (k2) {$\mathsmaller{k}$}; 
\draw[black,fill=white] (2.4,-0.75) circle (0.285cm) node {$\hat{\beta}$};
\end{tikzpicture}
\right) \, \equiv \, \kappa\left(
\begin{tikzpicture}[baseline=(current bounding box.center)]
\draw[-<-=.8,->-=.2] (2,0) node[above left] (i) {$\mathsmaller{i}$} .. controls +(0,1) and +(0,1) .. (2.8,0) node[above right] (j) {$\mathsmaller{j}$};
\draw[->-=.7] (2.4,0.75) to (2.4,1.5) node[below right] (k) {$\mathsmaller{k}$}; 
\draw[black,fill=white] (2.4,0.75) circle (0.285cm) node {$\alpha$};
\end{tikzpicture} \; \, \raisebox{-0.7cm}{,} \, \;
\begin{tikzpicture}[baseline=(current bounding box.center)]
\draw[->-=.9,-<-=.1] (2,0) node[above left] {$\mathsmaller{j}$} .. controls +(0,1) and +(0,1) .. (2.8,0) node[above right] {$\mathsmaller{i}$};
\draw[->-=.5] (3.6,-0.8) node[above left] {$\mathsmaller{k}$} to (3.6,0.75);
\draw (2.4,0.75) .. controls +(0,1.2) and +(0,1.2) .. (3.6,0.75);
\draw[black,fill=white] (2.4,0.75) circle (0.285cm) node {$\hat{\beta}^\ast$};
\draw[->-=.5] (0.4,0.75) to (0.4,1.7) node[below left] {$\mathsmaller{i}$};
\draw (0.4,0.75) .. controls +(0,-1.5) and +(0,-1.5) .. (2.8,0);
\draw[->-=.5] (1.2,0.75) to (1.2,1.7) node[below left] {$\mathsmaller{j}$};
\draw (1.2,0.75) .. controls +(0,-1.15) and +(0,-0.65) .. (2,0);
\end{tikzpicture} \; \;
\right) \, \, ,\nonumber
\end{gather}
where the snake identity (Equation \eqref{eq:snake-identities-left}) is used in the step from the first to the second line and then Equation \eqref{eq:dual-kappa} is employed.

Note that $\hat{\beta}^\ast$ in Equation \eqref{eq:beta-hat-star} is obtained by first dualizing $\beta$ with respect to $\kappa$ and then taking the categorical dual, i.e. it is in fact $(\hat{\beta})^\ast$, but exchanging these operations results in the same morphism: 
\begin{lemma}
\label{lem:cat-dual-kappa-dual}
Applying the categorical dual commutes with dualizing with respect to $\kappa$, i.e. $\widehat{\beta^\ast} = (\hat{\beta})^\ast \equiv \hat{\beta}^\ast$.
\end{lemma}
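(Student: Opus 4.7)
The plan is to verify the equality $\widehat{\beta^\ast}=(\hat\beta)^\ast$ in $\Hom(j^\ast\otimes i^\ast, k^\ast)$ by pairing both sides against a basis of the complementary space $\Hom(k^\ast, j^\ast\otimes i^\ast)$ via the natural $\kappa$-type pairing on the dualized Hom-spaces, and appealing to its non-degeneracy.

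First I fix a basis $\{\alpha\}$ of $\Hom(i\otimes j,k)$ as in \eqref{eq:basis-alpha}. Because the categorical dual is a linear isomorphism $\Hom(i\otimes j,k)\xrightarrow{\sim}\Hom(k^\ast,j^\ast\otimes i^\ast)$, $f\mapsto f^\ast$ (a standard consequence of rigidity together with the pivotal identification $(X^\ast)^\ast\cong X$ used throughout the paper), the collection $\{\alpha^\ast\}$ is a basis of $\Hom(k^\ast,j^\ast\otimes i^\ast)$. By the defining property of $\widehat{\beta^\ast}$ transported to the dualized situation (the direct analogue of \eqref{eq:dual-kappa}), one has $\kappa(\widehat{\beta^\ast},\alpha^\ast)=\delta_{\alpha\beta}$ for every $\alpha$.

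The main step is then to establish the same formula for $(\hat\beta)^\ast$, that is $\kappa\bigl((\hat\beta)^\ast,\alpha^\ast\bigr)=\delta_{\alpha\beta}$. This reduces to a short computation in $\End(k^\ast)\cong\mathds{k}$: by contravariance of the categorical dual, $(\hat\beta)^\ast\circ\alpha^\ast=(\alpha\circ\hat\beta)^\ast$; on the other hand \eqref{eq:dual-kappa} gives $\alpha\circ\hat\beta=\delta_{\alpha\beta}\,\id_k$, so applying $(\cdot)^\ast$ together with $(\id_k)^\ast=\id_{k^\ast}$ yields $(\hat\beta)^\ast\circ\alpha^\ast=\delta_{\alpha\beta}\,\id_{k^\ast}$, hence the required pairing value. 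Non-degeneracy of $\kappa$ on the dualized spaces then forces $\widehat{\beta^\ast}=(\hat\beta)^\ast$.

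There is no real obstacle here: the content is simply that $\kappa$-duality intertwines with the contravariant functor $(\cdot)^\ast$. The only care needed is to keep track of the ordering of tensor factors in the dualized Hom-spaces and to treat all left/right dual identifications as the identity, in line with the pivotal conventions already fixed in Section~\ref{sec:prelim}; if one prefers to work entirely diagrammatically, the same calculation is carried out by using the snake identities \eqref{eq:snake-identities-right}--\eqref{eq:snake-identities-left} to straighten the cups and caps produced by the two successive $(\cdot)^\ast$ operations until the diagram collapses to the configuration defining $\kappa(\alpha,\hat\beta)$.
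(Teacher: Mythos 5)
Your argument is correct and is essentially the paper's proof in algebraic rather than diagrammatic form: the single identity you use, $(\hat\beta)^\ast\circ\alpha^\ast=(\alpha\circ\hat\beta)^\ast=\delta_{\alpha\beta}\,\id_{k^\ast}$, packages precisely the snake-identity manipulation that the paper carries out in its string diagram, and from there non-degeneracy of $\kappa$ gives the conclusion in both treatments.
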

\begin{proof}
The categorical dual
\begin{gather*}
\beta^\ast \; = \; \, \, \begin{tikzpicture}[baseline=(current bounding box.center)]
\draw[-<-=.8,->-=.2] (2,0) node[above left] {$\mathsmaller{i}$} .. controls +(0,1) and +(0,1) .. (2.8,0) node[above right] {$\mathsmaller{j}$};
\draw[-<-=.2] (1.2,-0.8) node[above left] {$\mathsmaller{k}$} to (1.2,0.75);
\draw (1.2,0.75) .. controls +(0,1.2) and +(0,1.2) .. (2.4,0.75);
\draw[black,fill=white] (2.4,0.75) circle (0.285cm) node {$\beta$};
\draw[-<-=.5] (3.6,0.75) to (3.6,1.7) node[below left] {$\mathsmaller{j}$};
\draw (2.8,0) .. controls +(0,-0.65) and +(0,-1.15) .. (3.6,0.75);
\draw[-<-=.5] (4.4,0.75) to (4.4,1.7) node[below left] {$\mathsmaller{i}$};
\draw (2,0) .. controls +(0,-1.5) and +(0,-1.5) .. (4.4,0.75);
\end{tikzpicture}
\end{gather*}
of $\beta$ is $\kappa$-dual to $(\hat{\beta})^\ast$ (Equation \eqref{eq:beta-hat-star}), as is evident by using the snake identity several times and Equation \eqref{eq:dual-kappa}:
\begin{gather*}
\begin{tikzpicture}[baseline=(current bounding box.center)]
\draw[->-=.9,-<-=.1] (5.2,1.75) node[below left] {$\mathsmaller{i}$} .. controls +(0,-1) and +(0,-1) .. (6,1.75) node[below right] {$\mathsmaller{j}$};
\draw[-<-=.6] (6.8,1) to (6.8,2.55) node[below left] {$\mathsmaller{k}$};
\draw (5.6,1) .. controls +(0,-1.2) and +(0,-1.2) .. (6.8,1);
\draw[black,fill=white] (5.6,1) circle (0.285cm) node {$\hat{\gamma}$};
\draw (3.6,1) .. controls +(0,1.5) and +(0,1.5) .. (6,1.75);
\draw (4.4,1) .. controls +(0,1.15) and +(0,0.65) .. (5.2,1.75);
\draw[-<-=.8,->-=.2] (2,0) node[above left] {$\mathsmaller{i}$} .. controls +(0,1) and +(0,1) .. (2.8,0) node[above right] {$\mathsmaller{j}$};
\draw[-<-=.2] (1.2,-0.8) node[above left] {$\mathsmaller{k}$} to (1.2,0.75);
\draw (1.2,0.75) .. controls +(0,1.2) and +(0,1.2) .. (2.4,0.75);
\draw[black,fill=white] (2.4,0.75) circle (0.285cm) node {$\beta$};
\draw[-<-=.5] (3.6,0.75) to (3.6,1) node[below left] {$\mathsmaller{j}$};
\draw (2.8,0) .. controls +(0,-0.65) and +(0,-1.15) .. (3.6,0.75);
\draw[-<-=.5] (4.4,0.75) to (4.4,1) node[below left] {$\mathsmaller{i}$};
\draw (2,0) .. controls +(0,-1.5) and +(0,-1.5) .. (4.4,0.75);
\end{tikzpicture} \; \; \, \, = \; \, \, \delta_{\beta\gamma} \id_{k^\ast}
\end{gather*}
$\,$
\end{proof}

Finally we introduce the non-degenerate pairing $\vartheta$,
\begin{gather}
\vartheta\colon \Hom(i\otimes j,k) \otimes \Hom(j^\ast, k^\ast \otimes i) \to \mathds{k} \nonumber\\
\vartheta\left(
\begin{tikzpicture}[baseline=(current bounding box.center)]
\draw[-<-=.8,->-=.2] (2,0) node[above left] (i) {$\mathsmaller{i}$} .. controls +(0,1) and +(0,1) .. (2.8,0) node[above right] (j) {$\mathsmaller{j}$};
\draw[->-=.7] (2.4,0.75) to (2.4,1.5) node[below right] (k) {$\mathsmaller{k}$}; 
\draw[black,fill=white] (2.4,0.75) circle (0.285cm) node {$\alpha$};
\end{tikzpicture} \; \, \raisebox{-0.7cm}{,} \, \;
\begin{tikzpicture}[baseline=(current bounding box.center)]
\draw[->-=.9,->-=.2] (2,0) node[below left] {$\mathsmaller{k}$} .. controls +(0,-1) and +(0,-1) .. (2.8,0) node[below right] {$\mathsmaller{i}$};
\draw[->-=.7] (2.4,-0.75) to (2.4,-1.5) node[above right] {$\mathsmaller{j}$}; 
\draw[black,fill=white] (2.4,-0.75) circle (0.285cm) node {$\check{\beta}$};
\end{tikzpicture}
\right) \, \defeq  \, 
{\scaleleftright[3ex]{\Biggl\langle}{\begin{tikzpicture}[baseline=(current bounding box.center)]
\draw (2,0) .. controls +(0,-0.7) and +(0,-0.7) .. (2.8,-0.3);
\draw[->-=.7] (2.4,-0.75) to (2.4,-1.5) node[above right] {$\mathsmaller{j}$}; 
\draw[black,fill=white] (2.4,-0.75) circle (0.285cm) node {$\check{\beta}$};
\draw[-<-=.5] (2,0) to (2,1.2) node[below left] {$\mathsmaller{k}$};
\draw[white] (2,1.2) to (2,1.7);
\draw[-<-=.8,->-=.05] (2.8,-0.3) node[above left] {$\mathsmaller{i}$} .. controls +(0,0.7) and +(0,0.7) .. (3.6,-0.3) node[above right] {$\mathsmaller{j}$};
\draw[->-=.7] (3.2,0.45) to (3.2,1.2) node[below right] {$\mathsmaller{k}$}; 
\draw[white] (3.2,1.2) to (3.2,1.7);
\draw[black,fill=white] (3.2,0.25) circle (0.285cm) node {$\alpha$};
\draw (2.4,-1.5) .. controls +(0,-0.5) and +(0,-2) .. (3.6,-0.3);
\end{tikzpicture}}{\Biggr\rangle}}_{\coev_k} \, \, .
\label{eq:def-pairing-theta}
\end{gather}
If $\check{\beta}$ comes from some $\hat{\beta}\in\Hom(k,i\otimes j)$ which is $\kappa$-dual to $\beta\in\Hom(i\otimes j,k)$, i.e.
\begin{gather*}
\begin{tikzpicture}[baseline=(current bounding box.center)]
\draw[->-=.9,->-=.2] (2,0) node[below left] {$\mathsmaller{k}$} .. controls +(0,-1) and +(0,-1) .. (2.8,0) node[below right] {$\mathsmaller{i}$};
\draw[->-=.7] (2.4,-0.75) to (2.4,-1.5) node[above right] {$\mathsmaller{j}$}; 
\draw[black,fill=white] (2.4,-0.75) circle (0.285cm) node {$\check{\beta}$};
\end{tikzpicture} \, = \,
\begin{tikzpicture}[baseline=(current bounding box.center)]
\draw[->-=.9] (2,1.5) .. controls +(0,-1) and +(0,-1) .. (2.8,1.5) node[below right] {$\mathsmaller{j}$};
\draw[->-=.3] (2,1.5) to (2,2.2) node[below left] {$\mathsmaller{i}$};
\draw[-<-=.6] (1.2,0.75) to (1.2,2.2) node[below left] {$\mathsmaller{k}$};
\draw (1.2,0.75) .. controls +(0,-1.2) and +(0,-1.2) .. (2.4,0.75);
\draw[black,fill=white] (2.4,0.75) circle (0.285cm) node {$\hat{\beta}$};
\draw[-<-=.5] (3.6,-0.2) node[above left] {$\mathsmaller{j}$} to (3.6,0.75);
\draw (2.8,1.5) .. controls +(0,0.65) and +(0,1.85) .. (3.6,0.75);
\end{tikzpicture} \; \; ,
\end{gather*}
then
\begin{gather*}
\vartheta\left(
\begin{tikzpicture}[baseline=(current bounding box.center)]
\draw[-<-=.8,->-=.2] (2,0) node[above left] (i) {$\mathsmaller{i}$} .. controls +(0,1) and +(0,1) .. (2.8,0) node[above right] (j) {$\mathsmaller{j}$};
\draw[->-=.7] (2.4,0.75) to (2.4,1.5) node[below right] (k) {$\mathsmaller{k}$}; 
\draw[black,fill=white] (2.4,0.75) circle (0.285cm) node {$\alpha$};
\end{tikzpicture} \; \, \raisebox{-0.7cm}{,} \, \;
\begin{tikzpicture}[baseline=(current bounding box.center)]
\draw[->-=.9,->-=.2] (2,0) node[below left] {$\mathsmaller{k}$} .. controls +(0,-1) and +(0,-1) .. (2.8,0) node[below right] {$\mathsmaller{i}$};
\draw[->-=.7] (2.4,-0.75) to (2.4,-1.5) node[above right] {$\mathsmaller{j}$}; 
\draw[black,fill=white] (2.4,-0.75) circle (0.285cm) node {$\check{\beta}$};
\end{tikzpicture}
\right) \, = \, {\scaleleftright[3ex]{\Biggl\langle}{\begin{tikzpicture}[baseline=(current bounding box.center)]
\draw[->-=.9] (2,1.5) .. controls +(0,-1) and +(0,-1) .. (2.8,1.5) node[below right] {$\mathsmaller{j}$};
\draw[-<-=.8] (1.2,0.75) to (1.2,3.675) node[below left] {$\mathsmaller{k}$};
\draw[white] (1.2,3.675) to (1.2,4.175);
\draw (1.2,0.75) .. controls +(0,-1.2) and +(0,-1.2) .. (2.4,0.75);
\draw[black,fill=white] (2.4,0.75) circle (0.285cm) node {$\hat{\beta}$};
\draw[-<-=.5] (3.6,0.6) node[above left] {$\mathsmaller{j}$} to (3.6,0.75);
\draw (2.8,1.5) .. controls +(0,0.65) and +(0,1.85) .. (3.6,0.75);
\draw[-<-=.9,->-=.1] (2,1.8) node[above left] {$\mathsmaller{i}$} .. controls +(0,1.5) and +(0,1.5) .. (4.4,1.8) node[above right] {$\mathsmaller{j}$};
\draw[->-=.7] (3.2,2.925) to (3.2,3.675) node[below right] {$\mathsmaller{k}$}; 
\draw[white] (3.2,3.675) to (3.2,4.175);
\draw[black,fill=white] (3.2,2.925) circle (0.285cm) node {$\alpha$};
\draw (2,1.5) to (2,1.8);
\draw (3.6,0.6) .. controls +(0,-0.85) and +(0,-2.7) .. (4.4,1.8);
\end{tikzpicture}}{\Biggr\rangle}}_{\coev_k} \, = \\
= \; \; {\scaleleftright[3ex]{\Biggl\langle}{\begin{tikzpicture}[baseline=(current bounding box.center)]
\draw[-<-=.99,->-=.01] (2,0) node[left] {$\mathsmaller{i}$} .. controls +(0,1) and +(0,1) .. (2.8,0) node[right] {$\mathsmaller{j}$};
\draw[->-=.7] (2.4,0.75) to (2.4,1.5) node[below right] {$\mathsmaller{k}$}; 
\draw[white] (2.4,1.5) to (2.4,2);
\draw[black,fill=white] (2.4,0.75) circle (0.285cm) node {$\alpha$};
\draw (2,0) .. controls +(0,-1) and +(0,-1) .. (2.8,0);
\draw[-<-=.5] (1.2,0) to (1.2,1.5) node[below left] {$\mathsmaller{k}$};
\draw[white] (1.2,1.5) to (1.2,2);
\draw (1.2,0) .. controls +(0,-1.6) and +(0,-1.2) .. (2.4,-0.75);
\draw[black,fill=white] (2.4,-0.75) circle (0.285cm) node {$\hat{\beta}$};
\end{tikzpicture} \; \;}{\Biggr\rangle}}_{\coev_k} = \, \delta_{\alpha\beta} \, = \\
= \, \kappa\left(
\begin{tikzpicture}[baseline=(current bounding box.center)]
\draw[-<-=.8,->-=.2] (2,0) node[above left] (i) {$\mathsmaller{i}$} .. controls +(0,1) and +(0,1) .. (2.8,0) node[above right] (j) {$\mathsmaller{j}$};
\draw[->-=.7] (2.4,0.75) to (2.4,1.5) node[below right] (k) {$\mathsmaller{k}$}; 
\draw[black,fill=white] (2.4,0.75) circle (0.285cm) node {$\alpha$};
\end{tikzpicture} \; \, \raisebox{-0.7cm}{,} \, \;
\begin{tikzpicture}[baseline=(current bounding box.center)]
\draw[->-=.9,-<-=.1] (2,0) node[below left] (i2) {$\mathsmaller{i}$} .. controls +(0,-1) and +(0,-1) .. (2.8,0) node[below right] (j2) {$\mathsmaller{j}$};
\draw[-<-=.5] (2.4,-0.75) to (2.4,-1.5) node[above right] (k2) {$\mathsmaller{k}$}; 
\draw[black,fill=white] (2.4,-0.75) circle (0.285cm) node {$\hat{\beta}$};
\end{tikzpicture}
\right) \, \equiv \, \kappa\left(
\begin{tikzpicture}[baseline=(current bounding box.center)]
\draw[-<-=.8,->-=.2] (2,0) node[above left] (i) {$\mathsmaller{i}$} .. controls +(0,1) and +(0,1) .. (2.8,0) node[above right] (j) {$\mathsmaller{j}$};
\draw[->-=.7] (2.4,0.75) to (2.4,1.5) node[below right] (k) {$\mathsmaller{k}$}; 
\draw[black,fill=white] (2.4,0.75) circle (0.285cm) node {$\alpha$};
\end{tikzpicture} \; \, \raisebox{-0.7cm}{,} \, \;
\begin{tikzpicture}[baseline=(current bounding box.center)]
\draw[->-=.2] (2,1.5) node[below right] {$\mathsmaller{k}$} .. controls +(0,-1) and +(0,-1) .. (2.8,1.5);
\draw[->-=.3] (2.8,1.5) to (2.8,2.2) node[below left] {$\mathsmaller{i}$};
\draw[->-=.7] (3.6,0.75) to (3.6,2.2) node[below left] {$\mathsmaller{j}$};
\draw (2.4,0.75) .. controls +(0,-1.2) and +(0,-1.2) .. (3.6,0.75);
\draw[black,fill=white] (2.4,0.75) circle (0.285cm) node {$\check{\beta}$};
\draw[->-=.5] (1.2,-0.2) node[above left] {$\mathsmaller{k}$} to (1.2,0.75);
\draw (1.2,0.75) .. controls +(0,1.85) and +(0,0.65) .. (2,1.5);
\end{tikzpicture} \; \;
\right) \, \, ,
\end{gather*}
where the snake identity (Equation \eqref{eq:snake-identities-right}) is used in the step from the first to the second line and then the definition of duality with respect to $\kappa$ (Equation \eqref{eq:dual-kappa}) is inserted.

We remark that left and right categorical duals are identified by pivotality, thus
\begin{gather*}
\begin{tikzpicture}[baseline=(current bounding box.center)]
\draw[->-=.9,-<-=.1] (2,0) node[above left] {$\mathsmaller{j}$} .. controls +(0,1) and +(0,1) .. (2.8,0) node[above right] {$\mathsmaller{i}$};
\draw[-<-=.7] (2.4,0.75) to (2.4,1.5) node[below right] {$\mathsmaller{k}$}; 
\draw[black,fill=white] (2.4,0.75) circle (0.285cm) node {$\hat{\beta}^\ast$};
\end{tikzpicture} \, = \,
\begin{tikzpicture}[baseline=(current bounding box.center)]
\draw[->-=.9,-<-=.1] (2,1.5) node[below left] {$\mathsmaller{i}$} .. controls +(0,-1) and +(0,-1) .. (2.8,1.5) node[below right] {$\mathsmaller{j}$};
\draw[-<-=.6] (3.6,0.75) to (3.6,2.3) node[below left] {$\mathsmaller{k}$};
\draw (2.4,0.75) .. controls +(0,-1.2) and +(0,-1.2) .. (3.6,0.75);
\draw[black,fill=white] (2.4,0.75) circle (0.285cm) node {$\hat{\beta}$};
\draw[-<-=.5] (0.4,-0.2) node[above left] {$\mathsmaller{j}$} to (0.4,0.75);
\draw (0.4,0.75) .. controls +(0,1.5) and +(0,1.5) .. (2.8,1.5);
\draw[-<-=.5] (1.2,-0.2) node[above left] {$\mathsmaller{i}$} to (1.2,0.75);
\draw (1.2,0.75) .. controls +(0,1.15) and +(0,0.65) .. (2,1.5);
\end{tikzpicture} \; \; \, \, \equiv \; \, \, \begin{tikzpicture}[baseline=(current bounding box.center)]
\draw[->-=.9,-<-=.1] (2,1.5) node[below left] {$\mathsmaller{i}$} .. controls +(0,-1) and +(0,-1) .. (2.8,1.5) node[below right] {$\mathsmaller{j}$};
\draw[-<-=.6] (1.2,0.75) to (1.2,2.3) node[below left] {$\mathsmaller{k}$};
\draw (1.2,0.75) .. controls +(0,-1.2) and +(0,-1.2) .. (2.4,0.75);
\draw[black,fill=white] (2.4,0.75) circle (0.285cm) node {$\hat{\beta}$};
\draw[-<-=.5] (3.6,-0.2) node[above left] {$\mathsmaller{j}$} to (3.6,0.75);
\draw (2.8,1.5) .. controls +(0,0.65) and +(0,1.15) .. (3.6,0.75);
\draw[-<-=.5] (4.4,-0.2) node[above left] {$\mathsmaller{i}$} to (4.4,0.75);
\draw (2,1.5) .. controls +(0,1.5) and +(0,1.5) .. (4.4,0.75);
\end{tikzpicture}
\end{gather*}
and consequently, using the snake identity,
\begin{gather}
\begin{tikzpicture}[baseline=(current bounding box.center)]
\draw[-<-=.1] (2,0) node[above left] {$\mathsmaller{j}$} .. controls +(0,1.2) and +(0,0.5) .. (2.8,0.5);
\draw[-<-=.7] (2.4,0.75) to (2.4,1.5) node[below right] {$\mathsmaller{k}$}; 
\draw[black,fill=white] (2.4,0.75) circle (0.285cm) node {$\hat{\beta}^\ast$};
\draw[->-=.7] (3.6,1) to (3.6,1.5) node[below right] {$\mathsmaller{i}$};
\draw (2.8,0.5) .. controls +(0,-1) and +(0,-1.2) .. (3.6,1);
\end{tikzpicture} \; \, \, = \; \, \, \begin{tikzpicture}[baseline=(current bounding box.center)]
\draw[->-=.9,-<-=.1] (2,1.5) node[below left] {$\mathsmaller{i}$} .. controls +(0,-1) and +(0,-1) .. (2.8,1.5) node[below right] {$\mathsmaller{j}$};
\draw[-<-=.6] (1.2,0.75) to (1.2,2.3) node[below left] {$\mathsmaller{k}$};
\draw (1.2,0.75) .. controls +(0,-1.2) and +(0,-1.2) .. (2.4,0.75);
\draw[black,fill=white] (2.4,0.75) circle (0.285cm) node {$\hat{\beta}$};
\draw[-<-=.5] (3.6,-0.2) node[above left] {$\mathsmaller{j}$} to (3.6,0.75);
\draw (2.8,1.5) .. controls +(0,0.65) and +(0,1.15) .. (3.6,0.75);
\draw[-<-=.5] (4.4,0.5) node[above left] {$\mathsmaller{i}$} to (4.4,0.75);
\draw (2,1.5) .. controls +(0,1.5) and +(0,1.5) .. (4.4,0.75);
\draw[->-=.7] (5.2,1.5) to (5.2,2.3) node[below left] {$\mathsmaller{i}$};
\draw (4.4,0.5) .. controls +(0,-1) and +(0,-1.85) .. (5.2,1.5);
\end{tikzpicture} \; \; \, \, = \; \, \, \begin{tikzpicture}[baseline=(current bounding box.center)]
\draw[->-=.9] (2,1.5) .. controls +(0,-1) and +(0,-1) .. (2.8,1.5) node[below right] {$\mathsmaller{j}$};
\draw[-<-=.6] (1.2,0.75) to (1.2,2.3) node[below left] {$\mathsmaller{k}$};
\draw (1.2,0.75) .. controls +(0,-1.2) and +(0,-1.2) .. (2.4,0.75);
\draw[black,fill=white] (2.4,0.75) circle (0.285cm) node {$\hat{\beta}$};
\draw[-<-=.5] (3.6,-0.2) node[above left] {$\mathsmaller{j}$} to (3.6,0.75);
\draw (2.8,1.5) .. controls +(0,0.75) and +(0,1.75) .. (3.6,0.75);
\draw[->-=.8] (2,1.5) to (2,2.3) node[below left] {$\mathsmaller{i}$};
\end{tikzpicture} \; \; \, \, = \; \, \, \begin{tikzpicture}[baseline=(current bounding box.center)]
\draw[->-=.9,->-=.2] (2,0) node[below left] {$\mathsmaller{k}$} .. controls +(0,-1) and +(0,-1) .. (2.8,0) node[below right] {$\mathsmaller{i}$};
\draw[->-=.7] (2.4,-0.75) to (2.4,-1.5) node[above right] {$\mathsmaller{j}$}; 
\draw[black,fill=white] (2.4,-0.75) circle (0.285cm) node {$\check{\beta}$};
\end{tikzpicture} \; \, .
\label{eq:appendix-check-hatstar}
\end{gather} \medskip

To summarize, we have proved that $\kappa$-duality is compatible with the canonical isomorphisms of Hom spaces in Equation \eqref{eq:hom-isomorphisms}: 
\begin{lemma}
\label{lem:duals-are-equivalent}
Given a basis vector $\alpha \in \Hom(i\otimes j,k)$, the basis vectors
\begin{gather*}
\hat{\alpha} \equiv \, \begin{tikzpicture}[baseline=(current bounding box.center)]
\draw[->-=.9,-<-=.1] (2,0) node[below left] (i2) {$\mathsmaller{i}$} .. controls +(0,-1) and +(0,-1) .. (2.8,0) node[below right] (j2) {$\mathsmaller{j}$};
\draw[-<-=.5] (2.4,-0.75) to (2.4,-1.5) node[above right] (k2) {$\mathsmaller{k}$}; 
\draw[black,fill=white] (2.4,-0.75) circle (0.285cm) node {$\hat{\alpha}$};
\end{tikzpicture} \; \in \Hom(k,i \otimes j) \\
\hat{\alpha}^\ast \equiv \, \begin{tikzpicture}[baseline=(current bounding box.center)]
\draw[->-=.9,-<-=.1] (2,0) node[above left] {$\mathsmaller{j}$} .. controls +(0,1) and +(0,1) .. (2.8,0) node[above right] {$\mathsmaller{i}$};
\draw[-<-=.7] (2.4,0.75) to (2.4,1.5) node[below right] {$\mathsmaller{k}$}; 
\draw[black,fill=white] (2.4,0.75) circle (0.285cm) node {$\hat{\alpha}^\ast$};
\end{tikzpicture} \, = \,
\begin{tikzpicture}[baseline=(current bounding box.center)]
\draw[->-=.9,-<-=.1] (2,1.5) node[below left] {$\mathsmaller{i}$} .. controls +(0,-1) and +(0,-1) .. (2.8,1.5) node[below right] {$\mathsmaller{j}$};
\draw[-<-=.6] (3.6,0.75) to (3.6,2.3) node[below left] {$\mathsmaller{k}$};
\draw (2.4,0.75) .. controls +(0,-1.2) and +(0,-1.2) .. (3.6,0.75);
\draw[black,fill=white] (2.4,0.75) circle (0.285cm) node {$\hat{\alpha}$};
\draw[-<-=.5] (0.4,-0.2) node[above left] {$\mathsmaller{j}$} to (0.4,0.75);
\draw (0.4,0.75) .. controls +(0,1.5) and +(0,1.5) .. (2.8,1.5);
\draw[-<-=.5] (1.2,-0.2) node[above left] {$\mathsmaller{i}$} to (1.2,0.75);
\draw (1.2,0.75) .. controls +(0,1.15) and +(0,0.65) .. (2,1.5);
\end{tikzpicture} \; \in \Hom(j^\ast \otimes i^\ast, k^\ast) \\
\check{\alpha} \equiv \, \begin{tikzpicture}[baseline=(current bounding box.center)]
\draw[->-=.9,->-=.2] (2,0) node[below left] {$\mathsmaller{k}$} .. controls +(0,-1) and +(0,-1) .. (2.8,0) node[below right] {$\mathsmaller{i}$};
\draw[->-=.7] (2.4,-0.75) to (2.4,-1.5) node[above right] {$\mathsmaller{j}$}; 
\draw[black,fill=white] (2.4,-0.75) circle (0.285cm) node {$\check{\alpha}$};
\end{tikzpicture} \, = \,
\begin{tikzpicture}[baseline=(current bounding box.center)]
\draw[->-=.9] (2,1.5) .. controls +(0,-1) and +(0,-1) .. (2.8,1.5) node[below right] {$\mathsmaller{j}$};
\draw[->-=.3] (2,1.5) to (2,2.2) node[below left] {$\mathsmaller{i}$};
\draw[-<-=.6] (1.2,0.75) to (1.2,2.2) node[below left] {$\mathsmaller{k}$};
\draw (1.2,0.75) .. controls +(0,-1.2) and +(0,-1.2) .. (2.4,0.75);
\draw[black,fill=white] (2.4,0.75) circle (0.285cm) node {$\hat{\alpha}$};
\draw[-<-=.5] (3.6,-0.2) node[above left] {$\mathsmaller{j}$} to (3.6,0.75);
\draw (2.8,1.5) .. controls +(0,0.65) and +(0,1.85) .. (3.6,0.75);
\end{tikzpicture} \; \in \Hom(j^\ast, k^\ast \otimes i)
\end{gather*}
are dual to $\alpha$ with respect to the pairings $\kappa$, $\eta$ and $\vartheta$, respectively, and these pairings are the same up to the canonical isomorphisms of Equation \eqref{eq:hom-isomorphisms}.
\end{lemma}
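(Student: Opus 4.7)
The lemma compiles three separate duality claims together with a compatibility statement, and my plan is to treat each in turn and then read off the compatibility for free. First I would dispense with the pairing $\kappa$: by the very definition of $\hat{\alpha}$ in Equation~\eqref{eq:dual-kappa}, composing $\alpha$ on top of $\hat{\beta}$ yields $\delta_{\alpha\beta}\id_k$, so $\kappa(\alpha,\hat{\alpha})=\delta_{\alpha\alpha}$ and nothing needs to be done. This tautological step fixes the normalization that governs the other two cases.

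Next, for $\eta$-duality of $\hat{\alpha}^\ast$ I would invoke Lemma~\ref{lem:cat-dual-kappa-dual} to identify $\hat{\alpha}^\ast=(\hat{\alpha})^\ast=\widehat{\alpha^\ast}$, substitute the graphical expression for the categorical dual into the definition of $\eta$ in Equation~\eqref{eq:def-pairing-eta}, and then apply the left-snake identity~\eqref{eq:snake-identities-left} twice to straighten the two cap-and-cup pairs that encircle $\alpha$ and $\hat{\beta}$. After the straightening the diagram collapses to exactly the composite appearing inside $\kappa(\alpha,\hat{\beta})$, giving $\delta_{\alpha\beta}$. This is precisely the string-diagram calculation already carried out in the block ending at~\eqref{eq:eta-dual}, so I would simply cite that computation verbatim. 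For $\vartheta$-duality of $\check{\alpha}$, the argument is strictly parallel: substitute the definition $\check{\alpha}=(\id_{k^\ast}\otimes\hat{\alpha})\circ(\widetilde{\coev}_k\otimes\id_j)\circ(\,\ldots)$ (equivalently, the graphical expression just above Equation~\eqref{eq:appendix-check-hatstar}) into the defining composite of $\vartheta$ from Equation~\eqref{eq:def-pairing-theta}, and use the right-snake identity~\eqref{eq:snake-identities-right} to pull the loops straight; what remains is again $\kappa(\alpha,\hat{\beta})=\delta_{\alpha\beta}$.

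It remains to check that the three pairings agree under the canonical isomorphisms of~\eqref{eq:hom-isomorphisms}. Here the plan is to make the isomorphisms completely explicit: $\Hom(k,i\otimes j)\to\Hom(j^\ast\otimes i^\ast,k^\ast)$ sends a morphism $\varphi$ to its categorical dual $\varphi^\ast$, and $\Hom(k,i\otimes j)\to\Hom(j^\ast,k^\ast\otimes i)$ sends $\varphi$ to $(\id_{k^\ast}\otimes\widetilde{\ev}_i)\circ(\varphi\otimes\id_i)\circ\widetilde{\coev}_i$ followed by pivotality. Then the definition of $\hat{\alpha}^\ast$ already exhibits it as the image of $\hat{\alpha}$ under the first isomorphism, and the string-diagram identity displayed in the lemma statement shows $\check{\alpha}$ as the image of $\hat{\alpha}$ under the second; Equation~\eqref{eq:appendix-check-hatstar} is exactly the compatibility under pivotality that is needed. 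Combining this with the snake-identity reductions of the previous paragraph shows that $\eta(\alpha,-)$ and $\vartheta(\alpha,-)$ are just $\kappa(\alpha,-)$ pulled back along these isomorphisms, which is the precise sense in which all three pairings coincide.

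The only genuinely delicate step is the bookkeeping of orientations and of the placement of (co)evaluations in the $\vartheta$ case, since the morphism $\check{\alpha}$ lives in $\Hom(j^\ast,k^\ast\otimes i)$ rather than in a Hom space with the same factors as $\alpha$; keeping track of which snake identity (left or right) is needed, and of the pivotality identification tacitly used to pass between $\tensor[^{\ast}]{X}{}$ and $X^\ast$, is the main place an error could slip in. All other work is transcription of the diagrammatic computations already presented in the body of Appendix~\ref{appendix:algebras-graphical-notation}, so no further nontrivial input is required.
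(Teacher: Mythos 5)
Your proposal is correct and follows the same route the paper takes: Lemma~\ref{lem:duals-are-equivalent} is a summary statement, and its proof is precisely the sequence of graphical computations surrounding Equations~\eqref{eq:dual-kappa}, \eqref{eq:eta-dual}, \eqref{eq:def-pairing-theta} and \eqref{eq:appendix-check-hatstar}, together with Lemma~\ref{lem:cat-dual-kappa-dual}, which you cite in the right places. One minor caveat: the explicit formula you write for the canonical isomorphism $\Hom(k,i\otimes j)\to\Hom(j^\ast,k^\ast\otimes i)$, namely $(\id_{k^\ast}\otimes\widetilde{\ev}_i)\circ(\varphi\otimes\id_i)\circ\widetilde{\coev}_i$, does not type-check (the source $j^\ast$ never appears, and $\varphi\otimes\id_i$ cannot be precomposed with $\widetilde{\coev}_i\colon\mathds{1}\to i\otimes\tensor[^\ast]{i}{}$), but this is a harmless slip in an aside, since you correctly identify the diagrammatic content that actually carries the argument.
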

For clarity we distinguish the different incarnations of dual basis vectors in the main text, but use the fact that they can be translated into each other using isomorphisms coming from the rigid, pivotal structure of the category.


\section{Module category theory}
\label{appendix:module-categories}

A module category over a monoidal category is a ``categorification'' of the algebraic concept of a module over a ring. For ease of reference and for the reader's convenience we collect some definitions and fundamental theorems which are used in the main text. We refer to \cite[Chapter 7]{egno} for details. By $\mathds{k}$ we denote an algebraically closed field of zero characteristic. All categories are assumed to be finite semisimple abelian $\mathds{k}$-linear categories. 

\subsection{Module and bimodule categories}

Recall that a monoidal category is a tuple $(\mathcal{C},\otimes,\mathds{1},\alpha,\rho,\lambda)$ consisting of a category $\mathcal{C}$, a tensor product bifunctor $\otimes\colon \mathcal{C}\times\mathcal{C}\to\mathcal{C}$, a unit object $\mathds{1} \in\mathcal{C}$, and natural isomorphisms $\alpha\colon ((-\otimes-)\otimes-) \to (-\otimes(-\otimes-))$, $\rho\colon (-)\otimes\mathds{1} \to (-)$, $\lambda\colon \mathds{1} \otimes(-) \to (-)$, called associativity, right and left unit constraint, respectively, satisfying certain coherence axioms \cite[Definition 2.2.8]{egno}.

\begin{definition}[Module category]
A left \textit{module category} $\mathcal{M} \equiv (\mathcal{M},\triangleright,a,l)$ over a monoidal category $\mathcal{C} \equiv (\mathcal{C},\otimes,\mathds{1},\alpha,\rho,\lambda)$ is a category $\mathcal{M}$ with a bifunctor $\triangleright \colon \mathcal{C} \times \mathcal{M} \to \mathcal{M}$ (the action functor) and natural isomorphisms $a \colon (-\otimes-)\triangleright(-) \to (-)\triangleright(-\triangleright-)$ (the module associativity isomorphism) and $l \colon \mathds{1} \triangleright (-) \to (-)$ (the module unit isomorphism) satisfying the commuting diagrams
\[
\begin{tikzcd}
{} & ((X \otimes Y) \otimes Z) \triangleright M \arrow[swap]{dl}{\alpha_{X,Y,Z}\triangleright \id_M} \arrow{dr}{a_{X\otimes Y,Z,M}} 
\\ 
(X \otimes (Y \otimes Z)) \triangleright M \dar[swap]{a_{X,Y\otimes Z,M}} && (X \otimes Y) \triangleright (Z \triangleright M) \dar{a_{X,Y,Z\triangleright M}} \\
X \triangleright ((Y \otimes Z) \triangleright M) \arrow{rr}{\id_X \triangleright a_{Y,Z,M}} && X \triangleright (Y \triangleright (Z \triangleright M))
\end{tikzcd}
\]
and
\[
\begin{tikzcd}
(X \otimes \mathds{1}) \triangleright M \arrow[swap]{dr}{\rho_X \triangleright \id_M} \arrow{rr}{a_{X,\mathds{1},M}} && X \triangleright (\mathds{1} \triangleright M) \arrow{dl}{\id_X \triangleright l_M}  \\
{} & X \triangleright M & {}
\end{tikzcd}
\]
for all $X,Y,Z \in \mathcal{C}$ and $M \in \mathcal{M}$. The isomorphisms $a$ and $l$ are called the ``module constraints'' of $\mathcal{M}$.
\end{definition}

\begin{definition}[Module functor and natural transformation]
\label{def:module-functor}
A left \textit{module functor} $F\colon {\mathcal{M}} \to {\mathcal{N}}$ between left $\mathcal{C}$-module categories $\mathcal{M} \equiv (\mathcal{M},\triangleright^{\mathcal{M}},a^{\mathcal{M}},l^{\mathcal{M}})$ and $\mathcal{N} \equiv (\mathcal{N},\triangleright^{\mathcal{N}},a^{\mathcal{N}},l^{\mathcal{N}})$ is a functor with a natural isomorphism $f\colon F(-\triangleright^{\mathcal{M}} -)\to (-)\triangleright^{\mathcal{N}} F(-)$ such that the diagrams
\[
\begin{tikzcd}
{} & F((X\otimes Y) \triangleright^{\mathcal{M}} M) \arrow[swap]{dl}{F(a_{X,Y,M}^{\mathcal{M}})} \arrow{dr}{f_{X\otimes Y,M}} 
\\ 
F(X \triangleright^{\mathcal{M}} (Y \triangleright^{\mathcal{M}} M)) \dar[swap]{f_{X,Y\triangleright^{\mathcal{M}} M}} && (X \otimes Y) \triangleright^{\mathcal{N}} F(M) \dar{a_{X,Y,F(M)}^{\mathcal{N}}} \\
X \triangleright^{\mathcal{N}} F(Y \triangleright^{\mathcal{M}} M) \arrow{rr}{\id_X \triangleright^{\mathcal{N}} f_{Y,M}} && X \triangleright^{\mathcal{N}} (Y \triangleright^{\mathcal{N}} F(M))
\end{tikzcd}
\]
and
\[
\begin{tikzcd}
F(\mathds{1}_{\mathcal{C}} \triangleright^{\mathcal{M}} M) \arrow[swap]{dr}{F(l_M^{\mathcal{M}})} \arrow{rr}{f_{\mathds{1},M}} && \mathds{1} \triangleright^{\mathcal{N}} F(M) \arrow{dl}{l_{F(M)}^{\mathcal{N}}}  \\
{} & F(M) & {}
\end{tikzcd}
\]
commute for all $X,Y \in \mathcal{C}$ and $M \in \mathcal{M}$.

A left \textit{module natural transformation} $\eta$ between module functors $(F,f), (G,g)$ is a natural transformation  which is compatible with the module functor data, i.e. the square
\[
\begin{tikzcd}
F(X \triangleright^{\mathcal{M}} M) \dar[swap]{\eta_{X \triangleright^{\mathcal{M}} M}} \arrow{rr}{f_{X,M}} && X \triangleright^{\mathcal{N}} F(M) \dar{\id_X \triangleright^{\mathcal{N}} \eta_M}  \\
G(X \triangleright^{\mathcal{M}} M) \arrow{rr}{g_{X,M}} && X \triangleright^{\mathcal{N}} G(M) 
\end{tikzcd}
\]
commutes for all $X \in \mathcal{C}$ and $M \in \mathcal{M}$.
\end{definition}

Right $\mathcal{C}$-module categories (with right action $\triangleleft$ and right module unit isomorphism $r$), their functors and natural transformations are defined analogously. Every monoidal category is a left and right module category over itself, with module action given by the tensor product and module constraints coming from the monoidal constraints.

\begin{definition}[Equivalence of module categories]
\label{def:appendix-equivalence-modulecats}
Two module categories are called \textit{equivalent}, if there exists a module functor between them inducing an equivalence of the underlying categories.
\end{definition}

\begin{definition}[Bimodule category]
\label{def:bimodule-cat}
A \textit{$\mathcal{C}$-$\mathcal{D}$-bimodule category} $\mathcal{M}$ is a left $\mathcal{C}\boxtimes\mathcal{D}^{\rev}$-module category with action $(X \boxtimes Y)\triangleright M \defeq  (X \triangleright M)\triangleleft Y$. A $\mathcal{C}$-$\mathcal{C}$-bimodule category will also be called a $\mathcal{C}$-bimodule category for short.
\end{definition}
If $\mathcal{M}$ is a left $\mathcal{C}$-module and $\mathcal{N}$ is a right $\mathcal{D}$-module category, then the Cartesian product is a $\mathcal{C}$-$\mathcal{D}$-bimodule category. Furthermore, every monoidal category $\mathcal{C}$ is a $\mathcal{C}$-bimodule category in the obvious way.

We remark that a $\mathcal{C}$-$\mathcal{D}$-bimodule category $\mathcal{M}$ can equivalently be described as a left $\mathcal{C}$-module and right $\mathcal{D}$-module category, such that the corresponding actions $\triangleright$ and $\triangleleft$ commute up to a natural isomorphism $\gamma\colon (- \triangleright -)\triangleleft (-) \to (-) \triangleright (- \triangleleft -)$ subject to certain coherence axioms, which describe the compatibility of $\triangleright$ and $\triangleleft$ with $\gamma$ (see \cite[Definition 7.1.7]{egno} and \cite[Proposition 1.3.10]{greenough-thesis} for the precise definition and the proof that both definitions are equivalent).

Bimodule functors and bimodule natural transformations are defined accordingly. Equivalently, a $\mathcal{C}$-$\mathcal{D}$-bimodule functor between $\mathcal{C}$-$\mathcal{D}$-bimodule categories is a left and right module functor such that a certain coherence hexagon commutes \cite[Remark 2.14]{greenough}. $\mathcal{C}$-$\mathcal{D}$-bimodule categories, their bimodule functors and bimodule natural transformations form a 2-category denoted $\Bimod(\mathcal{C},\mathcal{D})$ with composition of functors and composition of natural transformations as horizontal and vertical composition, respectively \cite[Lemma 2.3.11]{schaumann}.

\begin{lemma}[{\cite[Corollary 2.12]{dsps}}]
\label{lem:bimodule-eq}
Let $\mathcal{M}$ and $\mathcal{N}$ be bimodule categories and $F\colon \mathcal{M} \to \mathcal{N}$ a bimodule functor. Then $F$ is an equivalence of bimodule categories, if $F$ induces an equivalence of the underlying categories.
\end{lemma}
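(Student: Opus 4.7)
The plan is to produce a quasi-inverse of $F$ as a bimodule functor by transporting the bimodule structure of $F$ along an ordinary categorical quasi-inverse. Since $F$ is an equivalence of underlying categories, pick a quasi-inverse functor $G\colon \mathcal{N} \to \mathcal{M}$ together with natural isomorphisms $\eta\colon \id_{\mathcal{M}} \xrightarrow{\sim} G \circ F$ and $\varepsilon\colon F \circ G \xrightarrow{\sim} \id_{\mathcal{N}}$ satisfying the triangle identities. Denote by $f^{l}_{X,M}\colon F(X \triangleright^{\mathcal{M}} M) \xrightarrow{\sim} X \triangleright^{\mathcal{N}} F(M)$ and $f^{r}_{M,Y}$ the left and right module constraints of $F$.

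First I would endow $G$ with a $\mathcal{C}$-$\mathcal{D}$-bimodule functor structure. For the left action, define
\begin{gather*}
g^{l}_{X,N}\colon G(X \triangleright^{\mathcal{N}} N) \xrightarrow{G(\id \triangleright \varepsilon_N^{-1})} G(X \triangleright^{\mathcal{N}} FG(N)) \xrightarrow{G((f^{l}_{X,G(N)})^{-1})} GF(X \triangleright^{\mathcal{M}} G(N)) \xrightarrow{\eta^{-1}_{X \triangleright^{\mathcal{M}} G(N)}} X \triangleright^{\mathcal{M}} G(N),
\end{gather*}
and analogously $g^{r}_{N,Y}$ on the right. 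The key verification is that $(G,g^{l},g^{r})$ satisfies the left and right pentagon and triangle axioms of Definition \ref{def:module-functor}, together with the hexagon relating $g^{l}$ and $g^{r}$. Each of these reduces, by inserting $\varepsilon$ and $\eta$ repeatedly and applying naturality, to the corresponding axiom for $(F,f^{l},f^{r})$; the triangle identities for $(\eta,\varepsilon)$ are used to eliminate the adjunction data at the end.

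Next I would check that $\eta$ and $\varepsilon$ are bimodule natural transformations once $G$ is equipped with $(g^{l},g^{r})$. By construction of $g^{l}$ the square expressing that $\varepsilon\colon F \circ G \Rightarrow \id_{\mathcal{N}}$ respects the left module constraints commutes after applying $F$, and since $F$ is an equivalence of underlying categories (hence faithful), it commutes on the nose. The analogous statement for $\eta$ follows from the triangle identities, and the right module compatibilities are identical. Taking these together yields a bimodule equivalence $\mathcal{M} \simeq \mathcal{N}$.

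The only place where work is really needed is the coherence check for $g^{l}$ (and its right counterpart): a diagram chase using the pentagon axiom for $f^{l}$, naturality of $\eta$ and $\varepsilon$ in the relevant slots, and the triangle identity $\varepsilon F \circ F\eta = \id_F$. The bimodule hexagon for $(G,g^{l},g^{r})$ reduces to that of $F$ in the same way. None of these steps is conceptually subtle, so the main obstacle is purely bookkeeping; in particular, no finiteness or semisimplicity hypothesis on $\mathcal{M}$ or $\mathcal{N}$ enters the argument.
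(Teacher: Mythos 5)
Your proposal is correct and is the standard argument: transport the bimodule functor structure of $F$ along a chosen categorical quasi-inverse $G$ and verify the axioms by inserting the unit and counit and using naturality plus a triangle identity. The paper does not give its own proof of this lemma --- it is simply cited to \cite[Corollary 2.12]{dsps}, and what you have written is essentially the argument behind that citation, so there is nothing substantively different to compare. One minor wording issue: for $\varepsilon\colon F\circ G \Rightarrow \id_{\mathcal{N}}$ the compatibility square already lives in $\mathcal{N}$, so the phrase ``commutes after applying $F$, hence on the nose by faithfulness'' does not literally apply there (one can instead verify it directly: unwind $g^{l}$, replace $F(\eta^{-1})$ by $\varepsilon_{F(-)}$ via the triangle identity, and cancel using naturality of $\varepsilon$); the faithfulness trick is the natural one for the square in $\mathcal{M}$ expressing that $\eta$ is a module transformation. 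This is a cosmetic mix-up, not a gap, and your observation that no finiteness or semisimplicity is used is accurate.
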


\begin{proposition}[{\cite[Proposition 7.1]{greenough}}]
\label{prop:from-module-to-bimodule-category}
Let $(\mathcal{M},\triangleright,a,l)$ be a left module category over a braided monoidal category $\mathcal{C}$ with braiding $c$. Then $\mathcal{M}$ is an $\mathcal{C}$-bimodule category: it is a right $\mathcal{C}$-module category $(\mathcal{M},\triangleleft,a',r)$ via
\begin{gather*}
M \triangleleft X \defeq  X \triangleright M \, \, ,\\
a'_{M,X,Y} \defeq a_{Y,X,M} \circ (\id_M \triangleleft c_{X,Y})\colon M \triangleleft (X \otimes Y) \to (M \triangleleft X) \triangleleft Y \, \, ,\\
r_M \defeq l_M \colon M \triangleleft \mathds{1} \equiv \mathds{1} \triangleright M \to M \, \, .
\end{gather*}
This means the $\mathcal{C}\boxtimes\mathcal{C}^{\rev}$-module action $\widetilde{\triangleright}$ is given by $(X\boxtimes Y)\widetilde{\triangleright} M \defeq  (X \triangleright M) \triangleleft Y \equiv Y \triangleright (X \triangleright M)$. The mixed associativity isomorphism is
\begin{gather*}
\gamma_{X,M,Y} \defeq a_{X,Y,M} \circ (c_{Y,X} \triangleright \id_M) \circ a_{Y,X,M}^{-1}\colon\\(X \triangleright M)\triangleleft Y \equiv Y \triangleright (X \triangleright M) \to X \triangleright (M \triangleleft Y) \equiv X \triangleright (Y \triangleright M) \, \, .
\end{gather*}
\end{proposition}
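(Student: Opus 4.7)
The plan is to verify the three ingredients separately: (i) that $(\mathcal{M},\triangleleft,a',r)$ is a right $\mathcal{C}$-module category, (ii) that the left and right actions are compatible through $\gamma$ in the sense of \cite[Definition~7.1.7]{egno}, and (iii) that these data assemble into the stated left $\mathcal{C}\boxtimes\mathcal{C}^{\rev}$-module action $\widetilde{\triangleright}$. For readability I would freely invoke Mac Lane's strictness theorem for the monoidal structure of $\mathcal{C}$, so that all monoidal associators and unitors vanish from the diagrams and the verification reduces to three ingredients: the pentagon for $a$, naturality of $a$ and of $c$, and the two hexagon axioms for the braiding $c$.

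For (i), the right unit triangle for $r_M=l_M$ follows tautologically from the left unit triangle together with the coherence of the braiding with the unit (which in the strict setting makes $c_{X,\mathds{1}}=\id_X$). For the pentagon, one spells out the required equality
\begin{gather*}
a'_{M\triangleleft X,Y,Z}\circ a'_{M,X,Y\otimes Z} \;=\; (a'_{M,X,Y}\triangleleft \id_Z)\circ a'_{M,X\otimes Y,Z},
\end{gather*}
and after substituting $a'_{M,X,Y}=a_{Y,X,M}\circ(\id_M\triangleleft c_{X,Y})$ both sides translate into morphisms built from $a$ and from the braidings $c_{X,Z}$, $c_{Y,Z}$, $c_{X\otimes Y,Z}$. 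Naturality of $a$ moves the braidings next to each other, and the second hexagon axiom for $c$ converts $c_{X\otimes Y,Z}$ into the composite $(\id_X\otimes c_{Y,Z})\circ(c_{X,Z}\otimes\id_Y)$, after which the left-module pentagon for $a$ identifies the two sides.

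For (ii), the mixed constraint $\gamma_{X,M,Y}=a_{X,Y,M}\circ(c_{Y,X}\triangleright\id_M)\circ a_{Y,X,M}^{-1}$ is manifestly natural, and its two coherence hexagons (one expressing compatibility with $\otimes$ on the left, one on the right) reduce, after unpacking the definitions, to two instances of the hexagon for $c$ sandwiched between applications of the pentagon for $a$; this is the same mechanism as in part (i). Finally, (iii) is essentially bookkeeping: under the identification $(X\boxtimes Y)\widetilde{\triangleright}M = Y\triangleright(X\triangleright M)$, the associativity isomorphism for $\widetilde{\triangleright}$ factors as three instances of $a$ interleaved with one braiding coming from $\gamma$, and its pentagon reduces to the pentagon for $a$ plus the hexagons for $c$ already used above; this is exactly the content of \cite[Proposition~1.3.10]{greenough-thesis} that the two formulations of a bimodule category coincide.

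The main obstacle is purely notational: keeping track of the order in which left and right actions are applied when translating between $\triangleleft$ and $\triangleright$, and between bimodule data $(\triangleright,\triangleleft,\gamma)$ and the $\mathcal{C}\boxtimes\mathcal{C}^{\rev}$-action $\widetilde{\triangleright}$. The strategic simplification is to pass to strict $\mathcal{C}$ and to observe that $\gamma$ essentially \emph{is} a braiding morphism applied inside $\triangleright$, so every coherence to be checked reduces to a combination of a braiding hexagon and naturality; no genuinely new diagram appears.
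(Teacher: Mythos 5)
The paper does not actually prove this proposition; it is stated with the citation ``\cite[Proposition 7.1]{greenough}'' and no argument is supplied, and the equivalence of the two notions of bimodule category that you invoke in part (iii) is likewise deferred to \cite[Proposition 1.3.10]{greenough-thesis} in the remark following Definition~\ref{def:bimodule-cat}. So there is no in-text proof to compare against; what one can assess is whether your sketch is a faithful reconstruction of what the cited proof must contain.

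In broad outline it is: the decomposition into (i) a right $\mathcal{C}$-module pentagon and triangle, (ii) the two coherence hexagons for $\gamma$, and (iii) the packaging as a $\mathcal{C}\boxtimes\mathcal{C}^{\rev}$-module action is exactly the right structure, and the list of ingredients (Mac Lane strictification, the left-module pentagon for $a$, naturality of $a$ and $c$, and the braiding hexagons) is what Greenough's proof reduces to. Two local imprecisions are worth flagging. First, your claimed hexagon decomposition $c_{X\otimes Y,Z}=(\id_X\otimes c_{Y,Z})\circ(c_{X,Z}\otimes\id_Y)$ is the factors in the wrong order (indeed it does not typecheck); it should be $(c_{X,Z}\otimes\id_Y)\circ(\id_X\otimes c_{Y,Z})$. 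Second, the pentagon for $a'$ does not close with a single hexagon: after pushing all braidings to the bottom via naturality of $a$, the two sides become $c_{Y,Z}\circ c_{X,Y\otimes Z}$ versus $c_{X,Y}\circ c_{X\otimes Y,Z}$ (acting inside the appropriate $\triangleright$), and after expanding both with their respective hexagons one is left to verify the Yang--Baxter relation $\sigma_1\sigma_2\sigma_1=\sigma_2\sigma_1\sigma_2$, which the paper records as the third Reidemeister move in Equation~\eqref{eq:reidemeister3}. That is a consequence of both hexagons together with naturality, not of ``the second hexagon axiom'' alone. You do list ``the two hexagon axioms'' among your ingredients at the start, so this is a gap in the writeup of the step rather than in the strategy, but in a full proof it is precisely the point where the braid-group content enters, and it is worth making explicit.
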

We remark that an analogous result is obtained when the inverse braiding $c^{-1}$ is used instead of $c$ in the definition of $a'$ and $\gamma$. 

The previous proposition immediately allows to extend all module functors to bimodule functors:
\begin{proposition}
\label{prop:from-module-functor-to-bimodule-functor}
Let $(\mathcal{M},\triangleright^{\mathcal{M}},a^{\mathcal{M}},l^{\mathcal{M}})$ and $(\mathcal{N},\triangleright^{\mathcal{N}},a^{\mathcal{N}},l^{\mathcal{N}})$ be left module categories over a braided monoidal category $\mathcal{C}$ with braiding $c$, and let $(F,f)\colon \mathcal{M} \to \mathcal{N}$ be a left module functor. Then $F$ can be extended to a bimodule functor between the bimodule categories $(\mathcal{M},\triangleright^{\mathcal{M}},a^{\mathcal{M}},l^{\mathcal{M}},a'^{\mathcal{M}})$ and $(\mathcal{N},\triangleright^{\mathcal{N}},a^{\mathcal{N}},l^{\mathcal{N}},a'^{\mathcal{N}})$ obtained from Proposition \ref{prop:from-module-to-bimodule-category}.
\end{proposition}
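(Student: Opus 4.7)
The plan is to define the right module functor constraint on $F$ in the only way forced by the identifications of Proposition \ref{prop:from-module-to-bimodule-category}: since $M \triangleleft X = X \triangleright^{\mathcal{M}} M$ in $\mathcal{M}$ and $F(M) \triangleleft X = X \triangleright^{\mathcal{N}} F(M)$ in $\mathcal{N}$, I set
\begin{gather*}
\tilde{f}_{M,X} \defeq f_{X,M} \colon F(M \triangleleft X) \to F(M) \triangleleft X .
\end{gather*}
Naturality of $\tilde{f}$ in both variables is immediate from naturality of $f$. What remains is to check three coherence diagrams: the right-module pentagon and triangle for $(F,\tilde{f})$, and the mixed-associator hexagon that upgrades a pair of compatible left/right module functor structures to a bimodule functor structure (\cite[Remark 2.14]{greenough}).

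The triangle axiom for $\tilde{f}$ is automatic because $r = l$ on both sides by construction, and the relevant identity for $\tilde{f}_{M,\mathds{1}} = f_{\mathds{1},M}$ is just the left-module triangle for $f$. For the right-module pentagon I would expand $a'^{\mathcal{M}}$ and $a'^{\mathcal{N}}$ via their definitions in Proposition \ref{prop:from-module-to-bimodule-category},
\begin{gather*}
a'_{M,X,Y} = a_{Y,X,M} \circ (\id_M \triangleleft c_{X,Y}) ,
\end{gather*}
and rewrite the desired equation as an equality of morphisms $F((X\otimes Y) \triangleright^{\mathcal{M}} M) \to Y \triangleright^{\mathcal{N}} (X \triangleright^{\mathcal{N}} F(M))$. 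The left-module pentagon for $f$ delivers the part involving $a^{\mathcal{M}}$ and $a^{\mathcal{N}}$, while the $c_{X,Y}$ pieces are handled by naturality of $f$ at the morphism $c_{X,Y}\triangleright^{\mathcal{M}} \id_M$. Splicing these gives the pentagon for $\tilde{f}$.

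The genuinely structural step is the mixed hexagon encoding compatibility with $\gamma$. Unfolding
\begin{gather*}
\gamma_{X,M,Y} = a_{X,Y,M} \circ (c_{Y,X} \triangleright \id_M) \circ a^{-1}_{Y,X,M}
\end{gather*}
in both $\mathcal{M}$ and $\mathcal{N}$, the hexagon decomposes into three sub-squares: one governed by $a^{-1}$, one by the braiding, and one by $a$. Each of the outer squares is a left-module pentagon for $f$ (the one on the right applied to $a^{-1}$ by inverting the pentagon), and the central square is naturality of $f$ at the morphism $c_{Y,X} \triangleright \id_M$. Pasting these three commuting squares produces the hexagon.

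The main obstacle, in the sense of bookkeeping rather than ideas, is precisely this last diagram chase: one has to be disciplined about the order in which $f$ is applied (first to $Y$, then to $X$, or vice versa) because $\gamma$ rearranges the relative order of the left and right actions, and in the bimodule language this translates to permuting two successive applications of the left action on $\mathcal{M}$. Once the three squares are laid out and oriented consistently, the verification is purely formal; the ``hard'' point is simply observing that no additional axiom beyond the left-module pentagon plus naturality of $f$ is needed, which is ultimately why Proposition \ref{prop:from-module-to-bimodule-category} automatically lifts to the functor level.
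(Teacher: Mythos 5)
Your proposal is correct and takes essentially the same route as the paper: define the right module functor constraint by $\tilde f_{M,X} := f_{X,M}$, then verify the right-module pentagon and the mixed-associator hexagon by unfolding $a'$ and $\gamma$ in terms of $a$ and the braiding and invoking the left-module pentagon for $f$ twice together with naturality of $f$ at a braiding morphism. The paper merely asserts the right-module pentagon ``follows from the definitions and the pentagon axiom of $(F,f)$'' and devotes its explicit computation to the hexagon; your account of the right-module pentagon (left pentagon for $f$ plus naturality at $c_{X,Y}\triangleright^{\mathcal{M}}\id_M$, which lets the braiding factor cancel on both sides) is a correct and slightly more explicit rendering of that step. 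Your three-square pasting of the hexagon, with bridges $f_{Y\otimes X,M}$ and $f_{X\otimes Y,M}$ and the central square being naturality of $f$ at $c_{Y,X}\triangleright\id_M$, is the diagrammatic form of the computational chain the paper writes out in displays.
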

\begin{proof}
One can directly show that $(F,f^r)$ is a right module functor between the right module categories $(\mathcal{M},\triangleleft^{\mathcal{M}},a'^{\mathcal{M}},r^{\mathcal{M}})$ and $(\mathcal{N},\triangleleft^{\mathcal{N}},a'^{\mathcal{N}},r^{\mathcal{N}})$ (using the notation of Proposition \ref{prop:from-module-to-bimodule-category}), where $f^r_{M,X} \defeq f_{X,M} \colon F(M \triangleleft^{\mathcal{M}} X) = F(X \triangleright^{\mathcal{M}} M) \to F(M) \triangleleft^{\mathcal{N}} X = X \triangleright^{\mathcal{N}} F(M)$ is the right module functor constraint; the corresponding module axioms follow from the definition of the right module constraints $a'^{\mathcal{M}}$, $r^{\mathcal{M}}$, $a'^{\mathcal{N}}$, $r^{\mathcal{N}}$ in Proposition \ref{prop:from-module-to-bimodule-category} and the pentagon axiom of the left module functor $(F,f)$. We show that the coherence hexagon \cite[Remark 2.14]{greenough}, which describes the compatibility of left and right action, is satisfied. That is, the equation
\begin{gather}
\gamma^{\mathcal{N}}_{X,F(M),Y} \circ (f_{X,M} \triangleleft \id_Y) \circ f^r_{X\triangleright M,Y} = (\id_X \triangleright f^r_{M,Y}) \circ f_{X,M\triangleleft Y} \circ F(\gamma^{\mathcal{M}}_{X,M,Y})
\label{eq:coherence-hex}
\end{gather}
has to hold, where $\gamma^{\mathcal{M}}$ and $\gamma^{\mathcal{N}}$ are the mixed associativity isomorphisms as in Proposition \ref{prop:from-module-to-bimodule-category}. The right hand side can be rewritten in the following way:
\begin{gather}
(\id_X \triangleright f^r_{M,Y}) \circ f_{X,M\triangleleft Y} \circ F(\gamma^{\mathcal{M}}_{X,M,Y}) = \nonumber \\ = (\id_X \triangleright f_{Y,M}) \circ f_{X,Y\triangleright M} \circ F(a^{\mathcal{M}}_{X,Y,M}) \circ F(c_{Y,X} \triangleright \id_M) \circ F(a^{\mathcal{M}}_{Y,X,M})^{-1} = \nonumber \\ = a^{\mathcal{N}}_{X,Y,F(M)} \circ f_{X \otimes Y,M} \circ F(c_{Y,X} \triangleright \id_M) \circ F(a^{\mathcal{M}}_{Y,X,M})^{-1}
\label{eq:coherence-hex-left}
\end{gather}
Here we used the explicit form of $\gamma^{\mathcal{M}}_{X,M,Y}$ and the pentagon axiom of the left module functor $(F,f)$. The same pentagon axiom is also inserted into the left hand side of Equation \eqref{eq:coherence-hex}, together with the explicit form of $\gamma^{\mathcal{N}}_{X,F(M),Y}$:
\begin{gather*}
\gamma^{\mathcal{N}}_{X,F(M),Y} \circ (f_{X,M} \triangleleft \id_Y) \circ f^r_{X\triangleright M,Y} = \\ = \gamma^{\mathcal{N}}_{X,F(M),Y} \circ (\id_Y \triangleright f_{X,M}) \circ f_{Y,X\triangleright M} = \gamma^{\mathcal{N}}_{X,F(M),Y} \circ a^{\mathcal{N}}_{Y,X,F(M)} \circ f_{Y \otimes X,M} \circ F(a^{\mathcal{M}}_{Y,X,M})^{-1} = \\ = a^{\mathcal{N}}_{X,Y,F(M)} \circ (c_{Y,X} \triangleright \id_{F(M)}) \circ (a^{\mathcal{N}}_{Y,X,F(M)})^{-1} \circ a^{\mathcal{N}}_{Y,X,F(M)} \circ f_{Y \otimes X,M} \circ F(a^{\mathcal{M}}_{Y,X,M})^{-1} = \\ = a^{\mathcal{N}}_{X,Y,F(M)} \circ (c_{Y,X} \triangleright \id_{F(M)}) \circ f_{Y \otimes X,M} \circ F(a^{\mathcal{M}}_{Y,X,M})^{-1}
\end{gather*}
This equals the expression \eqref{eq:coherence-hex-left} if the equation
\begin{gather*}
(c_{Y,X} \triangleright \id_{F(M)}) \circ f_{Y \otimes X,M} = f_{X \otimes Y,M} \circ F(c_{Y,X} \triangleright \id_M)
\end{gather*}
is satisfied, which is indeed the case by functoriality of the braiding $c$. Hence $(F,f,f^r)$ is a $\mathcal{C}$-bimodule functor from $(\mathcal{M},\triangleright^{\mathcal{M}},a^{\mathcal{M}},l^{\mathcal{M}},a'^{\mathcal{M}})$ to $(\mathcal{N},\triangleright^{\mathcal{N}},a^{\mathcal{N}},l^{\mathcal{N}},a'^{\mathcal{N}})$.
\end{proof}

\subsection{Categories of modules}
\label{appendix:cats-of-modules-algebras}

Let $(\mathcal{C},\otimes,\mathds{1},\alpha,\rho,\lambda)$ be a monoidal category. We want to study algebras and modules internal to $\mathcal{C}$, thus generalizing the classical notions from basic algebra (which are internal to $\Vectk$, the category of $\mathds{k}$-vector spaces and linear maps).

\begin{definition}[Algebra object]
\label{def:appendix-algebra-object}
An \textit{algebra object} (or \textit{algebra}) $A$ in $\mathcal{C}$ is a triple $(A,m_A,u_A)$ consisting of an object $A \in \mathcal{C}$, a morphism $m_A \colon A \otimes A \to A$ called ``multiplication'', and a morphism $u_A \colon \mathds{1} \to A$ called ``unit'', such that the diagrams
\[
\begin{tikzcd}
(A \otimes A) \otimes A \dar[swap]{m_A \otimes \id_A} \arrow{rr}{\alpha_{A,A,A}} && A \otimes (A\otimes A) \dar{id_A \otimes m_A} \\
A \otimes A \arrow[swap]{dr}{m_A} && A \otimes A \arrow{dl}{m_A} \\
{} & A & {}
\end{tikzcd}
\]
and
\[
\begin{tikzcd}
\mathds{1} \otimes A \arrow[swap]{dr}{\lambda_A} \arrow{rr}{u_A \otimes \id_A} && A \otimes A \arrow{dl}{m_A}  \\
{} & A & {}
\end{tikzcd}\;\;\;
\begin{tikzcd}
A \otimes \mathds{1} \arrow[swap]{dr}{\rho_A} \arrow{rr}{\id_A \otimes u_A} && A \otimes A \arrow{dl}{m_A}  \\
{} & A & {}
\end{tikzcd}
\]
commute.
\end{definition}
For $\mathcal{C}=\Vectk$ this definition coincides with the definition of an associative $\mathds{k}$-algebra with unit. 

\begin{definition}[Algebra homomorphism]
\label{def:algebra-homomorphism}
An \textit{algebra homomorphism} between two algebras $(A,m_A,u_A),(B,m_B,u_B)$ in $\mathcal{C}$ is a morphism $f\colon A\to B$ in $\mathcal{C}$ which is compatible with multiplications and units in the sense that $f \circ u_A = u_B$ and $f \circ m_A = m_B \circ (f \otimes f)$.
\end{definition}
This leads to the notion of an algebra isomorphism in the obvious way. If $f\colon A\to B$ is an isomorphism (in the category) and an algebra homomorphism, then it is also an algebra isomorphism.

\begin{proposition}[Opposite algebra]
\label{def:op-alg}
Let $A\equiv(A,m_A,u_A)$ be an algebra in a braided monoidal category $\mathcal{C}$ with braiding $c$. Then $(A,m_A \circ c_{A,A},u_A)$ is an algebra, called the \emph{opposite algebra} $A^{\op}$.
\end{proposition}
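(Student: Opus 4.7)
The statement is a classical fact once one commits to the convention $m_A^{\op} \defeq m_A \circ c_{A,A}$, and the proof is a diagram chase using the hexagon axioms for the braiding together with associativity/unitality of $A$. Since $\mathcal{C}$ is monoidally strict without loss of generality (Mac Lane), I will suppress $\alpha,\rho,\lambda$ throughout; the proof in the general case is obtained by inserting the canonical coherence isomorphisms. The plan has two steps: verify the unit axioms, then verify associativity.

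For the unit, I would compute $m_A^{\op} \circ (u_A \otimes \id_A) = m_A \circ c_{A,A} \circ (u_A \otimes \id_A)$. By naturality of the braiding, $c_{A,A} \circ (u_A \otimes \id_A) = (\id_A \otimes u_A) \circ c_{\mathds{1},A}$, and $c_{\mathds{1},A} = \id_A$ in the strict setting (this is the standard consequence of the hexagon axioms applied to the unit, cf.\ \cite[Proposition~XIII.1.2]{kassel}). Hence $m_A^{\op} \circ (u_A \otimes \id_A) = m_A \circ (\id_A \otimes u_A) = \id_A$, where the last equality is the right unit axiom of $A$. The left unit axiom for $A^{\op}$ is verified symmetrically, using $c_{A,\mathds{1}} = \id_A$ and the left unit of $A$.

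For associativity I would draw both sides of
\begin{gather*}
m_A^{\op} \circ (m_A^{\op} \otimes \id_A) = m_A^{\op} \circ (\id_A \otimes m_A^{\op})
\end{gather*}
as string diagrams and reduce them to a common normal form. Expanding, the left-hand side is $m_A \circ c_{A,A} \circ (m_A \circ c_{A,A}) \otimes \id_A)$; using naturality of $c_{-,A}$ to pull the top braiding past the bottom multiplication, then the hexagon axiom to rewrite $c_{A,A} \circ (m_A \otimes \id_A)$ in terms of $c_{A,A \otimes A}$ and finally the hexagon for $c_{A,A\otimes A}$ as a composition of elementary braidings, the expression becomes $m_A \circ (\id_A \otimes m_A)$ precomposed with a specific braid on three $A$-strands. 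The same manipulation applied to the right-hand side yields $m_A \circ (m_A \otimes \id_A)$ precomposed with the same braid (the Yang--Baxter/third Reidemeister move of Equation~\eqref{eq:reidemeister3} is what makes the two braids coincide). Associativity of $m_A$ then finishes the argument.

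The main obstacle is not conceptual but bookkeeping: one must keep careful track of which hexagon (involving $c_{A,A\otimes A}$ or $c_{A\otimes A,A}$) is invoked on each side so that the residual braid diagrams agree after applying Yang--Baxter. Working in graphical calculus (Section~\ref{sec:prelim}) is essentially forced here, since in string-diagram form the two sides become visibly isotopic braids on three parallel $A$-strands and the verification reduces to a single application of Equation~\eqref{eq:reidemeister3} together with naturality of $c$.
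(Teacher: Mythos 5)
Your proof is correct, and it fills in exactly the routine diagram chase that the paper silently omits (the paper offers only the one-line remark that the claim ``follows directly from the definitions''). One small clarification on your associativity step: the rewrite of $c_{A,A}\circ(m_A\otimes\id_A)$ into $(\id_A\otimes m_A)\circ c_{A\otimes A,A}$ is an application of naturality of $c_{-,A}$ alone, not of the hexagon; the hexagon enters only afterwards, to decompose $c_{A\otimes A,A}$ (and, on the other side, $c_{A,A\otimes A}$) into elementary braidings. Once both sides are in that form, associativity of $m_A$ and the Yang--Baxter relation of Equation~\eqref{eq:reidemeister3} indeed finish the argument as you indicate.
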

It follows directly from the definitions that the opposite algebra is indeed an algebra.
\begin{proposition}[Tensor product algebra]
\label{prop:alg-structure-on-tensor-product}
Let $(A,m_A,u_A)$ and $(B,m_B,u_B)$ be algebras in a braided strict monoidal category $\mathcal{C}$ with braiding $c$. Then the \emph{tensor product algebra} $A \otimes B$ is $(A\otimes B, m_{A\otimes B}, u_{A\otimes B})$ with $m_{A\otimes B} = (m_A \otimes m_B) \circ (\id_A \otimes c_{B,A} \otimes \id_B)$ and $u_{A\otimes B} = u_A \otimes u_B$. The tensor product of algebras is associative.
\end{proposition}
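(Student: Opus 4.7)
The plan is to first verify that $(A\otimes B, m_{A\otimes B}, u_{A\otimes B})$ is an algebra object in $\mathcal{C}$ by checking the associativity and unit axioms of Definition \ref{def:appendix-algebra-object}, and then to establish that the tensor product of algebras is associative by comparing the two natural algebra structures on the underlying object $A \otimes B \otimes C$. Throughout, I would use the graphical calculus of string diagrams (as introduced in Section \ref{sec:prelim}), where $m_{A \otimes B}$ is depicted as two parallel multiplication vertices $m_A$, $m_B$ with a single braiding $c_{B,A}$ in between, and the unit $u_A \otimes u_B$ as two parallel unit morphisms.

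First I would treat the unit axioms: by functoriality of $\otimes$, composing $m_{A\otimes B}$ with $u_{A\otimes B} \otimes \id_{A\otimes B} = u_A \otimes u_B \otimes \id_A \otimes \id_B$ factors through $(\id_A \otimes c_{B,A} \otimes \id_B) \circ (u_A \otimes u_B \otimes \id_A \otimes \id_B)$. By naturality of the braiding we can slide $u_B$ past $A$, using $c_{\mathds{1}, A} = \id_A$ in the strict setting, to obtain $u_A \otimes \id_A \otimes u_B \otimes \id_B$. The unit axioms for $A$ and $B$ then collapse this to $\id_{A \otimes B}$. The right unit axiom is symmetric. Next I would tackle associativity, which is the main technical step: both sides of the associativity pentagon for $m_{A\otimes B}$ are morphisms $(A\otimes B)^{\otimes 3} \to A \otimes B$, and rearranging each side into a diagram with two $m_A$-vertices on the left and two $m_B$-vertices on the right, the question reduces to showing that the braiding subdiagram between them coincides. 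The key input is the hexagon axiom in the form $c_{B, A \otimes A} = (\id_A \otimes c_{B,A}) \circ (c_{B,A} \otimes \id_A)$ and its mirror $c_{B \otimes B, A} = (c_{B,A} \otimes \id_B) \circ (\id_B \otimes c_{B,A})$, combined with naturality of $c$ to slide the middle $m_A$ or $m_B$ past an adjacent strand. Using associativity of $m_A$ and $m_B$ separately, both parenthesizations of $m_{A\otimes B}$ yield the same normal form.

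For the second assertion, strictness of $\mathcal{C}$ gives $(A\otimes B) \otimes C = A \otimes (B\otimes C)$ as objects, so the claim reduces to showing that the multiplications $m_{(A\otimes B)\otimes C}$ and $m_{A\otimes (B\otimes C)}$ obtained from two iterated applications of Proposition \ref{prop:alg-structure-on-tensor-product} agree. Unpacking the definitions, the former is $(m_{A\otimes B} \otimes m_C) \circ (\id_{A\otimes B} \otimes c_{C, A\otimes B} \otimes \id_C)$, and the latter is $(m_A \otimes m_{B\otimes C}) \circ (\id_A \otimes c_{B\otimes C, A} \otimes \id_{B\otimes C})$. Expanding $m_{A\otimes B}$, $m_{B\otimes C}$ and then using the hexagon identities to decompose $c_{C, A\otimes B}$ and $c_{B\otimes C, A}$ into pairs of elementary braidings, a short string-diagram rearrangement using naturality shows that both expressions equal the canonical diagram with three braidings $c_{C,A}$, $c_{C,B}$, $c_{B,A}$ positioned in the middle row and the three multiplications $m_A, m_B, m_C$ on the outside, establishing the algebra isomorphism (in fact, equality).

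The main obstacle is the associativity of $m_{A\otimes B}$, where one must carefully route the middle strands through the braiding vertex without introducing extra crossings; the cleanest way is to perform the verification entirely in string-diagram form, where each hexagon application and each naturality move is a local graphical manipulation, avoiding the bookkeeping of writing out the long composites algebraically.
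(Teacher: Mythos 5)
Your proposal is correct and fills in, with the expected string-diagram bookkeeping (unit axioms via naturality of $c$ and $c_{\mathds{1},X}=\id_X$, associativity via the hexagon identities and naturality, and the iterated-product comparison via commuting far-apart braidings), exactly the routine verification that the paper dispatches in a single sentence (``a direct consequence of naturality of the braiding and associativity of the strict monoidal structure'') without writing out a proof. There is no methodological divergence; you are simply making explicit what the paper treats as a standard direct check.
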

This is a direct consequence of naturality of the braiding and associativity of the strict monoidal structure. Note that in the definition of the opposite and the tensor product algebra one could also use the inverse braiding $c^{-1}$ instead of $c$, resulting in algebras $(A,m_A \circ c_{A,A}^{-1},u_A)$ and $(A\otimes B, \tilde{m}_{A\otimes B}, u_A \otimes u_B)$, where $\tilde{m}_{A\otimes B}$ is $m_{A\otimes B}$ with $c_{B,A}$ replaced by $c_{A,B}^{-1}$. However, if $\mathcal{C}$ is ribbon, these algebras are isomorphic to $(A,m_A \circ c_{A,A},u_A)$ and $(A\otimes B, m_{A\otimes B}, u_A \otimes u_B)$, respectively \cite[Remark 3.23 (i)]{fuchs-runkel-schweigert-tft}. In this paper we only work with the definitions of $A^{\op}$ and $A\otimes B$ from Proposition \ref{def:op-alg} and Proposition \ref{prop:alg-structure-on-tensor-product}.

\begin{definition}[Module]
A \textit{right module} over an algebra $(A,m_A,u_A)$, or right $A$-module for short, is a pair $(M,p)$, where $M \in \mathcal{C}$ is an object and $p\colon M \otimes A \to M$ is a morphism (the module action) such that the diagrams
\[
\begin{tikzcd}
(M \otimes A) \otimes A \dar[swap]{p \otimes \id_A} \arrow{rr}{\alpha_{M,A,A}} && M \otimes (A\otimes A) \dar{id_M \otimes m_A} \\
M \otimes A \arrow[swap]{dr}{p} && M \otimes A \arrow{dl}{p} \\
{} & M & {}
\end{tikzcd}\;\;\;
\begin{tikzcd}
M \otimes \mathds{1} \arrow[swap]{dr}{\rho_M} \arrow{rr}{\id_M \otimes u_A} && M \otimes A \arrow{dl}{p}  \\
{} & M & {}
\end{tikzcd}
\]
commute.
\end{definition}

\begin{definition}[Module homomorphism]
A \textit{right module homomorphism} between two right modules $(M,p)$, $(N,q)$ over an algebra $(A,m_A,u_A)$ in $\mathcal{C}$ is a morphism $\varphi\colon M\to N$ in $\mathcal{C}$ which is compatible with the module actions in the sense that $q \circ (\varphi \otimes \id_A) = \varphi \circ p$.
\end{definition}
Left $A$-modules and left $A$-module homomorphisms are defined analogously.

Let $M_1,M_2$ be $A$-modules in $\mathcal{C}$. Then module homomorphisms between them form a subspace of $\Hom (M_1,M_2)$ that is closed under composition. Consequently:

\begin{proposition}[Categories of modules]
Right and left $A$-modules in $\mathcal{C}$ form categories $\Mod[r]{A}(\mathcal{C})$ and $\Mod[l]{A}(\mathcal{C})$, respectively.
\end{proposition}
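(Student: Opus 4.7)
The statement to be proved is essentially structural, so the plan is short: I just need to exhibit the data of a category and verify the category axioms (associativity of composition, identity laws) using only what is already assembled above. I will treat the right module case, with the left case being entirely analogous (in fact formally dual via $\mathcal{C} \mapsto \mathcal{C}^{\mathrm{op}}$).

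The plan is to declare the objects of $\Mod[r]{A}(\mathcal{C})$ to be right $A$-modules $(M,p)$ in $\mathcal{C}$, and the morphism set $\Hom_{\Mod[r]{A}(\mathcal{C})}((M,p),(N,q))$ to be the subset of $\Hom_{\mathcal{C}}(M,N)$ consisting of those $\varphi$ such that $q \circ (\varphi \otimes \id_A) = \varphi \circ p$. Composition and identity are inherited from $\mathcal{C}$. The only things that require verification are (i) that the identity $\id_M \in \Hom_{\mathcal{C}}(M,M)$ is an $A$-module homomorphism $(M,p) \to (M,p)$, and (ii) that the composition $\psi \circ \varphi$ of $A$-module homomorphisms $\varphi\colon (M,p) \to (N,q)$ and $\psi\colon (N,q) \to (L,r)$ is again an $A$-module homomorphism $(M,p) \to (L,r)$.

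For (i), note that $p \circ (\id_M \otimes \id_A) = p \circ \id_{M \otimes A} = p = \id_M \circ p$, using only the unit law in $\mathcal{C}$ and functoriality of $\otimes$. For (ii), a direct computation using functoriality of $\otimes$ (specifically $(\psi \otimes \id_A) \circ (\varphi \otimes \id_A) = (\psi \circ \varphi) \otimes \id_A$) and associativity of composition in $\mathcal{C}$ gives
\begin{equation*}
r \circ ((\psi \circ \varphi) \otimes \id_A) = r \circ (\psi \otimes \id_A) \circ (\varphi \otimes \id_A) = \psi \circ q \circ (\varphi \otimes \id_A) = \psi \circ \varphi \circ p,
\end{equation*}
where the middle equality uses that $\psi$ is a module homomorphism and the last uses that $\varphi$ is. Hence module homomorphisms are closed under composition (this is the remark preceding the statement, made precise).

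Finally, associativity of composition and the identity laws in $\Mod[r]{A}(\mathcal{C})$ hold automatically since composition and identities are inherited from $\mathcal{C}$. The proof for $\Mod[l]{A}(\mathcal{C})$ is verbatim the same after reversing the tensor factors. There is no real obstacle here; the only mild point is keeping the functoriality of $\otimes$ explicit in step (ii), since it is what allows one to split $(\psi \circ \varphi) \otimes \id_A$ as two horizontal composites.
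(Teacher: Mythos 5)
Your proof is correct and matches the paper's approach: the paper simply observes (in the sentence preceding the proposition) that module homomorphisms form a subspace of $\Hom(M_1,M_2)$ closed under composition, and you have spelled out exactly that verification, including the identity-morphism check and the functoriality-of-$\otimes$ step that make it go through. Nothing to add.
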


\begin{lemma}
The category $\Mod[r]{A}(\mathcal{C})$ can be equipped with the structure of a left $\mathcal{C}$-module category: For any algebra $A \in \mathcal{C}$, right $A$-module $(M,p)$ and $X \in \mathcal{C}$, the object $X \otimes M$ has a right $A$-module structure via
\begin{gather*}
(X\otimes M) \otimes A \xrightarrow[]{\alpha_{X,M,A}} X \otimes (M\otimes A) \xrightarrow[]{id_X \otimes p} X \otimes M
\end{gather*}
which provides an action functor $\mathcal{C} \times \Mod[r]{A}(\mathcal{C}) \to \Mod[r]{A}(\mathcal{C})$ given on objects by $(X,M) \mapsto X\otimes M$. The module associativity isomorphisms $a_{X,Y,M}\defeq \alpha_{X,Y,M} \colon (X\otimes Y)\otimes M \to X \otimes (Y \otimes M)$ and unit isomorphisms $l_M\defeq \lambda_M \colon \mathds{1} \otimes M \to M$ come from the monoidal structure of $\mathcal{C}$, are isomorphisms of $A$-modules and define a left $\mathcal{C}$-module structure on $\Mod[r]{A}(\mathcal{C})$.
\end{lemma}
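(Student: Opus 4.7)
My plan is to verify each claim of the lemma in turn, using only the monoidal category axioms of $\mathcal{C}$ and the right $A$-module axioms. All the work is really bookkeeping with naturality of $\alpha$ and $\lambda$ plus the pentagon and triangle of $\mathcal{C}$; there is no conceptual obstacle, only the need to verify coherence diagrams systematically.

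First I would show that for any object $X \in \mathcal{C}$ and any right $A$-module $(M,p)$, the morphism $p_X \defeq (\id_X \otimes p) \circ \alpha_{X,M,A}$ endows $X \otimes M$ with the structure of a right $A$-module. The unit axiom follows from the triangle axiom of $\mathcal{C}$ applied to $(X,M,\mathds{1})$ together with the unit axiom of $(M,p)$. For the associativity axiom, I would draw the hexagon whose corners are $((X \otimes M)\otimes A)\otimes A$, $X \otimes M$, and intermediate rebracketings; the pentagon of $\mathcal{C}$ identifies the two different reassociations of $X\otimes M \otimes A \otimes A$, and then the associativity of the action $p$ closes the diagram. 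Next, given an $A$-module morphism $\varphi\colon M \to N$ and a morphism $f\colon X\to Y$ in $\mathcal{C}$, the morphism $f \otimes \varphi$ is manifestly a right $A$-module morphism $X\otimes M \to Y\otimes N$ by naturality of $\alpha$ and the defining condition of $\varphi$. Thus $(X,M)\mapsto X\otimes M$ extends to a bifunctor $\mathcal{C}\times \Mod[r]{A}(\mathcal{C}) \to \Mod[r]{A}(\mathcal{C})$, with functoriality inherited from functoriality of $\otimes$.

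Then I would verify that the monoidal constraints of $\mathcal{C}$ lift to the module category. Naturality of $\alpha_{X,Y,M}$ with respect to $p$, i.e.\ commutativity of
\begin{gather*}
\alpha_{X,Y,M} \circ p_{X \otimes Y} = p_X^{(Y\otimes -)}  \circ (\alpha_{X,Y,M}\otimes \id_A),
\end{gather*}
reduces after inserting the definitions of the lifted actions to a pentagon for $(X,Y,M,A)$ in $\mathcal{C}$, which holds by the pentagon axiom of $\mathcal{C}$. Similarly, $\lambda_M$ is an $A$-module morphism $\mathds{1}\otimes M \to M$ because the triangle axiom in $\mathcal{C}$ applied to $(\mathds{1},M,A)$ combined with naturality of $\lambda$ gives $\lambda_M \circ (\id_\mathds{1} \otimes p) \circ \alpha_{\mathds{1},M,A} = p \circ (\lambda_M \otimes \id_A)$.

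It remains to check the pentagon and triangle axioms of a $\mathcal{C}$-module category for $(a,l)\defeq (\alpha,\lambda)$, but since $\alpha$ and $\lambda$ already satisfy the pentagon and triangle of the monoidal category $\mathcal{C}$, the module-category pentagon for $((X\otimes Y)\otimes Z)\triangleright M$ and the module-category triangle for $(X\otimes \mathds{1})\triangleright M$ are literally the pentagon and triangle of $\mathcal{C}$ evaluated at $(X,Y,Z,M)$ and $(X,\mathds{1},M)$, respectively. The hardest (in fact only non-trivial) step will be writing down the right-$A$-module structure on $X\otimes M$ carefully and checking its associativity by correctly tracking the four-fold reassociations of $X\otimes M\otimes A\otimes A$; once this is done, everything else is an immediate translation of pre-existing coherence data in $\mathcal{C}$.
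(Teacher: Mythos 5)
Your verification is correct and matches the paper's approach, which simply records the lemma as following ``by direct verification'' and omits all details. You have supplied exactly the coherence bookkeeping that the paper leaves implicit: the lifted action's unit axiom follows from the derived triangle identity $\rho_{X\otimes M}=(\id_X\otimes\rho_M)\circ\alpha_{X,M,\mathds{1}}$ together with the unit axiom of $(M,p)$; its associativity from the pentagon of $\mathcal{C}$ at $(X,M,A,A)$, naturality of $\alpha$, and associativity of $p$; the claim that $\alpha_{X,Y,M}$ and $\lambda_M$ are $A$-module maps from the pentagon at $(X,Y,M,A)$ and the derived triangle $\lambda_{M\otimes A}\circ\alpha_{\mathds{1},M,A}=\lambda_M\otimes\id_A$ respectively; and the module pentagon and triangle are literally the pentagon and triangle of $\mathcal{C}$ evaluated with the last slot a module object.
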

This is proven by direct verification. A corresponding result also holds for $\Mod[l]{A}(\mathcal{C})$, which is a right $\mathcal{C}$-module category.

\begin{definition}[Morita equivalence]
\label{def:morita-eq}
Two algebras $A,B \in \mathcal{C}$ are \textit{Morita equivalent} if $\Mod[r]{A}(\mathcal{C})$ and $\Mod[r]{B}(\mathcal{C})$ (or $\Mod[l]{A}(\mathcal{C})$ and $\Mod[l]{B}(\mathcal{C})$) are equivalent $\mathcal{C}$-module categories.
\end{definition}

\begin{definition}[Bimodule]
Let $A,B\in\mathcal{C}$ be algebras. An $A$-$B$-\textit{bimodule} in $\mathcal{C}$ is a triple $(M,p,q)$ with $M \in \mathcal{C}$, $p\colon A \otimes M \to M$, $q\colon M \otimes B \to M$, such that $(M,p)$ is a left $A$-module in $\mathcal{C}$, $(M,q)$ is a right $B$-module in $\mathcal{C}$, and 
\[
\begin{tikzcd}
(A \otimes M) \otimes B \dar[swap]{p \otimes \id_B} \arrow{rr}{\alpha_{A,M,B}} && A \otimes (M\otimes B) \dar{id_A \otimes q} \\
M \otimes B \arrow[swap]{dr}{q} && A \otimes M \arrow{dl}{p} \\
{} & M & {}
\end{tikzcd}
\]
commutes. A homomorphism of $A$-$B$-bimodules is a morphism in $\mathcal{C}$ which is both a homomorphism of left $A$-modules and right $B$-modules. $A$-$B$-bimodules and their homomorphisms form a category called $\AlgBiMod{A}{B}(\mathcal{C})$. The category of $A$-$A$-bimodules will be denoted $A\mh\Bimod(\mathcal{C})$ for short.
\end{definition}

\begin{theorem}[{\cite[Theorem 7.10.1]{egno}}]
\label{thm:egno-mod-alg}
Let $\mathcal{M}$ be a finite (as linear category) left respectively right module category over a finite tensor category $\mathcal{C}$ with right exact action in $\mathcal{C}$. Then there is an algebra object $A \in \mathcal{C}$ and an equivalence $\mathcal{M} \simeq \Mod[r]{A}(\mathcal{C})$ ($\simeq \Mod[l]{A}(\mathcal{C})$) as left respectively right $\mathcal{C}$-module categories.
\end{theorem}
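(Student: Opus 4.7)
The plan is to prove this via the standard machinery of internal Hom objects, specialized to the finite setting. I will work with left $\mathcal{C}$-module categories and produce a right $A$-module description; the left $A$-module statement is obtained by passing to $\mathcal{C}^{\rev}$. Fix a finite left $\mathcal{C}$-module category $\mathcal{M}$ whose action is right exact in the $\mathcal{C}$-variable.

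First I would establish the \emph{internal Hom} functor. For each $M\in\mathcal{M}$ the functor $-\triangleright M\colon \mathcal{C}\to\mathcal{M}$ is $\mathds{k}$-linear, right exact by hypothesis, and between finite abelian categories. Such functors admit a right adjoint: identifying $\mathcal{C}\simeq B\text{-mod}$ and $\mathcal{M}\simeq B'\text{-mod}$ for finite-dimensional $\mathds{k}$-algebras $B,B'$ (since both categories are finite), right exactness forces $-\triangleright M$ to be given by tensoring with a $(B',B)$-bimodule, whose $\Hom$-functor supplies the right adjoint. Denote this adjoint by $\underline{\Hom}(M,-)\colon\mathcal{M}\to\mathcal{C}$, so there is a natural isomorphism
\[
\Hom_{\mathcal{M}}(X\triangleright M,N)\;\cong\;\Hom_{\mathcal{C}}(X,\underline{\Hom}(M,N))
\]
with counit $\ev_M\colon \underline{\Hom}(M,M)\triangleright M\to M$. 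I would then choose $M$ to be a projective generator of $\mathcal{M}$ (available because $\mathcal{M}$ is finite) and set $A\defeq\underline{\Hom}(M,M)$. The algebra structure is produced by transporting two evident morphisms across the adjunction: multiplication $m_A\colon A\otimes A\to A$ is the transpose of
\[
(A\otimes A)\triangleright M \xrightarrow{\; a_{A,A,M} \;} A\triangleright(A\triangleright M)\xrightarrow{\;\id_A\triangleright \ev_M\;} A\triangleright M\xrightarrow{\;\ev_M\;} M,
\]
and the unit $u_A\colon\mathds{1}\to A$ is the transpose of $l_M\colon\mathds{1}\triangleright M\to M$. Associativity and unitality then follow by a diagram chase combining the pentagon and triangle of $\mathcal{M}$ with naturality of the adjunction.

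Next I would define the candidate equivalence
\[
F\colon \mathcal{M}\to\Mod[r]{A}(\mathcal{C}),\qquad N\mapsto \underline{\Hom}(M,N),
\]
where the right $A$-module structure on $\underline{\Hom}(M,N)$ is the transpose of the composition $\underline{\Hom}(M,N)\triangleright(A\triangleright M)\to \underline{\Hom}(M,N)\triangleright M\to N$ built from $\ev_M$ and the counit of $-\triangleright M$. To promote $F$ to a left $\mathcal{C}$-module functor I need a natural isomorphism $\underline{\Hom}(M,X\triangleright N)\cong X\otimes\underline{\Hom}(M,N)$; dualizability in $\mathcal{C}$ makes this immediate, since for any $Y\in\mathcal{C}$
\[
\Hom_{\mathcal{C}}(Y,X\otimes\underline{\Hom}(M,N))\cong\Hom_{\mathcal{M}}(({}^{\ast}X\otimes Y)\triangleright M,N)\cong\Hom_{\mathcal{M}}(Y\triangleright M,X\triangleright N),
\]
and the right-hand side equals $\Hom_{\mathcal{C}}(Y,\underline{\Hom}(M,X\triangleright N))$. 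Compatibility with the module constraints follows from naturality of the adjunction.

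The main obstacle is showing that $F$ is an equivalence. I would construct a quasi-inverse via the \emph{balanced tensor product} $-\otimes_A M\colon\Mod[r]{A}(\mathcal{C})\to\mathcal{M}$, defined for $(V,p)$ as the coequalizer in $\mathcal{M}$
\[
(V\otimes A)\triangleright M\;\rightrightarrows\; V\triangleright M\;\longrightarrow\; V\otimes_A M,
\]
whose parallel arrows are $p\triangleright\id_M$ (after applying $a^{-1}$) and $\id_V\triangleright\ev_M$. Right exactness of the action in the $\mathcal{C}$-variable guarantees that this coequalizer interacts well with $\triangleright$, and a straightforward calculation shows $(-\otimes_A M)\dashv F$. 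To see that unit and counit are isomorphisms I would argue as follows: for free $A$-modules $X\otimes A$ one computes $F((X\otimes A)\otimes_A M)\cong F(X\triangleright M)\cong X\otimes A$ directly from the adjunction, and then extends to arbitrary $A$-modules because every such module is a cokernel of a map between free ones (right exactness of $-\otimes_A M$); dually, every $N\in\mathcal{M}$ is a cokernel of a map between objects of the form $X_i\triangleright M$, since $M$ is a projective generator and hence $\underline{\Hom}(M,-)$ is faithful and conservative. This ``a projective generator detects isomorphisms'' step, which is the genuinely content-bearing use of finiteness of $\mathcal{M}$, is where I expect the technical weight of the proof to lie; everything else reduces to formal manipulations with the adjunction and the module axioms.
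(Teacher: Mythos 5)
The paper does not supply a proof of this statement; it cites it verbatim from \cite[Theorem 7.10.1]{egno}. Your proposal is a correct reconstruction of the internal-Hom and projective-generator argument given in that reference (EGNO, Sections 7.9--7.10, going back to Ostrik), so it follows precisely the route the paper implicitly relies on.
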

If $\mathcal{C}$ is braided then there is an equivalence $\AlgBiMod{A}{B}(\mathcal{C}) \simeq \Mod[l]{(A\otimes B^{\op})}(\mathcal{C})$ of module categories.

\subsection{Relative Deligne product}
\label{sec:appendix-relative-deligne}

We assume $\mathcal{C}$ to be a fusion category and all module categories to be semisimple.

\begin{definition}[Balanced functor]
Let $\mathcal{M}$ be a right and $\mathcal{N}$ a left $\mathcal{C}$-module category. A functor $F\colon\mathcal{M}\boxtimes\mathcal{N}\to\mathcal{A}$ to some abelian category $\mathcal{A}$ is called \textit{$\mathcal{C}$-balanced} if there exists a natural isomorphism $\beta\colon F((-\triangleleft -) \boxtimes -) \to F(- \boxtimes (- \triangleright -))$ satisfying the pentagon coherence diagram
\[
\begin{tikzcd}
F((M \triangleleft (X\otimes Y))\boxtimes N) \dar[swap]{} \arrow{rr}{\beta_{M,X\otimes Y,N}} && F(M \boxtimes ((X\otimes Y)\triangleright N)) \dar{} \\
F(((M \triangleleft X)\triangleleft Y)\boxtimes N) \arrow[swap]{dr}{\beta_{M\triangleleft X,Y,N}} && F(M\boxtimes (X\triangleright(Y\triangleright N))) \arrow{dl}{\beta_{M,X,Y \triangleright N}^{-1}} \\
{} & F((M \triangleleft X)\boxtimes(Y \triangleright N)) & {}
\end{tikzcd}
\]
for all $X,Y\in \mathcal{C}$, $M \in \mathcal{M}$, $N \in \mathcal{N}$. Module associativity isomorphisms act along the unmarked arrows.
\end{definition}
This means for all $C \in \mathcal{C}$, $M \in \mathcal{M}$ and $N \in \mathcal{N}$, one has a natural isomorphism $\beta_{M,C,N} \colon F((M \triangleleft C)\boxtimes N) \to F(M \boxtimes (C \triangleright N))$, which can also be expressed as $\beta_{M,C,N} \colon F((M \boxtimes N)\triangleleft C) \to F(C \triangleright (M \boxtimes N))$.

\begin{definition}[Relative Deligne product]
\label{def:rel-Deligne-product}
The \textit{relative Deligne product} (also called ``balanced tensor product'') of a right $\mathcal{C}$-module category $\mathcal{M}$ with a left $\mathcal{C}$-module category $\mathcal{N}$ consists of an abelian category $\mathcal{M} \boxtimes_\mathcal{C} \mathcal{N}$ and a $\mathcal{C}$-balanced functor $B_{\mathcal{M},\mathcal{N}} \colon \mathcal{M}\boxtimes\mathcal{N} \to \mathcal{M} \boxtimes_\mathcal{C} \mathcal{N}$ (called the universal balanced functor of $\mathcal{M} \boxtimes_\mathcal{C} \mathcal{N}$), which is universal for $\mathcal{C}$-balanced functors out of $\mathcal{M}\boxtimes\mathcal{N}$, i.e. $B_{\mathcal{M},\mathcal{N}}$ induces for every abelian category $\mathcal{A}$ an equivalence
\begin{gather*}
\Phi_{\mathcal{M},\mathcal{N}} \colon \Fun(\mathcal{M}\boxtimes_\mathcal{C} \mathcal{N},\mathcal{A}) \to \Fun_{\bal} (\mathcal{M}\boxtimes\mathcal{N},\mathcal{A})
\end{gather*}
from the category of functors from $\mathcal{M}\boxtimes_\mathcal{C} \mathcal{N}$ to $\mathcal{A}$ to the category of $\mathcal{C}$-balanced functors and their natural transformations (which are natural transformations compatible with the balancing isomorphisms of the respective functors in the obvious way).
\end{definition}

Setting $\mathcal{C}=\Vectk$ reduces the relative Deligne product to the ordinary Deligne product of $\mathds{k}$-linear abelian categories \cite[Definition 3.2.1]{schaumann}. Existence of the relative Deligne product and uniqueness up to unique adjoint equivalence are guaranteed by \cite[Theorem 3.3 (1)]{dsps}.

\begin{proposition}[{\cite[Proposition 3.4.1]{schaumann}}]
Let $\mathcal{M}$ be an $\mathcal{A}$-$\mathcal{C}$-bimodule category and $\mathcal{N}$ a $\mathcal{C}$-$\mathcal{B}$-bimodule category. Then $\mathcal{M}\boxtimes_\mathcal{C} \mathcal{N}$ is an $\mathcal{A}$-$\mathcal{B}$-bimodule category and $B_{\mathcal{M},\mathcal{N}}$ is a balanced $\mathcal{A}$-$\mathcal{B}$-bimodule functor.
\end{proposition}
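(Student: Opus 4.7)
The plan is to establish that the family $\{\varPi_j\}$ satisfies the three Coxeter relations with respect to $\boxtimes_n$, since the universal property of the presentation $S_n = \langle S | R \rangle$ then yields the desired group homomorphism $[\varrho]\colon S_n \to \mathsf{Pic}(\mathcal{C}^{\boxtimes n})$. I would split the proof into three parts, one for each relation, and exploit the key observation that $\varPi_j$ admits the Deligne factorization $\mathcal{C}^{j-1} \boxtimes \mathcal{P} \boxtimes \mathcal{C}^{n-j-1}$, so many calculations can be localized to the factors where the non-trivial action actually occurs.

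For the involution relation $\varPi_j \boxtimes_n \varPi_j \simeq \mathcal{C}^{\boxtimes n}$, I would use that the relative Deligne product factorizes under the (absolute) Deligne product, so that
\begin{gather*}
\varPi_j \boxtimes_n \varPi_j \simeq (\mathcal{C}^{j-1} \boxtimes_{j-1} \mathcal{C}^{j-1}) \boxtimes (\mathcal{P} \boxtimes_2 \mathcal{P}) \boxtimes (\mathcal{C}^{n-j-1} \boxtimes_{n-j-1} \mathcal{C}^{n-j-1}) \, \, .
\end{gather*}
Each outer factor collapses by weak unitality of $\boxtimes_k$ (Proposition \ref{prop:unitality}), and the middle factor collapses to $\mathcal{C}^{2}$ by the non-trivial equivalence $\mathcal{P} \boxtimes_2 \mathcal{P} \simeq \mathcal{C}^2$ of Theorem \ref{thm:fs-thm}. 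For the far commutation relation with $|i-j|>1$, assuming $i > j+1$ without loss of generality, I would decompose both $\varPi_i$ and $\varPi_j$ into five Deligne factors $\mathcal{C}^{j-1}\boxtimes(\cdot)\boxtimes\mathcal{C}^{i-j-2}\boxtimes(\cdot)\boxtimes\mathcal{C}^{n-i-1}$ with the two $\mathcal{P}$-factors sitting in disjoint positions, so that $\varPi_i\boxtimes_n \varPi_j$ and $\varPi_j\boxtimes_n \varPi_i$ both reduce under factorization and unitality to $\mathcal{C}^{j-1}\boxtimes\mathcal{P}\boxtimes\mathcal{C}^{i-j-2}\boxtimes\mathcal{P}\boxtimes\mathcal{C}^{n-i-1}$.

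The main obstacle will be the braid relation, since here the two $\mathcal{P}$-factors overlap and the Deligne product trick only localizes the problem to three consecutive tensor factors of $\mathcal{C}^{\boxtimes n}$. Concretely, after factoring out the trivial pieces $\mathcal{C}^{j-1}$ and $\mathcal{C}^{n-j-2}$, the relation reduces to showing
\begin{gather*}
(\mathcal{P}\boxtimes\mathcal{C}) \boxtimes_3 (\mathcal{C}\boxtimes\mathcal{P}) \boxtimes_3 (\mathcal{P}\boxtimes\mathcal{C}) \simeq (\mathcal{C}\boxtimes\mathcal{P}) \boxtimes_3 (\mathcal{P}\boxtimes\mathcal{C}) \boxtimes_3 (\mathcal{C}\boxtimes\mathcal{P})
\end{gather*}
as $\mathcal{C}^3$-bimodule categories, which is stated separately as Proposition \ref{prop:rep-Sn-3rd}. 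I would attack this by passing to algebraic models: realizing $\mathcal{P}$ as $\Mod[r]{A_\mathcal{P}}(\mathcal{C}^2)$ with $A_\mathcal{P} = \bigoplus_i X_i^\ast \boxtimes X_i$, using $\mathcal{P}\boxtimes\mathcal{C} \simeq \Mod[r]{A_1}(\mathcal{C}^3)$ and $\mathcal{C}\boxtimes\mathcal{P} \simeq \Mod[r]{A_2}(\mathcal{C}^3)$ with $A_1 \defeq A_\mathcal{P}\boxtimes\mathds{1}_\mathcal{C}$, $A_2 \defeq \mathds{1}_\mathcal{C}\boxtimes A_\mathcal{P}$, and then iterating Proposition \ref{appendix-prop-relative-deligne} so that
\begin{gather*}
\mathcal{P}_1\boxtimes_3\mathcal{P}_2\boxtimes_3\mathcal{P}_1 \simeq \Mod[r]{(A_1^{\op}\otimes A_2 \otimes A_1)}(\mathcal{C}^3) \, , \quad \mathcal{P}_2\boxtimes_3\mathcal{P}_1\boxtimes_3\mathcal{P}_2 \simeq \Mod[r]{(A_2^{\op}\otimes A_1 \otimes A_2)}(\mathcal{C}^3) \, \, .
\end{gather*}
This converts the braid relation into the purely algebraic claim that $A \defeq A_1^{\op}\otimes A_2 \otimes A_1$ and $B \defeq A_2^{\op}\otimes A_1 \otimes A_2$ are Morita equivalent in $\mathcal{C}^3$.

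To prove this I would in fact construct an explicit isomorphism $f\colon A \to B$ of algebra objects, which is strictly stronger than Morita equivalence. The candidate would be built from dualities and pairings of simple objects, using the summation $\bigoplus_{i,j,k}$ over the simple decompositions of $A$ and $B$ together with sums over dual bases $\{\alpha,\check\alpha\}$ and $\{\beta,\check\beta\}$ of Hom spaces (Appendix \ref{appendix:algebras-graphical-notation}); basis independence of such expressions will be crucial. The hardest sub-step will be checking that $f$ preserves multiplication. I expect this to factor through an auxiliary algebra $C$ in $\mathcal{C}^3$ and a decomposition $f = f_2 \circ f_1$ with $f_1\colon A\to C$ and $f_2\colon C \to B$, so that one can verify the two simpler multiplicativity conditions separately using the graphical calculus of Section \ref{sec:prelim}: Reidemeister moves, naturality of the braiding, and basis changes of the form $\sum \alpha\otimes\hat\alpha$ will reduce the multiplication pentagon of $f_i$ to a manipulation of string diagrams. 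Once $f$ is shown to be an algebra isomorphism, bijectivity on underlying objects follows from the completeness relation \eqref{eq:appendix-completeness-relation}, establishing Proposition \ref{prop:rep-Sn-3rd} and thereby the braid relation, completing the proof of the theorem.
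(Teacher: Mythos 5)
Your proposal does not address the statement at all. The proposition you were asked to prove is a structural result about the relative Deligne product, namely that if $\mathcal{M}$ is an $\mathcal{A}$-$\mathcal{C}$-bimodule category and $\mathcal{N}$ is a $\mathcal{C}$-$\mathcal{B}$-bimodule category, then the relative Deligne product $\mathcal{M}\boxtimes_\mathcal{C}\mathcal{N}$ inherits an $\mathcal{A}$-$\mathcal{B}$-bimodule structure, with the universal balanced functor $B_{\mathcal{M},\mathcal{N}}$ being a balanced $\mathcal{A}$-$\mathcal{B}$-bimodule functor. This is a cited background result (Schaumann, Proposition 3.4.1), not proved in the paper, and it is used as a tool; it is a prerequisite for the kind of factorization arguments in the main theorem. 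What you have written is instead a proof sketch for Theorem~\ref{thm:rep-Sn}, the paper's main theorem establishing the $S_n$-representation on $\mathcal{C}^{\boxtimes n}$ via the Coxeter relations for the family $\{\varPi_j\}$. These are entirely different claims.

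A proof of the proposition actually in question would proceed by applying the universal property of $\mathcal{M}\boxtimes_\mathcal{C}\mathcal{N}$: one uses the residual left $\mathcal{A}$-action on $\mathcal{M}$ and right $\mathcal{B}$-action on $\mathcal{N}$ to define $\mathcal{C}$-balanced functors $\mathcal{M}\boxtimes\mathcal{N}\to\mathcal{M}\boxtimes_\mathcal{C}\mathcal{N}$ parametrized by objects of $\mathcal{A}$ and $\mathcal{B}$, which factor through $B_{\mathcal{M},\mathcal{N}}$ to yield action functors $\mathcal{A}\times(\mathcal{M}\boxtimes_\mathcal{C}\mathcal{N})\to\mathcal{M}\boxtimes_\mathcal{C}\mathcal{N}$ and $(\mathcal{M}\boxtimes_\mathcal{C}\mathcal{N})\times\mathcal{B}\to\mathcal{M}\boxtimes_\mathcal{C}\mathcal{N}$, with module constraints induced again from the universal property; the mixed associator comes from the compatibility of the left $\mathcal{A}$-action with the right $\mathcal{B}$-action, which on $\mathcal{M}\boxtimes\mathcal{N}$ holds tautologically (the two actions touch disjoint factors). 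That both the left and right actions and the associator are balanced over $\mathcal{C}$ is needed to descend them; checking this uses the bimodule axioms of $\mathcal{M}$ and $\mathcal{N}$. Your proposal, by contrast, takes the existence of bimodule structure on $\boxtimes_\mathcal{C}$ as given, so it cannot serve as a proof of this proposition.
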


\begin{proposition}[{\cite[Proposition 3.15]{greenough}}]
\label{prop:unitality}
The relative Deligne product is weakly unital: for a $\mathcal{C}$-$\mathcal{D}$-bimodule category $\mathcal{M}$ there are canonical equivalences $\mathcal{M}\boxtimes_\mathcal{D} \mathcal{D} \simeq \mathcal{M} \simeq \mathcal{C} \boxtimes_\mathcal{C} \mathcal{M}$ of $\mathcal{C}$-$\mathcal{D}$-bimodule categories.
\end{proposition}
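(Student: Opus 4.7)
The plan is to produce the equivalence $\mathcal{C}\boxtimes_{\mathcal{C}}\mathcal{M}\simeq\mathcal{M}$ by exhibiting the left action functor itself as the universal balanced functor's ``value'', and then show the resulting comparison functor is a bimodule equivalence. The symmetric argument handles $\mathcal{M}\boxtimes_{\mathcal{D}}\mathcal{D}\simeq\mathcal{M}$, so I describe only the first equivalence.

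First I would observe that the left action $\triangleright\colon\mathcal{C}\times\mathcal{M}\to\mathcal{M}$ is $\mathds{k}$-bilinear and right exact in each argument (this is part of the semisimple/fusion setting), so by the universal property of the ordinary Deligne product it extends to a functor $T\colon\mathcal{C}\boxtimes\mathcal{M}\to\mathcal{M}$, $C\boxtimes M\mapsto C\triangleright M$. The module associativity $a_{C_1,C,M}\colon(C_1\otimes C)\triangleright M\to C_1\triangleright(C\triangleright M)$ endows $T$ with a canonical $\mathcal{C}$-balancing structure; its coherence pentagon is precisely the module pentagon axiom. By Definition \ref{def:rel-Deligne-product}, the universal balanced functor $B\colon\mathcal{C}\boxtimes\mathcal{M}\to\mathcal{C}\boxtimes_{\mathcal{C}}\mathcal{M}$ then produces an essentially unique functor $\widetilde{T}\colon\mathcal{C}\boxtimes_{\mathcal{C}}\mathcal{M}\to\mathcal{M}$ with $\widetilde{T}\circ B\simeq T$.

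Next I would construct the candidate quasi-inverse $G\colon\mathcal{M}\to\mathcal{C}\boxtimes_{\mathcal{C}}\mathcal{M}$ by $G(M)\defeq B(\mathds{1}\boxtimes M)$. The composite $\widetilde{T}\circ G$ is naturally isomorphic to $\id_{\mathcal{M}}$ via the module unit isomorphism $l_M\colon\mathds{1}\triangleright M\to M$. In the other direction, on objects of the form $B(C\boxtimes M)$ the balancing $\beta$ of $B$ gives
\[
G\widetilde{T}B(C\boxtimes M)=B(\mathds{1}\boxtimes(C\triangleright M))\xrightarrow{\beta^{-1}}B((\mathds{1}\otimes C)\boxtimes M)\cong B(C\boxtimes M),
\]
and naturality of $\beta$ makes these isomorphisms compatible with morphisms in the essential image of $B$.

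The main obstacle is to upgrade this pointwise isomorphism to a natural isomorphism on \emph{all} of $\mathcal{C}\boxtimes_{\mathcal{C}}\mathcal{M}$, since $B$ is in general not essentially surjective. The cleanest way around this is to bypass the abstract universal-property argument and invoke the algebraic description of the relative Deligne product. By Theorem \ref{thm:egno-mod-alg} one has $\mathcal{M}\simeq\Mod[r]{A}(\mathcal{C})$ for some algebra $A\in\mathcal{C}$, while $\mathcal{C}\simeq\Mod[r]{\mathds{1}}(\mathcal{C})$. Proposition \ref{appendix-prop-relative-deligne} then yields
\[
\mathcal{C}\boxtimes_{\mathcal{C}}\mathcal{M}\;\simeq\;\mathds{1}\mh\mathsf{Bimod}\mh A(\mathcal{C})\;\simeq\;\Mod[r]{(\mathds{1}\otimes A)}(\mathcal{C})\;\simeq\;\Mod[r]{A}(\mathcal{C})\;\simeq\;\mathcal{M},
\]
and under this identification the composite $\widetilde{T}\circ B$ recovers the left action, which is the same content as the construction of $\widetilde{T}$ above. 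This closes the gap.

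Finally, I would verify that $\widetilde{T}$ is a $\mathcal{C}$-$\mathcal{D}$-bimodule functor: the right $\mathcal{D}$-action commutes tautologically with $T=\triangleright$ on $\mathcal{C}\boxtimes\mathcal{M}$ (via the mixed associativity of the bimodule structure on $\mathcal{M}$), and the balancing $\beta$ is $\mathcal{D}$-equivariant because $B$ is a balanced bimodule functor. Hence $\widetilde{T}$ underlies an equivalence of underlying categories that is simultaneously a bimodule functor, so by Lemma \ref{lem:bimodule-eq} it is an equivalence of $\mathcal{C}$-$\mathcal{D}$-bimodule categories. The symmetric construction using the right action $\triangleleft\colon\mathcal{M}\times\mathcal{D}\to\mathcal{M}$ establishes $\mathcal{M}\boxtimes_{\mathcal{D}}\mathcal{D}\simeq\mathcal{M}$, completing the proof.
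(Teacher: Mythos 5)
The paper does not actually prove this proposition---it is cited directly from Greenough \cite[Proposition 3.15]{greenough} and treated as part of the tricategory structure of $\Bimod$, so there is no internal proof to compare your argument against. That said, your proposal is a correct and essentially standard argument, close in spirit to what one finds in the cited reference: build the comparison functor $\widetilde{T}\colon\mathcal{C}\boxtimes_{\mathcal{C}}\mathcal{M}\to\mathcal{M}$ from the universal property by feeding in the left action with its module pentagon as balancing, exhibit $G(M)=B(\mathds{1}\boxtimes M)$ as a candidate quasi-inverse, and use the module unit and balancing isomorphisms for the natural isomorphisms.

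A few points where the write-up should be tightened. First, the ``gap'' you flag (that $B$ need not be essentially surjective) is in fact not an obstruction in the finite semisimple setting of this paper: the universal balanced functor $B_{\mathcal{M},\mathcal{N}}$ is dominant, so every object of $\mathcal{C}\boxtimes_{\mathcal{C}}\mathcal{M}$ is a direct summand of some $B(C\boxtimes M)$, and the pointwise isomorphism you construct extends to all objects by naturality; your algebraic bypass is thus a legitimate but not strictly necessary alternative. Second, in the algebraic chain you should write $\mathcal{C}\simeq\Mod[l]{\mathds{1}}(\mathcal{C})$ (as a \emph{right} $\mathcal{C}$-module category, matching the hypothesis of Proposition \ref{appendix-prop-relative-deligne}), not $\Mod[r]{\mathds{1}}(\mathcal{C})$; and the step $\mathds{1}\mh\mathsf{Bimod}\mh A(\mathcal{C})\simeq\Mod[r]{A}(\mathcal{C})$ is most cleanly justified by observing that the unit axiom forces any left $\mathds{1}$-action to be trivial, so a $\mathds{1}$-$A$-bimodule \emph{is} a right $A$-module---you do not want to invoke the braided formula $\AlgBiMod{A}{B}(\mathcal{C})\simeq\Mod[l]{(A\otimes B^{\op})}(\mathcal{C})$, since the proposition is stated for a general (not necessarily braided) fusion category. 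Third, to conclude that $\widetilde{T}$ agrees with the algebraically-constructed equivalence, you need to check explicitly that the latter, precomposed with $B$, is isomorphic to $T$ as a balanced functor, and then appeal to the faithfulness part of $\Phi_{\mathcal{M},\mathcal{N}}$; you assert this but do not verify it. Finally, the conclusion that $\widetilde{T}$ is a $\mathcal{C}$-$\mathcal{D}$-bimodule functor requires the enhanced universal property of $\boxtimes_{\mathcal{C}}$ for balanced \emph{bimodule} functors, which is a strengthening of Definition \ref{def:rel-Deligne-product} as stated; this is standard and holds, but should be cited rather than inferred from the functor-level universal property alone.
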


\begin{proposition}[{\cite[Proposition 4.4]{greenough}}]
\label{prop:assoc-rel-deligne}
The relative Deligne product is weakly associative: for bimodule categories $\mathcal{L}$, $\mathcal{M}$ and $\mathcal{N}$ there is a bimodule equivalence $(\mathcal{L}\boxtimes_\mathcal{C} \mathcal{M}) \boxtimes_\mathcal{D} \mathcal{N} \simeq \mathcal{L}\boxtimes_\mathcal{C} (\mathcal{M} \boxtimes_\mathcal{D} \mathcal{N})$.
\end{proposition}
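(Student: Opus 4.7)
The plan is to prove the equivalence via the universal property (Definition \ref{def:rel-Deligne-product}) that characterizes the relative Deligne product up to unique adjoint equivalence. Both iterated relative Deligne products will be shown to corepresent the same 2-functor on the 2-category of (locally finite, abelian, $\mathds{k}$-linear) target categories, namely the 2-functor sending $\mathcal{A}$ to the category of ``doubly balanced'' functors out of $\mathcal{L}\boxtimes\mathcal{M}\boxtimes\mathcal{N}$.

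First I would introduce the auxiliary notion of a \emph{$(\mathcal{C},\mathcal{D})$-bi-balanced functor} $F\colon \mathcal{L}\boxtimes\mathcal{M}\boxtimes\mathcal{N}\to\mathcal{A}$, where $\mathcal{L}$ is a right $\mathcal{C}$-module, $\mathcal{M}$ is a $\mathcal{C}$-$\mathcal{D}$-bimodule, and $\mathcal{N}$ is a left $\mathcal{D}$-module. This consists of $F$ together with natural isomorphisms
\begin{gather*}
\beta^{\mathcal{C}}_{L,X,M,N}\colon F((L\triangleleft X)\boxtimes M \boxtimes N) \to F(L \boxtimes (X\triangleright M) \boxtimes N), \\
\beta^{\mathcal{D}}_{L,M,Y,N}\colon F(L \boxtimes (M\triangleleft Y) \boxtimes N) \to F(L \boxtimes M \boxtimes (Y\triangleright N))
\end{gather*}
for $X\in\mathcal{C}$, $Y\in\mathcal{D}$, each satisfying the pentagon of Definition \ref{def:rel-Deligne-product} and together fitting into a hexagon expressing compatibility with the mixed associator $\gamma$ of $\mathcal{M}$ (Proposition \ref{prop:from-module-to-bimodule-category} / \cite[Proposition 1.3.10]{greenough-thesis}). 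Morphisms of such functors are natural transformations commuting with both balancings.

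Next I would show that both sides of the claimed equivalence corepresent the 2-functor $\mathcal{A}\mapsto\Fun_{(\mathcal{C},\mathcal{D})\text{-bal}}(\mathcal{L}\boxtimes\mathcal{M}\boxtimes\mathcal{N},\mathcal{A})$. For the left-hand side, the universal property of $\boxtimes_{\mathcal{D}}$ identifies $\Fun((\mathcal{L}\boxtimes_{\mathcal{C}}\mathcal{M})\boxtimes_{\mathcal{D}}\mathcal{N},\mathcal{A})$ with $\mathcal{D}$-balanced functors out of $(\mathcal{L}\boxtimes_{\mathcal{C}}\mathcal{M})\boxtimes\mathcal{N}$; precomposing with $B_{\mathcal{L},\mathcal{M}}\boxtimes\id_{\mathcal{N}}$ produces a right-exact functor $\mathcal{L}\boxtimes\mathcal{M}\boxtimes\mathcal{N}\to\mathcal{A}$, whose $\mathcal{C}$-balancing is supplied by the balancing of $B_{\mathcal{L},\mathcal{M}}$ (applied in the first two slots, using that $-\boxtimes\mathcal{N}$ is right exact and hence respects the universal property). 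Conversely, the universal property of $\boxtimes_{\mathcal{C}}$ — applied first in the $\mathcal{L},\mathcal{M}$ slot and then $\boxtimes_{\mathcal{D}}$ in the second — shows that any $(\mathcal{C},\mathcal{D})$-bi-balanced $F$ factors uniquely through $(\mathcal{L}\boxtimes_{\mathcal{C}}\mathcal{M})\boxtimes_{\mathcal{D}}\mathcal{N}$. A symmetric argument applies to $\mathcal{L}\boxtimes_{\mathcal{C}}(\mathcal{M}\boxtimes_{\mathcal{D}}\mathcal{N})$. By the 2-Yoneda lemma the two sides are canonically equivalent, and the equivalence is determined by matching the universal bi-balanced functors $B_{\mathcal{L}\boxtimes_{\mathcal{C}}\mathcal{M},\mathcal{N}}\circ(B_{\mathcal{L},\mathcal{M}}\boxtimes\id_{\mathcal{N}})$ and $B_{\mathcal{L},\mathcal{M}\boxtimes_{\mathcal{D}}\mathcal{N}}\circ(\id_{\mathcal{L}}\boxtimes B_{\mathcal{M},\mathcal{N}})$.

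To promote this to an equivalence of bimodule categories, observe that if $\mathcal{L}$ is an $\mathcal{A}$-$\mathcal{C}$-bimodule category and $\mathcal{N}$ is a $\mathcal{D}$-$\mathcal{B}$-bimodule category, then the outer $\mathcal{A}$- and $\mathcal{B}$-actions on the source $\mathcal{L}\boxtimes\mathcal{M}\boxtimes\mathcal{N}$ commute with the $\mathcal{C}$- and $\mathcal{D}$-balancings, so the universal characterization is manifestly $\mathcal{A}$-$\mathcal{B}$-equivariant; both iterated products acquire their $\mathcal{A}$-$\mathcal{B}$-bimodule structure from these outer actions via functoriality of the universal balanced functors, and hence the comparison equivalence is one of $\mathcal{A}$-$\mathcal{B}$-bimodule categories by Lemma \ref{lem:bimodule-eq}. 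The main obstacle is the coherence bookkeeping: making precise the iterated universal property (which requires right exactness of $\boxtimes\mathcal{N}$ and $\boxtimes\mathcal{A}$, guaranteed in the finite semisimple setting of \cite[Theorem 3.3]{dsps}), and verifying the hexagon for compatibility of $\beta^{\mathcal{C}}$ with $\beta^{\mathcal{D}}$ from the bimodule axiom of $\mathcal{M}$. A technically cleaner but less conceptual alternative would be to realize $\mathcal{L}\simeq\Mod[r]{A}(\mathcal{C})$ and $\mathcal{N}\simeq\Mod[r]{B}(\mathcal{D})$ via Theorem \ref{thm:egno-mod-alg} and reduce associativity of $\boxtimes_{\mathcal{C}}$ to the evident associativity of tensor products of bimodule objects using Proposition \ref{appendix-prop-relative-deligne}.
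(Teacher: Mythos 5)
The paper does not give its own proof of this proposition: it is quoted verbatim from Greenough \cite[Proposition 4.4]{greenough} and used as background in the tricategory framework, so there is nothing internal to compare your argument against. Your sketch is nevertheless a sound reconstruction of the standard representability proof, and it is close in spirit to the arguments in \cite{greenough} and \cite{dsps}: both parenthesized products are characterized by corepresenting the 2-functor $\mathcal{A}\mapsto\Fun_{(\mathcal{C},\mathcal{D})\text{-bal}}(\mathcal{L}\boxtimes\mathcal{M}\boxtimes\mathcal{N},\mathcal{A})$, and the comparison equivalence plus its bimodule structure then fall out of 2-Yoneda and Lemma \ref{lem:bimodule-eq}.

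The one place where more care is needed than your sketch suggests is the parenthetical ``using that $-\boxtimes\mathcal{N}$ is right exact and hence respects the universal property.'' What you actually need is that $B_{\mathcal{L},\mathcal{M}}\boxtimes\id_{\mathcal{N}}\colon(\mathcal{L}\boxtimes\mathcal{M})\boxtimes\mathcal{N}\to(\mathcal{L}\boxtimes_{\mathcal{C}}\mathcal{M})\boxtimes\mathcal{N}$ is itself the universal $\mathcal{C}$-balanced functor out of the source, i.e.\ that $(\mathcal{L}\boxtimes_{\mathcal{C}}\mathcal{M})\boxtimes\mathcal{N}\simeq(\mathcal{L}\boxtimes\mathcal{N})\boxtimes_{\mathcal{C}}\mathcal{M}$ with $\mathcal{C}$ acting only on the $\mathcal{L},\mathcal{M}$ slots. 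Right exactness of $-\boxtimes\mathcal{N}$ alone does not immediately give this; one needs that Deligne tensoring with $\mathcal{N}$ commutes with the relative Deligne product (a colimit), which holds in the finite semisimple setting but should be stated as a lemma rather than waved at. Likewise the hexagon relating $\beta^{\mathcal{C}}$ and $\beta^{\mathcal{D}}$ to the mixed associator of $\mathcal{M}$ is exactly where the bimodule axiom of $\mathcal{M}$ enters and should be written out if this were to be a full proof. Your alternative route via Theorem \ref{thm:egno-mod-alg} and Proposition \ref{appendix-prop-relative-deligne}, reducing everything to associativity of $\otimes_A$ for bimodule objects, is also correct and is arguably the cleanest way to dispense with the coherence bookkeeping in the finite semisimple case.
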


\begin{proposition}[{\cite[Theorem 3.3]{dsps}}]
\label{appendix-prop-relative-deligne}
Let $\mathcal{M}$ be a right and $\mathcal{N}$ a left module category over a fusion category $\mathcal{C}$. Assume the left and right action functors are right exact in $\mathcal{C}$. Let $A,B \in \mathcal{C}$ be algebra objects such that $\mathcal{M} \simeq \Mod[l]{A}(\mathcal{C})$ and $\mathcal{N} \simeq \Mod[r]{B}(\mathcal{C})$ as $\mathcal{C}$-module categories (cf. Theorem \ref{thm:egno-mod-alg}). Then $\mathcal{M} \boxtimes_\mathcal{C} \mathcal{N} \simeq \AlgBiMod{A}{B}(\mathcal{C})$ (the category of $A$-$B$-bimodule objects in $\mathcal{C}$).
\end{proposition}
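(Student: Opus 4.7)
My plan is to exhibit $A\mh\Bimod\mh B(\mathcal{C})$ together with an explicit tensor functor as a universal $\mathcal{C}$-balanced target, and then invoke uniqueness of the relative Deligne product to conclude the equivalence. After identifying $\mathcal{M}\simeq A\mh\Mod(\mathcal{C})$ and $\mathcal{N}\simeq\Mod\mh B(\mathcal{C})$ (using Theorem~\ref{thm:egno-mod-alg}), recall that the right $\mathcal{C}$-action on $\mathcal{M}$ is $M\triangleleft C=M\otimes C$ and the left $\mathcal{C}$-action on $\mathcal{N}$ is $C\triangleright N=C\otimes N$. I would first define $T\colon\mathcal{M}\boxtimes\mathcal{N}\to A\mh\Bimod\mh B(\mathcal{C})$ on $\boxtimes$-factorizable objects by $T(M\boxtimes N)\defeq M\otimes N$, equipped with the left $A$-action $p_M\otimes\id_N$ inherited from $M$ and the right $B$-action $\id_M\otimes q_N$ inherited from $N$. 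The middle-associativity axiom for an $A$-$B$-bimodule is immediate from the pentagon coherence of $\alpha$, so $T$ is well-defined, and it extends to all of $\mathcal{M}\boxtimes\mathcal{N}$ by the universal property of the Deligne product.

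Next I would equip $T$ with balancing data. The associator $\alpha_{M,C,N}\colon (M\otimes C)\otimes N\to M\otimes(C\otimes N)$ is tautologically an $A$-$B$-bimodule isomorphism from $T((M\triangleleft C)\boxtimes N)$ to $T(M\boxtimes(C\triangleright N))$, and its naturality together with the pentagon in $\mathcal{C}$ gives the required pentagon coherence for a $\mathcal{C}$-balanced functor. By the universal property of $\mathcal{M}\boxtimes_\mathcal{C}\mathcal{N}$, $T$ factors (uniquely up to unique isomorphism) through a functor $\tilde T\colon\mathcal{M}\boxtimes_\mathcal{C}\mathcal{N}\to A\mh\Bimod\mh B(\mathcal{C})$. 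Showing that $\tilde T$ is an equivalence amounts to showing, conversely, that $(A\mh\Bimod\mh B(\mathcal{C}),T)$ \emph{is} a universal $\mathcal{C}$-balanced target for $\mathcal{M}\boxtimes\mathcal{N}$, since the relative Deligne product is characterized by this universal property up to canonical equivalence.

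To establish this universal property, fix any abelian $\mathcal{A}$ and any $\mathcal{C}$-balanced functor $(F,\beta)\colon\mathcal{M}\boxtimes\mathcal{N}\to\mathcal{A}$; I want to produce an essentially unique $\bar F\colon A\mh\Bimod\mh B(\mathcal{C})\to\mathcal{A}$ with $\bar F\circ T\simeq F$ compatibly with the balancings. The idea is to use that every $A$-$B$-bimodule $X$ in a fusion category is the reflexive coequalizer
\[
A\otimes X\otimes B\;\rightrightarrows\; A\otimes X\;\longrightarrow\; X
\]
(equivalently, a similar coequalizer on the right), where the two maps are induced by the left action on $X$ and by multiplication in $A$ (with analogous data on the right via $B$). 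Since $A\otimes X$ underlies $T(A\boxtimes(X\otimes B))$ and $A\otimes X\otimes B$ underlies $T(A\boxtimes(X\otimes B))$ in a way compatible with $\beta$, applying $F$ to this presentation and taking the coequalizer in $\mathcal{A}$ defines $\bar F(X)$; the balancing isomorphism $\beta$ is precisely what is needed to make the two parallel arrows match up, and functoriality in $X$ follows from naturality of $\beta$. Uniqueness of $\bar F$ up to unique isomorphism then comes from the fact that this coequalizer presentation is natural.

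The main obstacle will be verifying the coherence of the $\bar F$ construction: one must check that the coequalizer in $\mathcal{A}$ exists and is preserved functorially (which follows from right exactness assumptions together with semisimplicity of the fusion setting), and that $\bar F\circ T\simeq F$ as \emph{balanced} functors, not merely as functors. Both checks reduce to diagram chases using the pentagon for $\beta$ and the associativity axioms for the $A$- and $B$-actions; the semisimplicity of $\mathcal{C}$ makes these diagrams splittable, but one still has to be careful that the balancing isomorphism $\beta$ and the action coequalizer interact coherently. Once this is done, the universal property is satisfied, and uniqueness of the relative Deligne product gives the desired equivalence $\mathcal{M}\boxtimes_\mathcal{C}\mathcal{N}\simeq A\mh\Bimod\mh B(\mathcal{C})$.
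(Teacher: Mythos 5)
The paper offers no proof here: Proposition~\ref{appendix-prop-relative-deligne} is stated as a citation of \cite[Theorem~3.3]{dsps} in the background appendix, collected for the reader's convenience, so there is no internal proof to compare against. I therefore assess your argument on its own merits.

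Your overall strategy --- exhibit $\AlgBiMod{A}{B}(\mathcal{C})$ together with the functor $T(M\boxtimes N)=M\otimes N$ as a universal $\mathcal{C}$-balanced target, and invoke uniqueness of the relative Deligne product --- is sound, and your construction of $T$ and its balancing via the associator is correct.

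The heart of the argument, the coequalizer presentation of an arbitrary $A$-$B$-bimodule $X$, is garbled, however. The displayed diagram
\[
A\otimes X\otimes B \rightrightarrows A\otimes X \longrightarrow X
\]
does not typecheck: neither ``the left action on $X$'' (which is $p_X\colon A\otimes X\to X$) nor ``multiplication in $A$'' (which is $m_A\colon A\otimes A\to A$) is a morphism $A\otimes X\otimes B\to A\otimes X$. The sentence claiming that $A\otimes X$ and $A\otimes X\otimes B$ both underlie $T(A\boxtimes(X\otimes B))$ compounds the confusion. What you want is the bar coequalizer on one side, say
\[
A\otimes A\otimes X \rightrightarrows A\otimes X \longrightarrow X,
\]
with parallel arrows $m_A\otimes\id_X$ and $\id_A\otimes p_X$. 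The crucial observation is that the domain can be read both as $T((A\otimes A)\boxtimes X)$, making $m_A\otimes\id_X=T(m_A\boxtimes\id_X)$ with $m_A$ a morphism in $\mathcal{M}$, and, after applying the balancing $\beta_{A,A,X}$, as $T(A\boxtimes(A\otimes X))$, making $\id_A\otimes p_X=T(\id_A\boxtimes p_X)$ with $p_X$ a morphism in $\mathcal{N}$ because $p_X$ is right-$B$-linear by the bimodule axiom. For a general balanced pair $(F,\beta)$ one therefore sets $\bar F(X)$ to be the coequalizer of $F(m_A\boxtimes\id_X)$ and $F(\id_A\boxtimes p_X)\circ\beta_{A,A,X}$. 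With this repair only one side of the resolution is needed (reflexivity is supplied by the unit of $A$), but you still owe the diagram chases verifying that $\bar F\circ T\simeq F$ as \emph{balanced} functors and that $\bar F$ is determined up to unique isomorphism; these are nontrivial but routine in the semisimple setting.
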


The relative Deligne product of $\mathcal{C}$-bimodule categories is unital and associative up to equivalence and satisfies certain coherence axioms \cite{greenough} (with naturality up to 2-isomorphism, i.e. bimodule natural transformation). This structure can be encoded into a tricategory.

\begin{theorem}[{\cite[Theorem 3.6.1]{schaumann}}]
\label{thm:tricat-bimod}
There is a tricategory $\Bimod$: fusion categories are the objects, bimodule categories the 1-morphisms, bimodule functors the 2-morphisms, and bimodule natural transformations are the 3-morphisms, where the horizontal composition 2-functor is given by the relative Deligne product, horizontal composition in the morphism bicategories is given by the composition of functors, and the vertical composition of 3-morphisms is given by the vertical composition of natural transformations. In particular, this implies the following:
\begin{itemize}
\item[(i)] The bicategory $\Bimod(\mathcal{C},\mathcal{D})$ of $\mathcal{C}$-$\mathcal{D}$-bimodule categories and their bimodule functors and bimodule natural transformations is a 2-category (strict bicategory) with composition of functors as horizontal composition and composition of natural transformations as vertical composition.
\item[(ii)] For any three fusion categories $\mathcal{C}$, $\mathcal{D}$, $\mathcal{E}$, the relative Deligne product induces a 2-functor $\boxtimes_\mathcal{D} \colon \Bimod(\mathcal{C},\mathcal{D}) \times \Bimod(\mathcal{D},\mathcal{E}) \to \Bimod(\mathcal{C},\mathcal{E})$.
\item[(iii)] For every fusion category $\mathcal{C}$, the strict unit 2-functor $I_\mathcal{C} \colon I \to \Bimod(\mathcal{C},\mathcal{C})$ (with $I$ the unit 2-category with a single object, 1-morphism and 2-morphism) is given by $\mathcal{C}$ as $\mathcal{C}$-bimodule category.
\end{itemize}
\end{theorem}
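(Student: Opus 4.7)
The plan is to unpack the data of a tricategory in the sense of \cite[Definition 4.1]{gurski} and verify each piece, following the structure of Schaumann's original argument. A tricategory consists of objects, hom-bicategories, a composition pseudofunctor $\otimes\colon \Bimod(\mathcal{D},\mathcal{E}) \times \Bimod(\mathcal{C},\mathcal{D}) \to \Bimod(\mathcal{C},\mathcal{E})$, identity pseudofunctors $I_\mathcal{C}$ at each object, pseudonatural transformations giving associator and unitors, and invertible modifications (pentagonator plus three unit modifications) satisfying four coherence axioms. I would construct these in the listed order and discharge the axioms at the end.

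First I would fix, for each pair of fusion categories $\mathcal{C},\mathcal{D}$, the hom-bicategory as the 2-category $\Bimod(\mathcal{C},\mathcal{D})$ whose objects are bimodule categories, 1-cells are bimodule functors and 2-cells are bimodule natural transformations; that this is a strict 2-category is immediate from the fact that composition of functors and (vertical/horizontal) composition of natural transformations is strictly associative and unital. The identity pseudofunctor $I_\mathcal{C}\colon I \to \Bimod(\mathcal{C},\mathcal{C})$ picks out $\mathcal{C}$ regarded as the regular $\mathcal{C}$-bimodule category. The composition 2-functor $\boxtimes_\mathcal{D}$ is built from Definition~\ref{def:rel-Deligne-product}: on objects it is the relative Deligne product; on a pair of bimodule functors $(F,G)$ one applies the universal property by observing that $B_{\mathcal{M}',\mathcal{N}'} \circ (F \boxtimes G)$ is canonically $\mathcal{D}$-balanced, and then transporting gives the induced bimodule functor $F \boxtimes_\mathcal{D} G$; on bimodule natural transformations one invokes the universal property at the level of $\mathcal{D}$-balanced natural transformations. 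Functoriality of this assignment in both slots is then a routine consequence of uniqueness up to unique equivalence in Definition~\ref{def:rel-Deligne-product}.

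The associator pseudonatural transformation $\alpha\colon (\mathcal{L} \boxtimes_\mathcal{C} \mathcal{M}) \boxtimes_\mathcal{D} \mathcal{N} \to \mathcal{L} \boxtimes_\mathcal{C} (\mathcal{M} \boxtimes_\mathcal{D} \mathcal{N})$ is obtained from Proposition~\ref{prop:assoc-rel-deligne}, where the underlying bimodule equivalence $a_{\mathcal{L},\mathcal{M},\mathcal{N}}$ is extracted by applying the universal property twice: both sides corepresent the bicategory of triply balanced functors out of $\mathcal{L} \boxtimes \mathcal{M} \boxtimes \mathcal{N}$, and an explicit equivalence witnessing this is constructed via iterated application of $B_{-,-}$. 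The pseudonaturality 2-cells (i.e.\ the compatibility with bimodule functors in each of the three slots) are again forced by the universal property. Unitor pseudonatural transformations $\lambda\colon \mathcal{C} \boxtimes_\mathcal{C} \mathcal{M} \to \mathcal{M}$ and $\rho\colon \mathcal{M} \boxtimes_\mathcal{D} \mathcal{D} \to \mathcal{M}$ are supplied by Proposition~\ref{prop:unitality}, and their pseudonaturality is likewise forced.

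The hard part will be the remaining invertible modifications and their axioms: the pentagonator, witnessing the five-fold associativity coherence as an invertible 2-cell between two composites of $\alpha$'s applied to $\mathcal{L}_1\boxtimes\cdots\boxtimes \mathcal{L}_5$, and the three unit modifications linking $\alpha$ with $\lambda$ and $\rho$ in the $(2,1)$, $(1,2)$ and $(0,3)$ insertions. Each of these is produced by yet another application of the universal property to an appropriate multiply-balanced functor, so the data exist essentially uniquely; the real labor is checking Gurski's four coherence axioms (the non-abelian $K_5$ pentagon identity for the pentagonator plus the unit axioms), which reduce to equalities between two parallel 2-cells in a corepresentable situation, and hence may be verified after evaluating on the universal balanced functor. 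At that level the identities collapse to strict equalities of functors between Cartesian products of the underlying bimodule categories, because all structural equivalences in Propositions~\ref{prop:unitality} and \ref{prop:assoc-rel-deligne} are pinned down canonically by $B_{-,-}$. The whole verification is carried out in detail in \cite[Theorem 3.6.1]{schaumann}, to which we refer for the explicit formulas; consequences (i)--(iii) are then direct readouts of the tricategorical data.
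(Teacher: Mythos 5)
The paper does not actually prove this theorem; it is imported verbatim as a citation to Schaumann's thesis \cite[Theorem 3.6.1]{schaumann}, collected in the appendix ``for ease of reference and for the reader's convenience'' alongside the other foundational results on module categories. Your proposal gives a reasonable outline of the ingredients Schaumann must assemble — hom-2-categories, composition via the universal property of $\boxtimes_\mathcal{D}$, associator and unitors from Propositions~\ref{prop:assoc-rel-deligne} and~\ref{prop:unitality}, pentagonator and unit modifications, and the Gurski coherence axioms checked on universal balanced functors — and then defers to the source for the detailed verification, which is precisely what the paper does (minus the sketch). So there is no genuine mismatch: you have simply fleshed out the structure of a proof that the paper, by design, leaves to the reference. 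The only caveat is that your sketch is, necessarily, a sketch: the real content in Schaumann's thesis is the careful bookkeeping of the corepresentability arguments and the explicit coherence checks, and a referee would want those details rather than the assertion that they ``collapse to strict equalities.'' For the purposes of this paper, however, citation is the accepted approach and your outline matches it.
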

$\,$\bigskip

\subsubsection*{Acknowledgments}

The contents of this paper are part of the author's master's thesis ``Permutation actions on modular tensor categories'' completed in May 2018 at the Faculty of Mathematics, University of Vienna, under the supervision of Nils Carqueville and Gregor Schaumann. Numerous discussions with the supervisors and Gregor Schaumann's useful input for the main theorem are gratefully acknowledged. The author also wants to thank G\"{u}nther H\"{o}rmann for his encouragement and advice.


\newpage


\begin{thebibliography}{AAAAAA}

\bibitem[Art25]{artin-zopf}
\textsc{E.~Artin},
\textit{Theorie der Z\"{o}pfe}, 
Abh. Math. Sem. Univ. Hamburg \textbf{4} (1925), 47--72.

\bibitem[BK01]{bak-kir}
\textsc{B.~Bakalov} and \textsc{A.~Kirillov,~Jr.},  
\textit{Lectures on Tensor Categories and Modular Functors},
University Lecture Series \textbf{21}, American Mathematical Society, 2001. 

\bibitem[BBCW14]{BBCW}
\textsc{M.~Barkeshli}, \textsc{P.~Bonderson}, \textsc{M.~Cheng}, and \textsc{Z.~Wang},
\textit{Symmetry, Defects, and Gauging of Topological Phases} (2014), 
\href{http://arxiv.org/abs/1410.4540}{[arXiv:1410.4540]}.

\bibitem[BJQ13a]{BaJQ1}
\textsc{M.~Barkeshli}, \textsc{C.-M.~Jian}, and \textsc{X.-L.~Qi},
\textit{Twist defects and projective non-Abelian braiding statistics}, 
Phys. Rev. B \textbf{87} (2013), 045130, 
\href{http://arxiv.org/abs/1208.4834}{[arXiv:1208.4834]}.

\bibitem[BJQ13b]{BaJQ3}
\textsc{M.~Barkeshli}, \textsc{C.-M.~Jian}, and \textsc{X.-L.~Qi},
\textit{Theory of defects in Abelian topological states}, 
Phys. Rev. B \textbf{88} (2013), 235103, 
\href{http://arxiv.org/abs/1305.7203}{[arXiv:1305.7203]}.

\bibitem[BQ12]{BaQ}
\textsc{M.~Barkeshli} and \textsc{X.-L.~Qi},
\textit{Topological Nematic States and Non-Abelian Lattice Dislocations}, 
Phys. Rev. X \textbf{2} (2012), 031013, 
\href{http://arxiv.org/abs/1112.3311}{[arXiv:1112.3311]}.

\bibitem[BW11]{BaWen}
\textsc{M.~Barkeshli} and \textsc{X.-G.~Wen},
\textit{Bilayer quantum Hall phase transitions and the orbifold non-Abelian fractional quantum Hall states}, 
Phys. Rev. B \textbf{84} (2011), 115121, 
\href{http://arxiv.org/abs/1010.4270}{[arXiv:1010.4270]}.

\bibitem[BFRS10]{bfrs}
\textsc{T.~Barmeier}, \textsc{J.~Fuchs}, \textsc{I.~Runkel}, and \textsc{C.~Schweigert},
\textit{Module categories for permutation modular invariants}, 
Int. Math. Res. Not., no. 16 (2010), 3067--3100, 
\href{http://arxiv.org/abs/0812.0986v2}{[arXiv:0812.0986v2]}.

\bibitem[Bru00]{bruguieres}
\textsc{A.~Brugui\`{e}res},
\textit{Cat\'{e}gories pr\'{e}modulaires, modularisations et invariants des vari\'{e}t\'{e}s de dimension 3}, 
Math. Annal. \textbf{316} (2000), 215--236.

\bibitem[CGPW16]{CGPW}
\textsc{S.~X.~Cui}, \textsc{C.~Galindo}, \textsc{J.~Y.~Plavnik}, and \textsc{Z.~Wang},
\textit{On Gauging Symmetry of Modular Categories}, 
Commun. Math. Phys. \textbf{348}, no. 3 (2016), 1043--1064, 
\href{http://arxiv.org/abs/1510.03475}{[arXiv:1510.03475]}.

\bibitem[Del90]{deligne}
\textsc{P.~Deligne}, 
\textit{Cat\'{e}gories tannakiennes}, 
in \textit{The Grothendieck Festschrift} (P. Cartier, N. M. Katz, Y. I. Manin, L. Illusie, G. Laumon, K. A. Ribet, eds.), Birkh\"{a}user Boston, 1990, 111--195.

\bibitem[Dot84]{potts-model}
\textsc{V.~S.~Dotsenko},
\textit{Critical behaviour and associated conformal algebra of the {$Z_3$} Potts model}, 
Nucl. Phys. B \textbf{235} (1984), 54--74.

\bibitem[DSPS14]{dsps}
\textsc{C.~L.~Douglas}, \textsc{C.~Schommer-Pries}, and \textsc{N.~Snyder},
\textit{The balanced tensor product of module categories} (2014), 
\href{http://arxiv.org/abs/1406.4204v3}{[arXiv:1406.4204v3]}.

\bibitem[EMJP18]{ejp2018}
\textsc{C.~Edie-Michell}, \textsc{C.~Jones}, and \textsc{J.~Y.~Plavnik},
\textit{Fusion Rules for $\mathds{Z}/2\mathds{Z}$ Permutation Gauging} (2018), 
\href{http://arxiv.org/abs/1804.01657}{[arXiv:1804.01657]}.

\bibitem[EGNO15]{egno}
\textsc{P.~Etingof}, \textsc{S.~Gelaki}, \textsc{D.~Nikshych}, and \textsc{V.~Ostrik},  
\textit{Tensor Categories},
Mathematical Surveys and Monographs \textbf{205}, American Mathematical Society, 2015. 

\bibitem[ENO10]{eno09}
\textsc{P.~Etingof}, \textsc{D.~Nikshych}, and \textsc{V.~Ostrik},
\textit{Fusion categories and homotopy theory}, 
Quantum Topol. \textbf{1} (2010), 209--273, 
\href{http://arxiv.org/abs/0909.3140v2}{[arXiv:0909.3140v2]}.

\bibitem[FFRS09]{ffrs-defect-lines}
\textsc{J.~Fr\"{o}hlich}, \textsc{J.~Fuchs}, \textsc{I.~Runkel}, and \textsc{C.~Schweigert},  
\textit{Defect lines, dualities, and generalised orbifolds}, 
Proceedings, 16th International Congress on Mathematical Physics (ICMP09): Prague, Czech Republic, August 3-8, 2009,
\href{http://arxiv.org/abs/0909.5013}{[arXiv:0909.5013]}.

\bibitem[FRS02]{fuchs-runkel-schweigert-tft}
\textsc{J.~Fuchs}, \textsc{I.~Runkel}, and \textsc{C.~Schweigert},
\textit{TFT construction of RCFT correlators I: partition functions}, 
Nucl. Phys. B \textbf{646} (2002), 353--497,
\href{https://arxiv.org/abs/hep-th/0204148}{[arXiv:hep-th/0204148v2]}.

\bibitem[FRS06]{fuchs-runkel-schweigert}
\textsc{J.~Fuchs}, \textsc{I.~Runkel}, and \textsc{C.~Schweigert},
\textit{Categorification and correlation functions in conformal field theory}, 
in \textit{Proceedings of the ICM 2006} (M. Sanz-Sol\'{e}, J. Soria, J. L.
Varona, and J. Verdera, eds.), European Mathematical Society, 2006, 443--458, 
\href{https://arxiv.org/abs/math/0602079}{[arXiv:math/0602079]}.

\bibitem[FS14]{fuchs-schweigert}
\textsc{J.~Fuchs} and \textsc{C.~Schweigert},
\textit{A note on permutation twist defects in topological bilayer phases}, 
Lett. Math. Phys. \textbf{104}, no. 11 (2014), 1385--1405, 
\href{http://arxiv.org/abs/1310.1329}{[arXiv:1310.1329]}.

\bibitem[FS15]{fs-symmetries-defects}
\textsc{J.~Fuchs} and \textsc{C.~Schweigert},
\textit{Symmetries and defects in three-dimensional topological field theory}, 
Proc. Symp. Pure Math \textbf{93} (2015), 21, 
\href{http://arxiv.org/abs/1501.01885}{[arXiv:1501.01885]}.

\bibitem[FSV13]{fsv1}
\textsc{J.~Fuchs}, \textsc{C.~Schweigert}, and \textsc{A.~Valentino}, 
\textit{Bicategories for boundary conditions and for surface defects in 3-d TFT}, 
Commun. Math. Phys. \textbf{321} (2013), 543--575, 
\href{http://arxiv.org/abs/1203.4568}{[arXiv:1203.4568]}.

\bibitem[GJ18]{gj2018}
\textsc{T.~Gannon} and \textsc{C.~Jones},
\textit{Vanishing of categorical obstructions for permutation orbifolds} (2018), 
\href{http://arxiv.org/abs/1804.08343}{[arXiv:1804.08343]}.

\bibitem[Gre10a]{greenough-thesis}
\textsc{J.~Greenough},
\textit{Bimodule categories and monoidal 2-structure}, 
Ph.D. thesis, University of New Hampshire, 2010,\\
\href{https://scholars.unh.edu/dissertation/532}{https://scholars.unh.edu/dissertation/532}.

\bibitem[Gre10b]{greenough}
\textsc{J.~Greenough},
\textit{Monoidal 2-structure of bimodule categories}, 
J. Algebra \textbf{324(8)} (2010), 1818--1859, 
\href{http://arxiv.org/abs/0911.4979v4}{[arXiv:0911.4979v4]}.

\bibitem[Gur13]{gurski}
\textsc{N.~Gurski},
\textit{Coherence in Three-Dimensional Category Theory},
Cambridge Tracts in Mathematics \textbf{201}, Cambridge University Press, 2013.

\bibitem[JS91]{joyal-street}
\textsc{A.~Joyal} and \textsc{R.~Street},
\textit{The geometry of tensor calculus, I}, 
Adv. Math. \textbf{88} (1991), 55--112.

\bibitem[Kas95]{kassel}
\textsc{C.~Kassel},  
\textit{Quantum Groups},
Graduate Texts in Mathematics \textbf{155}, Springer-Verlag, New York, 1995. 

\bibitem[Kit03]{kitaev}
\textsc{A.~Yu.~Kitaev},
\textit{Fault-tolerant quantum computation by anyons}, 
Annals Phys. \textbf{303} (2003), 2--30, 
\href{https://arxiv.org/abs/quant-ph/9707021}{[arXiv:quant-ph/9707021]}.

\bibitem[KK12]{kitaev-kong}
\textsc{A.~Kitaev} and \textsc{L.~Kong},
\textit{Models for Gapped Boundaries and Domain Walls}, 
Commun. Math. Phys. \textbf{313} (2012), 351--373, 
\href{http://arxiv.org/abs/1104.5047}{[arXiv:1104.5047]}.

\bibitem[LM77]{leinaas-myrheim}
\textsc{J.~M.~Leinaas} and \textsc{J.~Myrheim},
\textit{On the theory of identical particles}, 
Nuovo Cim. B \textbf{37} (1977), 1--23.

\bibitem[LW05]{levin-wen}
\textsc{M.~A.~Levin} and \textsc{X.-G.~Wen},
\textit{String-net condensation: A physical mechanism for topological phases}, 
Phys. Rev. B \textbf{71} (2005), 045110, 
\href{https://arxiv.org/abs/cond-mat/0404617}{[arXiv:cond-mat/0404617]}.

\bibitem[Mac98]{maclane}
\textsc{S.~{Mac Lane}},  
\textit{Categories for the Working Mathematician},
2nd ed., Graduate Texts in Mathematics \textbf{5}, Springer-Verlag, New York, 1998. 

\bibitem[Mue13]{mueger-mtc}
\textsc{M.~Mueger}, 
\textit{Modular Categories}, 
in \textit{Quantum Physics and Linguistics: A Compositional, Diagrammatic Discourse} (C. Heunen, M. Sadrzadeh, and E. Grefenstette, eds.), Oxford University Press, 2013, 146--183.

\bibitem[NR11]{naidu-rowell}
\textsc{D.~Naidu} and \textsc{E.~C.~Rowell},
\textit{A finiteness property for braided fusion categories}, 
Algebr. and Represent. Theor. \textbf{14}, no. 5 (2011), 837--855, 
\href{http://arxiv.org/abs/0903.4157}{[arXiv:0903.4157]}.

\bibitem[NSS$^{+}$08]{anyon-tqc}
\textsc{C.~Nayak}, \textsc{S.~H.~Simon}, \textsc{A.~Stern}, \textsc{M.~Freedman}, and \textsc{S.~{Das Sarma}},
\textit{Non-Abelian Anyons and Topological Quantum Computation}, 
Rev. Mod. Phys. \textbf{80} (2008), 1083--1159, 
\href{http://arxiv.org/abs/0707.1889}{[arXiv:0707.1889]}.

\bibitem[Pac12]{pachos}
\textsc{J.~K.~Pachos},  
\textit{Introduction to Topological Quantum Computation},
Cambridge University Press, 2012. 

\bibitem[PP11]{categorical-anyons}
\textsc{P.~Panangaden} and \textsc{\'{E}.~O.~Paquette}, 
\textit{A categorical presentation of quantum computation with anyons}, 
New Structures for Physics, Lecture Notes in Physics \textbf{813} (2011), Springer-Verlag, 983--1025.

\bibitem[RT91]{rt-original}
\textsc{N.~Reshetikhin} and \textsc{V.~G.~Turaev},
\textit{Invariants of 3-manifolds via link polynomials and quantum groups}, 
Invent. Math. \textbf{103}, no. 3 (1991), 547--597.

\bibitem[Row16]{rowell16}
\textsc{E.~C.~Rowell}, 
\textit{An Invitation to the Mathematics of Topological Quantum Computation}, 
J. Phys. Conf. Ser. \textbf{698}, no. 1 (2016), 012012, 
\href{http://arxiv.org/abs/1601.05288}{[arXiv:1601.05288]}.

\bibitem[Sch13]{schaumann}
\textsc{G.~Schaumann},
\textit{Duals in tricategories and in the tricategory of bimodule categories}, 
Ph.D. thesis, University of Erlangen-N\"{u}rnberg, 2013,\\ 
\href{https://opus4.kobv.de/opus4-fau/frontdoor/index/index/docId/3732}{https://opus4.kobv.de/opus4-fau/frontdoor/index/index/docId/3732}.

\bibitem[Teo16]{teo}
\textsc{J.~C.~Y.~Teo}, 
\textit{Globally symmetric topological phase: from anyonic symmetry to twist defect}, 
J. Phys.: Condens. Matter \textbf{28} (2016), 143001, 
\href{http://arxiv.org/abs/1511.00912}{[arXiv:1511.00912]}.

\bibitem[Tur94]{turaev-quantum-invariants}
\textsc{V.~G.~Turaev},  
\textit{Quantum invariants of knots and 3-manifolds},
de Gruyter Studies in Mathematics \textbf{18}, de Gruyter, 1994.

\bibitem[Tur10]{turaev-hqft}
\textsc{V.~Turaev},  
\textit{Homotopy quantum field theory},
EMS Tracts in Mathematics \textbf{10}, European Math. Soc. Publ. House, 2010.

\bibitem[TV17]{turaev-virelizier}
\textsc{V.~Turaev} and \textsc{A.~Virelizier},
\textit{Monoidal Categories and Topological Field Theory},
Progress in Mathematics \textbf{322}, Birkh\"{a}user Basel, 2017.

\bibitem[Wil82]{wilczek}
\textsc{F.~Wilczek},
\textit{Quantum Mechanics of Fractional-Spin Particles}, 
Phys. Rev. Lett. \textbf{49} (1982), 957.

\end{thebibliography}
\end{document}